\newtheorem{theorem}{Theorem}[section]
\newtheorem{proposition}[theorem]{Proposition}
\newtheorem{corollary}[theorem]{Corollary}
\newtheorem{lemma}[theorem]{Lemma}
\newtheorem{definition}[theorem]{Definition}
\newcommand{\rd}{{\rm d}}
\newcommand{\be}{\begin{equation}}
\newcommand{\ee}{\end{equation}}
\newcommand{\bey}{\begin{eqnarray}}
\newcommand{\eey}{\end{eqnarray}}
\newcommand{\E}{{\mathbb E }}
\renewcommand{\P}{{\mathbb P}}
\newcommand{\eps}{\varepsilon}
\newcommand{\bu}{{\bf u}}
\newcommand{\bba}{{\bf a}}
\newcommand{\bb}{{\bf b}}
\newcommand{\bz}{{\bf z}}
\newcommand{\e}{\varepsilon}
\newcommand{\bR}{{\mathbb R}}
\newcommand{\bC}{{\mathbb C}}
\newcommand{\bN}{{\mathbb N}}
\newcommand{\tr}{\mbox{Tr}}
\newcommand{\wt}{\widetilde}
\newcommand{\const}{\mathrm{const}}
\newcommand{\cN}{{\cal N}}
\newcommand{\donothing}[1]{}
\begin{document}

\title{Average Density of States for Hermitian Wigner Matrices}

\author{Anna Maltsev and Benjamin Schlein\thanks{Partially supported
by an ERC Starting Grant} \\
\\
Institute of Applied Mathematics, University of Bonn, \\
Endenicher Allee 60, 53115 Bonn}

\maketitle

\begin{abstract}
We consider ensembles of $N \times N$ Hermitian Wigner matrices, whose entries are (up to the symmetry constraints) independent and identically distributed random variables. Assuming sufficient regularity for the probability density function of the entries, we show that the expectation of the density of states on {\it arbitrarily} small intervals converges to the semicircle law, as $N$ tends to infinity.
\end{abstract}

\section{Introduction}
\setcounter{equation}{0}
\label{sec:intro}

Wigner matrices are matrices whose entries are independent and identically distributed random variables, up to symmetry constraints (one distinguishes between ensemble of real symmetric, Hermitian, and quaternion Hermitian Wigner matrices). They were first introduced by Wigner to describe the excitation spectra of heavy nuclei. Wigner's basic idea was as follows; the entries of the Hamilton operator of a complex system (such as a heavy nucleus) depend on too many degrees of freedom to be written down precisely. Hence, it makes sense to assume the entries of the Hamilton operator to be random variables, and to look for results which hold for most realizations of the randomness.

\medskip

Wigner's idea was very successful and, to this day, it is one of the most useful tools in nuclear physics. Since then, Wigner matrices have been linked to several different branches of mathematics and physics. Eigenvalues of random Schr\"odinger operators in the metallic phase are expected to share many similarities with eigenvalues of Hermitian Wigner matrices. Eigenvalues of the Laplace operators over a domain $\Omega \subset \bR^n$ with chaotic classical trajectories are expected to exhibit the same correlations as the eigenvalues of real symmetric Wigner matrices.
The zeros of Riemann's zeta function on the line ${\text Re } \, z = 1/2$ should be distributed, after appropriate rescaling, like eigenvalues of Hermitian Wigner matrices. And more examples are available.

\medskip

The success of Wigner's idea, and the variety of links of Wigner matrices to what appear to be completely unrelated branches of mathematics and physics  is a consequence of the phenomenon of {\it universality}; in vague terms, universality states that the statistical properties of the spectrum of matrices (or operators) with disorder (randomness) depend on the symmetries of the model under consideration, but otherwise they are largely independent of the details of the disorder.

\medskip

Within the realm of Wigner matrices, universality has a much more precise meaning. It refers to the fact that the local eigenvalue statistics (the local correlation functions) depend on the symmetry of the ensemble (real symmetric matrices, Hermitian matrices, and quaternion Hermitian matrices lead to different statistics), but they are otherwise independent of the particular choice of the probability law of the entries of the matrix. While universality at the edges of the spectrum (universality of the distribution of the largest, or the smallest, few eigenvalues) has been known since \cite{S}, universality in the bulk of the spectrum has been understood only recently; see \cite{EPRSY,TV,ERSTVY,ESY4,ESYY}.

\medskip

Let us now define the ensembles that we are going to consider more precisely. We focus here on ensembles of Hermitian Wigner matrices.
\begin{definition}\label{def} An ensemble of {\it Hermitian Wigner matrices} consists of $N \times N$ matrices $H = (h_{jk})_{1\leq j,k \leq N}$, with \[ \begin{split}  h_{jk} &= \frac{1}{\sqrt{N}} (x_{jk} + i y_{jk})  \qquad \text{for } 1 \leq j <k \leq N \\
h_{jk} & = \overline{h}_{kj} \qquad \text{for } 1\leq k < j \leq N \\
h_{jj} & = \frac{1}{\sqrt{N}} x_{jj} \qquad \text{for } 1 \leq j \leq N \end{split} \]
where $\{ x_{jk}, y_{jk}, x_{jj} \}_{1 \leq j < k \leq N}$ is a collection of $N^2$
independent real random variables. The (real and imaginary parts of the) off-diagonal entries $\{ x_{jk} , y_{jk} \}_{1\leq j < k \leq N}$ have a common distribution with \[ \E \, x_{jk} = 0 \qquad \text{and } \E \, x_{jk}^2 = \frac{1}{2} \, .  \]
Also the diagonal entries $\{ x_{jj} \}_{j=1}^N$ have a common distribution with
\[ \E \, x_{jj} = 0  \qquad \text{and } \E \, x_{jj}^2 = 1 \, . \]
\end{definition}
The scaling of the entries with the dimension $N$ ($h_{jk}$ is of the order $N^{-1/2}$) guarantees that, in the limit $N \to \infty$, all eigenvalues of $H$ remain of order one. In fact, it turns out that, as $N \to \infty$, all eigenvalues of $H$ are contained in the interval $[-2,2]$. In \cite{W}, Wigner showed the convergence of the density of states (density of eigenvalues) for ensembles of Wigner matrices to the famous semicircle law $\rho_{sc}$.
For arbitrary fixed $a \leq b$ and $\delta >0$, Wigner proved that
\begin{equation}\label{eq:Wsc} \lim_{N \to \infty} \P \left( \left| \frac{\cN [a;b]}{N(b-a)} - \frac{1}{(b-a)} \int_a^b \rd s \rho_{sc} (s) \right| \geq \delta \right) = 0 \end{equation}
where $\cN [a;b]$ denotes (here and henceforth) the number of eigenvalues in the interval $[a;b]$, and \begin{equation}\label{eq:sc} \rho_{sc} (E) = \left\{ \begin{array}{ll} \frac{1}{2\pi} \sqrt{1- \frac{E^2}{4}}, \quad &\text{if } |E| \leq 2 \\ 0 \quad &\text{if } |E| > 2 \end{array} \right. \, . \end{equation}
Note that the semicircle law is independent of the choice of the probability law for the entries of the matrices.

\medskip

An important special example of an ensemble of Hermitian Wigner matrices is the Gaussian Unitary Ensemble (GUE). It is characterized by the fact that (the real and imaginary parts of) all entries are Gaussian random variables, and it is the only ensemble of Hermitian Wigner matrices which is invariant with respect to unitary conjugation. If $H$ is a GUE matrix, and $U$ is an arbitrary fixed unitary matrix, then $UHU^*$ is again a GUE matrix (whose entries have exactly the same distribution as the entries of $H$). Because of the unitary invariance, for GUE it is possible to compute explicitly the joint probability density function for the $N$ eigenvalues. It is given by
\begin{equation} \label{eq:pGUE} p_{\text{GUE}} (\mu_1, \dots , \mu_N ) = \const \cdot \prod_{i<j}^N (\mu_i - \mu_j)^2 \, e^{-\frac{N}{2} \sum_{j=1}^N \mu_j^2}\,. \end{equation}
Here we think of $p_{\text{GUE}}$ as a probability density on $\bR^N$; there is no ordering among the variables $(\mu_1, \dots , \mu_N)$.  Starting from $p_{\text{GUE}}$ we define, for arbitrary $k=1, \dots , N$, the $k$-point correlation function
\[ p^{(k)}_{\text{GUE}} (\mu_1, \dots , \mu_k) = \int d\mu_{k+1} \dots d\mu_N \, p_{\text{GUE}} (\mu_1, \dots , \mu_N) \,.\]
Using the explicit expression (\ref{eq:pGUE}), Dyson was able to compute the local correlation functions of GUE in the limit $N \to \infty$. In \cite{D}, he proved that, for every $k \geq 1$,
\begin{equation}\label{eq:WD} \frac{1}{\rho^k_{sc} (E)} p^{(k)}_{\text{GUE}} \left(E + \frac{x_1}{\rho_{sc} (E) N} , \dots , E + \frac{x_k}{\rho_{sc} (E) N} \right)  \to \det \, \left( \frac{\sin (\pi (x_j - x_\ell))}{(\pi (x_j - x_\ell))} \right)_{1 \leq j ,\ell \leq k} \end{equation} as $N \to \infty$. The r.h.s. of (\ref{eq:WD}) is known as the Wigner-Dyson (or sine-kernel) distribution. Observe that the arguments of $p^{(k)}_{\text{GUE}}$ in (\ref{eq:WD}) vary within an interval of size of the order $1/N$ (hence the name of {\it local} correlations). Since the typical distance between eigenvalues is of the order $1/N$, it is not surprising that non-trivial correlations are observed on this scale.

\medskip

Dyson's proof of (\ref{eq:WD}) was based on the explicit expression (\ref{eq:pGUE}) for the joint probability density function of the eigenvalues of GUE matrices. GUE is the only ensemble of Hermitian Wigner matrices which enjoys unitary invariance; for this reason, it is the only ensemble of Hermitian Wigner matrices for which an explicit expression for the joint probability density function of the eigenvalues exists. Nevertheless, it turns out that universality holds; the local eigenvalue correlations of (at least) a large class of ensembles of Hermitian Wigner matrices converges to the same Wigner-Dyson distribution (\ref{eq:WD}). For an arbitrary ensembles $H$ of Hermitian Wigner matrices (as in Definition \ref{def}) whose entries decay sufficiently fast at infinity, in the sense that $\E \, |x_{jk}|^{K} , \E \, |x_{jj}|^{K} < \infty$, for a sufficiently large $K >0$, it was recently  proved in \cite{TV3} that \begin{equation}\label{eq:univ}
\frac{1}{\rho^k_{sc} (E)} p_H^{(k)} \left( E + \frac{x_1}{\rho_{sc} (E) N} , \dots , E + \frac{x_k}{\rho_{sc} (E) N} \right)  \to \det \, \left( \frac{\sin (\pi (x_j - x_\ell))}{(\pi (x_j - x_\ell))} \right)_{1 \leq j ,\ell \leq k} \end{equation} for any fixed $k \in \bN$,
as $N \to \infty$. Convergence here holds pointwise in $|E| < 2$ (actually, uniformly in $E\in [-2+\kappa ; 2 -\kappa]$, for any fixed $\kappa >0$), after integrating against a continuous and compactly supported observable $O (x_1, \dots , x_k)$. This result was obtained by extending the methods of \cite{ERSTVY}, where (\ref{eq:univ}) was already shown under the additional assumptions that $\E \, e^{|x_{ij}|^\alpha} < \infty$, $\E \, e^{|x_{jj}|^{\alpha}} < \infty$ and $\E \, x_{ij}^3 = 0$ (without this last assumption, (\ref{eq:univ}) was proven in \cite{ERSTVY} after averaging $E$ over an arbitrarily small, but fixed, interval). The correlation function $p_H^{(k)}$ is defined (similarly to $p_{\text{GUE}}$) by
\[ p_H^{(k)} (\mu_1, \dots , \mu_k) = \int d\mu_{k+1} \dots d\mu_N \, p_H (\mu_1, \dots , \mu_N) \] where $p_H$ is the joint probability density function of the eigenvalues of $H$. Note that the techniques of \cite{TV3,ERSTVY} (which are based on the methods developed in \cite{EPRSY,TV}; see next paragraph) cannot be easily extended to ensembles of Wigner matrices with different, non-Hermitian, symmetries. Universality (after integration of $E$ over an arbitrarily small, fixed, interval) for ensembles of real symmetric and quaternion Hermitian Wigner matrices was established in \cite{ESY4} using a different approach (in this paper we will need the result of universality pointwise in $E$; this is why we focus our attention on Hermitian matrices). Finally we observe that universality (after integration of $E$ over a small interval) was recently extended to ensembles of generalized Wigner matrices; see \cite{EYY1,EYY2}.

\medskip

The results of \cite{ERSTVY} were obtained by combining the methods proposed first in \cite{EPRSY} and then in \cite{TV}. In \cite{EPRSY}, universality was proven for Wigner matrices whose entries have a sufficiently regular law (and decay sufficiently fast at infinity). The first ingredient in \cite{EPRSY} was a proof of universality for matrices of the form $H = H_0 + s(N) V$, where $H_0$ is an arbitrary Hermitian Wigner matrix, $V$ is a GUE matrix, independent of $H_0$, and $s(N) \simeq N^{-1/2+\eps}$ measures the size of the Gaussian perturbation. Note that universality for matrices of the form $H=H_0 + s V$ was already proven in \cite{J} (whose result was then further improved in \cite{BP}), but only for fixed, $N$ independent, $s >0$. The second ingredient in \cite{EPRSY} was a time-reversal argument to compare the local correlations of the given Wigner matrix with those of a perturbed matrix of the form $H_0 + s(N) V$. In \cite{TV}, on the other hand, universality was proven for Hermitian Wigner matrices $H$, whose entries decay subexponentially fast at infinity, are supported on at least three points, and are such that $\E x_{jk}^3 = 0$. The main tool developed in \cite{TV} to show universality was a four-moment theorem comparing the local correlations of two ensembles whose entries have four matching moments.

\medskip

Both proofs, the one of \cite{EPRSY} and the one of \cite{TV}, relied on the convergence  to the semicircle law for the density of states on microscopic intervals. Eq. (\ref{eq:Wsc}) establishes the convergence of the density of states to the semicircle law on intervals whose size is independent of $N$, hence on intervals containing typically order $N$ eigenvalues ({\it macroscopic} intervals). What happens then on smaller intervals, whose size converge to zero as $N\to \infty$? This question is addressed in \cite{ESY1,ESY2,ESY3}. It follows from these works that the density of states converges to the semicircle law on arbitrary intervals, containing typically a large number of eigenvalues (the higher the number of eigenvalues, the smaller the fluctuations around the semicircle law). This conclusion is reached by comparing the Stieltjes transform of a Wigner matrix $H$ with the Stieltjes transform of the semicircle law.
The Stieltjes transform of the $N \times N$ Hermitian matrix $H$ is defined as the function (of $z\in \bC \backslash \bR$)
\[ m_N (z) = \frac{1}{N} \tr \, \frac{1}{H-z} =  \frac{1}{N} \sum_{\alpha=1}^N \frac{1}{\mu_\alpha - z} \, , \]
where $(\mu_1, \dots ,\mu_N)$ are the eigenvalues of $H$. The Stieltjes transform of the semicircle law, on the other hand, is defined by
\[ m_{sc} (z) = \int ds \frac{\rho_{sc} (s)}{s-z} = -\frac{z}{2} + \sqrt{\frac{z^2}{4}-1}\,. \]
{F}rom the convergence of $\text{Im } m_N (z)$ to $\text{Im } m_{sc} (z)$, one can deduce convergence of the density of states on intervals of size comparable with ${\text Im } \, z$. The advantage of working with the Stieltjes transform, instead of directly with the density of states, is the fact that $m_{sc} (z)$ satisfies a fixed point equation which is stable when ${\text Re } z$ is away from the spectral edges $\pm 2$. Using this fixed point equation (and an upper bound for the density of states on microscopic intervals), it is shown in \cite{ESY3} that,
if the entries of the matrix $H$ have sub-Gaussian tails (in the sense that $\E \, e^{\alpha x_{jk}^2} < \infty$, for some $\alpha >0$), and if $|E| < 2$, there exist constants $C,c >0$ such that
\begin{equation}\label{eq:mic-sc-ST}
\P \left( \left| m_N \left(E+ i \frac{K}{N} \right)- m_{sc} \left(E+i\frac{K}{N}\right) \right| \geq \delta \right) \leq C e^{-c \delta \sqrt{K}}
\end{equation}
for every $\delta >0$ small enough, and every $N$ sufficiently large. Note that this result was recently improved in \cite{EYY3}, where the convergence of the Stieltjes transform $m_N (E+i\eta)$ is shown to hold uniformly in $E$, with an error of size $(N\eta)^{-1}$. {F}rom (\ref{eq:mic-sc-ST}), it follows that, for every $\delta >0$, and $|E| <2$,
\begin{equation}\label{eq:mic-sc} \lim_{K \to \infty} \lim_{N \to \infty} \P \left( \left| \frac{\cN \left[E - \frac{K}{2N}, E + \frac{K}{2N} \right]}{K} - \rho_{sc} (E) \right| \geq \delta \right) = 0\,. \end{equation}
This result, shown in \cite{ESY3}, establishes the convergence to the semicircle law on intervals typically containing a number of eigenvalues of order one, independent of $N$ ({\it microscopic} intervals). {F}rom (\ref{eq:mic-sc}), it is also possible to obtain convergence to the semicircle law on intermediate scales. If $\eta (N) >0$ is such that $\eta (N) \to 0$ and $N\eta (N) \to \infty$ as $N \to \infty$, then
\begin{equation}\label{eq:etaN} \lim_{N \to \infty} \P \left( \left| \frac{\cN \left[E- \frac{\eta (N)}{2}; E + \frac{\eta (N)}{2} \right]}{ N \eta (N)} - \rho_{sc} (E) \right| \geq \delta \right) = 0 \, . \end{equation}
If $\eta (N) \lesssim 1/N$, so that $N\eta (N)$ does not tend to infinity as $N \to \infty$, then the typical number of eigenvalues in the interval of size $\eta (N)$ around $E$ remains bounded (it converges to zero, if $\eta (N) \ll 1/N$), and therefore the fluctuations of the density of states are certainly important. In this sense, the result (\ref{eq:mic-sc}) is on the optimal scale, and we cannot expect it to hold for smaller intervals.

\medskip

Consider now the average density of states on an interval of size $\eta (N) >0$ around $E$, defined as
\[ \E \, \frac{\cN \left[ E-\frac{\eta (N)}{2} ; E + \frac{\eta (N)}{2} \right]}{N\eta (N)}\,. \]
If $\eta (N)$ is such that $N \eta (N) \to \infty$ as $N \to \infty$, it follows from (a quantitative version of) Eq. (\ref{eq:etaN}) that we also have convergence of the average density of states to the semicircle law:
\begin{equation}\label{eq:ados0} \lim_{N \to \infty}  \E \, \frac{\cN \left[ E-\frac{\eta (N)}{2} ; E + \frac{\eta (N)}{2} \right]}{N\eta (N)} = \rho_{sc} (E) \, . \end{equation}
If $\eta (N) \lesssim (1/N)$, we do not have convergence to the semicircle law in probability ((\ref{eq:etaN}) is not true, in this case), but we may still ask whether the average density of states converges.

\medskip

A first important observation to answer this question is the fact that, if the probability law of the entries is sufficiently regular, the average density of states remains bounded on arbitrarily small scales. More precisely, under the assumption
\begin{equation}\label{eq:ass-up} \int \left| \frac{h' (s)}{h(s)} \right|^4 \, h(s) ds < \infty \end{equation}
it is proven in \cite{MS} (extending a previous result from \cite{ESY3}) that, for every $\kappa >0$, there exists a constant $C = C(\kappa) >0$ with
\begin{equation}\label{eq:up}
\E \, \cN \left[E-\frac{\eta}{2}; E+ \frac{\eta}{2} \right] \leq \E \, \cN^2 \left[E-\frac{\eta}{2}; E+ \frac{\eta}{2} \right] \leq  C N \eta \end{equation}
for all $\eta >0$, all $N \geq 10$, and all $E \in [-2+\kappa, 2-\kappa]$. Note that the upper bound on the expectation of the density of states also implies an upper bound on the expectation of the imaginary part of the Stieltjes transform. In fact, using (\ref{eq:up}) and a dyadic decomposition, we obtain
\begin{equation} \label{eq:up-m}\begin{split}
\E \, \text{Im } m_N (E+i\eps) \leq \; & \frac{1}{N} \E \sum_{\alpha} \frac{\eps}{(\mu_\alpha -E)^2 + \eps^2} \\  \leq \; & \E \, \frac{\cN \left[E- \eps ; E+\eps \right]}{N\eps} + \frac{\eps}{N} \sum_{\ell \geq 0} \E \, \frac{\cN \left[ E-2^{\ell+1} \eps ; E-2^{\ell} \eps \right] \cup \left[E+2^{\ell} \eps ; E+2^{\ell+1} \eps \right]}{2^{2\ell} \eps^2} \\ \lesssim \; &  1 + \sum_{\ell \geq 0} \frac{1}{2^{\ell}} \lesssim 1 \, .
\end{split} \end{equation}

\medskip

Another important remark concerning the average density of states on small intervals follows from universality. Consider an ensemble of hermitian Wigner matrices such that (\ref{eq:up}) holds true and a family of intervals of size $\eta (N) = \eps /N$, for a fixed, $N$ independent, $\eps >0$. Then we have, from (\ref{eq:univ}) with $k=1$,
\begin{equation} \label{eq:uni-0}
\E \frac{\cN \left[ E-\frac{\eps}{2N} ; E + \frac{\eps}{2N} \right]}{\eps} = \int dx \, \frac{{\bf 1} (|x| \leq \eps/2)}{\eps} \, p^{(1)}_H (E+ \frac{x}{N}) \to \int dx \frac{{\bf 1} (|x| \leq \eps)}{\eps} \, \rho_{sc} (E) = \rho_{sc} (E) \, , \end{equation} as $N \to \infty$. Note that ${\bf 1} (|x| \leq \eps/2)$ is not continuous and therefore (\ref{eq:univ}) cannot be applied directly. However, using the upper bound (\ref{eq:up}), it is simple to approximate ${\bf 1} (|x| \leq \eps/2)$ by continuous functions and conclude (\ref{eq:uni-0}). For future reference, we observe that the convergence in (\ref{eq:uni-0}) holds uniformly in $E$ away from the spectral edges. More precisely, for every fixed $\kappa,\eps >0$, we have
\begin{equation}\label{eq:uni}
\lim_{N\to \infty} \sup_{|E| \leq 2-\kappa} \left| \E \frac{\cN \left[ E-\frac{\eps}{2N} ; E + \frac{\eps}{2N} \right]}{\eps} - \rho_{sc} (E) \right| = 0\,.
\end{equation}
This follows because the arguments of \cite{EPRSY,TV,ERSTVY} are clearly uniform in $E$, as long as $E$ stays away from the edges (in Proposition 3.3 of \cite{EPRSY} this uniformity is explicitly stated).

\medskip

Hence, universality implies that the average density of states on intervals of size $\eps /N$ still converges to the semicircle law, for any fixed, $N$ independent, $\eps >0$. What happens now on even smaller scales $\eta (N) \ll 1/N$? The main result of the present paper is a proof of the convergence of the average density of states to the semicircle law on arbitrarily small scales, under some regularity assumption on the law of the entries of $H$. First, in the next theorem we establish convergence of the expectation of the imaginary part of the Stieltjes transform $m_N (E+i\eta)$ to the imaginary part of $m_{sc} (E+i\eta)$ uniformly in $\eta>0$, as $N \to \infty$.
\begin{theorem}\label{thm:main}
Let $H$ be an ensemble of Hermitian Wigner matrices as in Definition \ref{def}, so that $\E \, e^{\nu x_{ij}^2} < \infty$ for some $\nu >0$. Suppose that the real and imaginary part of the off-diagonal entries have a common probability density function $h$ such that
\begin{equation}\label{eq:ass}
\int \, \left| \frac{h' (s)}{h(s)} \right|^6 \, h(s) ds < \infty, \qquad \text{and } \quad \int \left| \frac{h'' (s)}{h(s)} \right|^2 \, h(s) ds < \infty \, . \end{equation}
Then we have, for every $|E| <2$,
\[  \lim_{N \to \infty} \liminf_{\eta \to 0} \, \E \, \text{Im } m_N (E+i\eta) = \lim_{N\to \infty} \limsup_{\eta \to 0} \E \,  \text{Im } m_N (E+i\eta) =  \text{Im } m_{sc} (E) = \pi \rho_{sc} (E) \, . \]
The convergence is uniform in $E$, away from the spectral edges; for any $\kappa >0$,
\begin{equation}\label{eq:uniclaim}
\begin{split}
\lim_{N\to \infty} \sup_{|E| \leq 2-\kappa} \left| \liminf_{\eta \to 0} \E \, \text{Im } m_N (E+i\eta) - \rho_{sc} (E) \right| & = 0\, , \\
\lim_{N\to \infty} \sup_{|E| \leq 2-\kappa} \left| \limsup_{\eta \to 0} \E \, \text{Im } m_N (E+i\eta) - \rho_{sc} (E) \right| & = 0 \, .
\end{split}
\end{equation}
\end{theorem}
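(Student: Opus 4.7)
The plan is to write
\[
\E\,\text{Im } m_N(E+i\eta) \;=\; \int \frac{\eta}{(x-E)^2+\eta^2}\, p_H^{(1)}(x)\, dx,
\]
compare this to $\pi\rho_{sc}(E)=\lim_{\eta\to 0}\int P_\eta(y)\rho_{sc}(E+y)\,dy$ (with $P_\eta(y):=\eta/(y^2+\eta^2)$), and exploit three ingredients: pointwise universality (\ref{eq:uni}) at scale $1/N$, the averaged upper bound (\ref{eq:up}) valid at arbitrarily small scales, and the stronger regularity hypothesis (\ref{eq:ass}). Observe that the stated claim would follow from showing that $\E\,\text{Im } m_N(E+i\eta) \to \pi \rho_{sc}(E)$ with an error uniform in $\eta>0$ and tending to $0$ as $N\to\infty$, since this would bound the $\liminf$ and $\limsup$ in $\eta$ from both sides.

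Converting the Poisson kernel into a superposition of indicators via the identity $P_\eta(y)=\int_0^\infty\chi_{|y|\le t}\,\frac{2\eta t}{(t^2+\eta^2)^2}\,dt$ and using Fubini yields
\[
\E\,\text{Im } m_N(E+i\eta) \;=\; \int_0^\infty n_N(t,E)\,\frac{2\eta t}{(t^2+\eta^2)^2}\, dt,
\qquad n_N(t,E):=\tfrac{1}{N}\E\,\cN[E-t,E+t],
\]
and analogously $\int P_\eta(y)\rho_{sc}(E+y)\,dy = \int_0^\infty n_{sc}(t,E)\,\frac{2\eta t}{(t^2+\eta^2)^2}\,dt$ with $n_{sc}(t,E):=\int_{-t}^{t}\rho_{sc}(E+s)\,ds$. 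The task reduces to showing that the integral of $n_N - n_{sc}$ against this positive kernel tends to zero in the iterated limit. I would split the integral at the microscopic scale $t=\epsilon/N$ for a small fixed $\epsilon>0$. On the range $t\ge\epsilon/N$ the difference $n_N(t,E)-n_{sc}(t,E)$ tends to zero as $N\to\infty$ uniformly in $t$: for $t$ of order one via the Wigner law (\ref{eq:Wsc}), for intermediate $t$ via the Stieltjes transform estimate (\ref{eq:mic-sc-ST}), and for $t$ of order $1/N$ via pointwise universality (\ref{eq:uni}); combined with the uniform bound (\ref{eq:up}), this shows that the $t\ge\epsilon/N$ contribution is $o_N(1)$.

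The main obstacle is the sub-microscopic range $t<\epsilon/N$: this is precisely the region from which the full mass $\pi\rho_{sc}(E)$ is collected in the limit $\eta\to 0$, yet universality provides no information on $n_N(t,E)$ at these scales, and the crude bound $n_N(t,E)\le Ct$ from (\ref{eq:up}) yields a contribution of order one that does not decay as $\eta\to 0$. Here I would invoke the full strength of (\ref{eq:ass}): the condition $\int|h'/h|^6\, h <\infty$ permits a first-order Gaussian-type integration by parts $\E\,\partial_{h_{jk}}F = -\,\E\,[(h'/h)(h_{jk})\, F]$ applied to resolvent matrix elements of $G(z)=(H-z)^{-1}$, while $\int|h''/h|^2\, h <\infty$ controls the second-order remainder. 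Coupled with the Schur complement representation of $G_{kk}$, these identities should yield an improved self-consistent equation for $\E\,m_N(E+i\eta)$ that remains stable all the way down to $\eta\to 0$. The universality input at the single scale $t=\epsilon/(2N)$ then serves as a boundary condition that propagates through this stable equation to all smaller scales, forcing $n_N(t,E)$ to match $n_{sc}(t,E)$ on $t<\epsilon/N$ up to $o_N(1)$, and hence the sub-microscopic contribution to vanish.

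Finally, the uniformity in $E\in[-2+\kappa,2-\kappa]$ claimed in (\ref{eq:uniclaim}) is inherited from the uniformity in $E$ of the inputs (\ref{eq:up}), (\ref{eq:uni}), and of the quantitative stability of the semicircle fixed-point equation away from the spectral edges $\pm 2$. Once the above uniform (in $\eta$) convergence is established, passing to $\liminf$ and $\limsup$ in $\eta$ and then sending $N\to\infty$ produces both of the claimed equalities.
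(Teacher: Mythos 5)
Your decomposition of $\E\,\text{Im }m_N(E+i\eta)$ as an integral of $n_N(t,E)-n_{sc}(t,E)$ against the (derivative of the) Poisson kernel is a reasonable setup, and your discussion of the range $t\ge\epsilon/N$ is plausible in outline. But the entire difficulty of the theorem is concentrated in the sub-microscopic range $t<\epsilon/N$, and there your argument stops at the level of intention. You correctly observe that (\ref{eq:up}) gives a non-decaying contribution of order one from this range, and that as $\eta\to0$ this is exactly where the full mass $\pi\rho_{sc}(E)$ sits. But the proposed remedy --- an ``improved self-consistent equation for $\E\,m_N(E+i\eta)$ that remains stable all the way down to $\eta\to0$,'' with universality at scale $\epsilon/(2N)$ ``propagated as a boundary condition'' --- is not a proof, and no mechanism is offered by which stability could survive the limit $\eta\to0$. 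The standard self-consistent equation for the Stieltjes transform degenerates precisely in this limit: the known arguments (e.g., those behind (\ref{eq:mic-sc-ST})) require $\text{Im } z\gtrsim 1/N$, and nothing in your sketch explains how the regularity hypotheses (\ref{eq:ass}) would remove that constraint.

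The paper attacks the sub-microscopic problem in a completely different way, and that idea is absent from your proposal. The crucial technical input is Proposition \ref{prop}: for $E\in(-2+\kappa,2-\kappa)$ and $0<\eta\le 1/N$,
\[
\left| \frac{\rd}{\rd E}\,\E\,\text{Im }m_N(E+i\eta)\right|\le C N .
\]
This is proved using exactly the tools you name (Schur complement for $G_{11}$, integration by parts in the row variables, and the moment conditions on $h'/h$ and $h''/h$), but aimed at the $E$-derivative rather than at a self-consistent equation. Given this bound, one averages $\E\,\text{Im }m_N(\wt{E}+i\eta)$ over $\wt E$ in a window of width $\delta/N$ around $E$; the derivative bound shows this averaging changes the value by at most $C\delta$, \emph{uniformly in $\eta\le 1/N$}. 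The averaged quantity, after letting $\eta\to0$, is $\pi\,\E\,\cN[E-\tfrac{\delta}{2N},E+\tfrac{\delta}{2N}]/\delta+O(\sqrt\eta\,N/\delta)$, which is then controlled by pointwise universality (\ref{eq:uni}) at the fixed microscopic scale $\delta/N$. Sending $N\to\infty$ and then $\delta\to0$ finishes the proof. In other words, the paper never needs any control of $n_N(t,E)$ for $t\ll 1/N$: the energy average converts the sub-microscopic $\eta$-dependence into an $E$-window of microscopic size, on which universality is available. Your proposal, by contrast, tries to split by the $t$-scale of the kernel and is then forced to confront sub-microscopic counting directly. To turn your sketch into a proof you would need, at minimum, something equivalent to Proposition \ref{prop}, and it does not appear anywhere in your outline.
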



\medskip

{F}rom Theorem \ref{thm:main} we obtain in the next corollary the convergence of the average density of states to the semicircle law on arbitrarily small scales.
\begin{corollary}\label{cor}
Under the same assumptions as in Theorem \ref{thm:main}, and for any fixed $\kappa >0$, we have
\begin{equation}\label{eq:cor1} \begin{split}  \lim_{N \to \infty} \sup_{|E| \leq 2-\kappa} \left|  \liminf_{\eta \to 0} \, \E \, \frac{\cN\left[E-\frac{\eta}{2} ; E+ \frac{\eta}{2} \right]}{N\eta} - \rho_{sc} (E) \right|  &= 0\\
 \lim_{N \to \infty} \sup_{|E| \leq 2-\kappa} \left|  \limsup_{\eta \to 0} \, \E \, \frac{\cN\left[E-\frac{\eta}{2} ; E+ \frac{\eta}{2} \right]}{N\eta} - \rho_{sc} (E) \right|  &= 0 \end{split} \end{equation}
Moreover, for every sequence $\eta (N) >0$ with $\eta (N) \to 0$ as $N \to \infty$, we find
\begin{equation}\label{eq:cor2} \lim_{N \to \infty} \sup_{|E| < 2-\kappa} \left| \E  \, \frac{\cN \left[E-\frac{\eta (N)}{2}; E+\frac{\eta (N)}{2}\right]}{N\eta (N)}  - \rho_{sc} (E) \right| = 0 \,. \end{equation}
\end{corollary}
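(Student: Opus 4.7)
The plan is to translate the Stieltjes transform statement of Theorem \ref{thm:main} into the desired boxed density of states. The bridge is the elementary pointwise limit
\begin{equation*}
\frac{1}{\pi}\int_{E_0-\eta/2}^{E_0+\eta/2}\frac{\tilde\eta}{(\mu-E)^2+\tilde\eta^2}\, dE\;\longrightarrow\;{\bf 1}_{(E_0-\eta/2,\,E_0+\eta/2)}(\mu)\qquad(\tilde\eta\to 0),
\end{equation*}
whose left-hand side is bounded by $1$ uniformly in $\mu$. Summing over the eigenvalues of $H$, and using that their joint law is absolutely continuous (so almost surely no eigenvalue equals $E_0\pm\eta/2$), I would obtain the almost-sure identity
\begin{equation*}
\frac{\cN[E_0-\eta/2,\,E_0+\eta/2]}{N}=\lim_{\tilde\eta\to 0}\frac{1}{\pi}\int_{E_0-\eta/2}^{E_0+\eta/2}\text{Im}\,m_N(E+i\tilde\eta)\, dE.
\end{equation*}

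Next, I would take expectations. The a priori bound (\ref{eq:up-m}) yields $\E\,\text{Im}\,m_N(E+i\tilde\eta)\leq C$ uniformly in $\tilde\eta>0$ and in $E$ away from the edges, so Fubini, Fatou's lemma and its reverse version (both justified by this bound) sandwich $\E\,\cN[E_0-\eta/2,\,E_0+\eta/2]/(N\eta)$ between the $E$-averages over $[E_0-\eta/2,\,E_0+\eta/2]$ of $\liminf_{\tilde\eta\to 0}\E\,\text{Im}\,m_N(E+i\tilde\eta)$ and of $\limsup_{\tilde\eta\to 0}\E\,\text{Im}\,m_N(E+i\tilde\eta)$, each divided by $\pi$. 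By Theorem \ref{thm:main} (applied with $\kappa/2$ in place of $\kappa$), as $N\to\infty$ both these inner $\liminf$ and $\limsup$ converge uniformly on $|E|\leq 2-\kappa/2$ to $\pi\rho_{sc}(E)$. Restricting to $|E_0|\leq 2-\kappa$ and $\eta\leq\kappa/2$ (so that the integration interval lies in $[-2+\kappa/2,\,2-\kappa/2]$), this gives, for every $\eps>0$, all $N\geq N_0(\eps)$ and all such $E_0,\eta$,
\begin{equation*}
\left|\E\frac{\cN[E_0-\eta/2,\,E_0+\eta/2]}{N\eta}-\frac{1}{\eta}\int_{E_0-\eta/2}^{E_0+\eta/2}\rho_{sc}(E)\, dE\right|\leq\eps.
\end{equation*}

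Finally, continuity of $\rho_{sc}$ on the compact set $[-2+\kappa/2,\,2-\kappa/2]$ forces its box average to converge to $\rho_{sc}(E_0)$ uniformly in $|E_0|\leq 2-\kappa$ as $\eta\to 0$. Combined with the preceding display, this yields (\ref{eq:cor1}) upon taking $\liminf_{\eta\to 0}$ and $\limsup_{\eta\to 0}$ followed by the supremum over $|E_0|\leq 2-\kappa$, and yields (\ref{eq:cor2}) by applying the estimate to $\eta=\eta(N)$, which is eventually below $\kappa/2$. I do not expect a substantive obstacle: the corollary is a soft-analysis consequence of Theorem \ref{thm:main}, with the only technical care needed being to justify the various limit interchanges through the uniform bound (\ref{eq:up-m}).
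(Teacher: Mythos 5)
Your proof is correct and follows essentially the same strategy as the paper: average $\E\,\text{Im}\,m_N(\cdot+i\tilde\eta)$ over the interval $[E_0-\eta/2,E_0+\eta/2]$, take the $\tilde\eta\to 0$ limit (justified via the a priori bound (\ref{eq:up-m}) and the Stieltjes/arctangent inversion) to recover $\pi\E\,\cN/(N\eta)$, then invoke the uniform convergence in Theorem \ref{thm:main} plus continuity of $\rho_{sc}$. The paper phrases the same interchanges quantitatively (an explicit $O(\sqrt\eps/\eta)$ arctangent error and the bound $|m_{sc}'|\le\kappa^{-1/2}$) rather than via Fatou and uniform continuity, but the argument is the same.
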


We expect similar results to hold also for ensembles of Wigner matrices with different symmetries (real symmetric and quaternion Hermitian). The main tool that we use to show Theorem \ref{thm:main}, namely Proposition \ref{prop}, can be easily extended to ensembles with different symmetries. However, to conclude the proof of Theorem \ref{thm:main}, we also need the universality result (\ref{eq:univ}) (for $k=1$ only) to hold {\it pointwise} in $E$ (this is used in (\ref{eq:uni-0}), (\ref{eq:uni})). So far, pointwise in $E$ universality for real symmetric and quaternion Hermitian ensembles is only known, from \cite{TV}, under the assumption that the first four moment of the entries match exactly the corresponding Gaussian moments. Thus, our theorem extends, so far, only to these special examples of real symmetric and quaternion Hermitian Wigner ensembles.

\medskip

Observe that while the convergence in (\ref{eq:mic-sc-ST}) and (\ref{eq:mic-sc}) is a result on the scale $\eta (N) = K/N$ for a large but fixed $K >0$, and universality is a result about correlations on the scale $1/N$, Theorem \ref{thm:main} and Corollary \ref{cor} deal with the density of states on arbitrarily small scales. Understanding the limit $N \to \infty$ of the average density of states, {\it uniformly} in the size $\eta$ of the interval, is the main challenge in showing Theorem \ref{thm:main} and Corollary \ref{cor}.

\medskip

We start by proving that Corollary \ref{cor} follows from Theorem \ref{thm:main} (here we use the upper bound (\ref{eq:up}), and the fact that (\ref{eq:ass}) implies (\ref{eq:ass-up})).
\begin{proof}[Proof of Corollary \ref{cor}]
Let $\eta >0$ and $\eps < \eta^2$. Then, we consider, for arbitrary $E\in (-2,2)$,
\[ \begin{split}
I := & \frac{1}{\eta} \E \, \int_{E-\eta /2}^{E+\eta /2} d \wt{E}  \, \text{Im } m_N (\wt{E}+i\eps) \\
=& \frac{1}{N\eta} \E \sum_{\alpha=1}^N \int_{E-\eta /2}^{E+\eta /2} d \wt{E}  \, \frac{\eps}{(\mu_\alpha-\wt{E})^2 + \eps^2} \\
=& \frac{1}{N\eta} \E \, \sum_{\alpha=1}^N \left[ \text{arctg } \left( \frac{\mu_\alpha- \left(E- \frac{\eta}{2}\right)}{\eps} \right) - \text{arctg } \left( \frac{\mu_\alpha- \left(E + \frac{\eta}{2}\right)}{\eps} \right) \right]
\end{split} \]
Now, we observe that there exists a universal constant $C>0$ such that
\[ \frac{\pi}{2} - \frac{1}{x} \leq \text{arctg } x \leq \frac{\pi}{2} \]
for all $x > C$, and such that
\[ -\frac{\pi}{2} \leq \text{arctg } x \leq -\frac{\pi}{2} - \frac{1}{x} \]
for all $x < -C$. Therefore, we obtain (for all $\eps$ sufficiently small),
\begin{equation}\label{eq:up-I} \begin{split}
I \leq \; & \pi \E \, \frac{\cN \left[ E-\frac{\eta}{2} - \sqrt{\eps} ; E + \frac{\eta}{2}+\sqrt{\eps} \right]}{N\eta} +  \frac{1}{N\eta} \E \, \sum_{\{\alpha: \mu_\alpha < E -\frac{\eta}{2}-\sqrt{\eps}\}} \frac{\eps}{(E - \eta/2) - \mu_\alpha} \\ &+ \frac{1}{N\eta} \E \, \sum_{\{ \alpha: \mu_\alpha > E + \frac{\eta}{2} +\sqrt{\eps} \}} \frac{\eps}{\mu_\alpha - (E + \eta/2)} \\ \leq \; & \pi \E \, \frac{\cN \left[ E-\frac{\eta}{2} - \sqrt{\eps} ; E + \frac{\eta}{2}+\sqrt{\eps} \right]}{N\eta} + \frac{\sqrt{\eps}}{\eta} \\ \leq \; & \pi  \E \, \frac{\cN \left[ E-\frac{\eta}{2}; E + \frac{\eta}{2} \right]}{N\eta} + C \frac{\sqrt{\eps}}{\eta}
\end{split}
\end{equation}
where we used the upper bound (\ref{eq:up}). Analogously, we can show the lower bound
\begin{equation} \label{eq:low-I} I \geq  \pi  \E \, \frac{\cN \left[ E-\frac{\eta}{2}; E + \frac{\eta}{2} \right]}{N\eta} - C
\frac{\sqrt{\eps}}{\eta} \end{equation}
This implies that
\begin{equation}\label{eq:bd2} \begin{split}
\pi  \E \, \frac{\cN \left[ E-\frac{\eta}{2}; E + \frac{\eta}{2} \right]}{N\eta} = &\; \liminf_{\eps \to 0} \frac{1}{\eta} \, \int_{E-\eta /2}^{E+\eta /2} d \wt{E}  \, \E \, \text{Im } m_N (\wt{E}+i\eps) \\ = & \; \frac{1}{\eta} \int_{E-\eta/2}^{E+\eta/2} d \wt{E} \, \liminf_{\eps \to 0}
\E \, \text{Im } m_N (\wt{E}+i\eps)
\\ = & \; m_{sc} (E) \\ &+\frac{1}{\eta} \int_{E-\eta/2}^{E+\eta/2} d \wt{E} \, \left( m_{sc} (\wt{E}) - m_{sc} (E) \right) \\
& +  \frac{1}{\eta} \int_{E-\eta/2}^{E+\eta/2} d \wt{E} \, \left( \liminf_{\eps \to 0} \E \, \text{Im } m_N (\wt{E}+i\eps) - m_{sc} (\wt{E}) \right)
\end{split}
\end{equation}
where, in the second line, we used the dominated convergence theorem (and the upper bound (\ref{eq:up-m})). Since $m_{sc} (E) = \pi \rho_{sc} (E)$, we obtain
\[ \begin{split}
\left| \E \, \frac{\cN \left[ E-\frac{\eta}{2}; E + \frac{\eta}{2} \right]}{N\eta} - \rho_{sc} (E) \right| \leq \frac{1}{\pi} \sup_{|E| \leq 2- \kappa} \left| \liminf_{\eps \to 0} \E \, \text{Im } m_N (\wt{E}+i\eps) - m_{sc} (\wt{E}) \right| + \frac{\eta}{4\pi \sqrt{\kappa}}
\end{split} \]
for all $|E| < 2-\kappa-\eta/2$ (to estimate the second term on the r.h.s. of (\ref{eq:bd2}),  we used the bound $m'_{sc} (E) \leq \kappa^{-1/2}$ valid for all $|E| \leq 2 - \kappa$). Letting $\eta \to 0$, we conclude that
\[ \begin{split}
&\left| \liminf_{\eta \to 0} \E \, \frac{\cN \left[ E-\frac{\eta}{2}; E + \frac{\eta}{2} \right]}{N\eta} - \rho_{sc} (E) \right| \leq \pi^{-1} \sup_{|E| \leq 2- \kappa} \left| \liminf_{\eps \to 0} \E \, \text{Im } m_N (\wt{E}+i\eps) - m_{sc} (\wt{E}) \right| \qquad \text{and}  \\
&\left| \limsup_{\eta \to 0} \E \, \frac{\cN \left[ E-\frac{\eta}{2}; E + \frac{\eta}{2} \right]}{N\eta} - \rho_{sc} (E) \right| \leq \pi^{-1}  \sup_{|E| \leq 2- \kappa} \left| \liminf_{\eps \to 0} \E \, \text{Im } m_N (\wt{E}+i\eps) - m_{sc} (\wt{E}) \right| \,.
\end{split} \]
for all $|E| < 2-\kappa$. Eq. (\ref{eq:cor1}) follows now, taking the limit $N \to \infty$, from
(\ref{eq:uniclaim}).  Eq. (\ref{eq:cor2}) can be proven similarly.
\end{proof}

The proof of Theorem \ref{thm:main} is based on the following crucial proposition.
\begin{proposition} \label{prop}
Let $H$ be an ensemble of Hermitian Wigner matrices as in Definition \ref{def}, so that $\E \, e^{\nu |x_{ij}|^2} < \infty$ for some $\nu >0$. Suppose that the real and imaginary part of the off-diagonal entries have a common probability density function $h$ such that
\begin{equation}\label{eq:ass2}
\int \, \left| \frac{h' (s)}{h(s)} \right|^6 \, h(s) ds < \infty, \qquad \text{and } \quad \int \left| \frac{h'' (s)}{h(s)} \right|^2 \, h(s) ds < \infty \, .
\end{equation}
Fix $\kappa >0$. Then there exists a constant $C >0$ such that
\begin{equation}\label{eq:prop}
\left| \frac{\rd}{\rd E} \, \E \, \text{Im } m_N (E + i\eta) \right| \leq C N
\end{equation}
holds for all $E \in (-2+\kappa,2-\kappa)$, for all $0< \eta \leq 1/N$, for all $N \in \bN$ large enough.
\end{proposition}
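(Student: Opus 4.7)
Since $\frac{\rd}{\rd E}\,m_N(E+i\eta) = \frac{1}{N}\,\tr\,G(E+i\eta)^2$ with $G(z) = (H-z)^{-1}$, the proposition is equivalent to the bound $|\E\,\text{Im}\,\tr\,G^2(E+i\eta)| \le CN^2$ uniformly for $|E|\le 2-\kappa$ and $0<\eta\le 1/N$. The pointwise inequality $|\tr G^2(z)| \le \eta^{-1}\,\text{Im}\,\tr G(z)$, combined with the a priori estimate (\ref{eq:up}), only gives $|\E\,\tr G^2| \le CN/\eta$, which saturates $CN^2$ exactly at $\eta = 1/N$ and is insufficient for smaller $\eta$. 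The plan is to extract the missing factor $N\eta \le 1$ by integration by parts in the off-diagonal entries of $H$, exploiting the regularity (\ref{eq:ass2}) of the density $h$.

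By permutation symmetry of the indices $1,\dots,N$, $\E\,\tr\,G^2 = N\,\E\,(G^2)_{11}$, so it suffices to show $|\E\,(G^2)_{11}| \le C$. The Schur complement formula
\[
G_{11} = \frac{1}{H_{11} - z - a_1^*\,G^{(1)}\,a_1},
\]
where $a_1 \in \bC^{N-1}$ is the first off-diagonal column of $H$ and $G^{(1)}$ is the resolvent of the principal $(N-1)$-dimensional minor, together with differentiation in $z$, yields
\[
(G^2)_{11} \;=\; G_{11}^2\bigl(1 + a_1^*\,(G^{(1)})^2\,a_1\bigr).
\]
I would then split each quadratic form as $a_1^*(G^{(1)})^k a_1 = \frac{1}{N}\tr\,(G^{(1)})^k + \mathrm{fluct}_k$, use eigenvalue interlacing to relate $\frac{1}{N}\tr\,(G^{(1)})^k$ to $\frac{1}{N}\tr\,G^k$ (for $k=1$, matching the self-energy with $m_N$; for $k=2$, closing back onto $\E\,(G^2)_{11}$), and Taylor-expand $G_{11}^2$ around the deterministic point $H_{11}=0$, $a_1^* G^{(1)} a_1 = \frac{1}{N}\tr\,G^{(1)}$. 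This would produce an approximate self-consistent equation
\[
\E\,(G^2)_{11} \;=\; \E\,G_{11}^2\,\bigl(1+\E\,(G^2)_{11}\bigr) + \mathrm{Error},
\]
whose leading coefficient $(1-\E\,G_{11}^2)^{-1}$ stays bounded on $|E|\le 2-\kappa$ because $1 - m_{sc}(z)^2 = 2 + z\,m_{sc}(z)$ is bounded away from zero there. Closing the equation gives $|\E\,(G^2)_{11}| \le C$, which even yields the stronger bound $|\partial_E \E\,\text{Im}\,m_N| \le C$.

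The error terms (fluctuations $\mathrm{fluct}_k$ of the quadratic forms, interlacing corrections, and nonlinear Taylor remainders from $G_{11}^2$) are all handled by integration by parts in the off-diagonal entries $x_{jk}, y_{jk}$ via
\[
\E\,\partial_{x_{jk}} F(H) \;=\; -\,\E\!\left[F(H)\,\frac{h'(x_{jk})}{h(x_{jk})}\right],
\]
and its second iterate (producing combinations of $h'/h$ and $h''/h$). H\"older's inequality, coupled with the moment assumptions (\ref{eq:ass2}) and with resolvent moment bounds obtained from $|G_{jk}|^2 \le \eta^{-1}\text{Im}\,G_{jj}$ together with (\ref{eq:up}), then controls every error. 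The main obstacle is the bookkeeping: individual moments $\E\,|G_{jk}|^p$ diverge like $\eta^{-(p-1)}$ as $\eta\to 0$, and the factors of $N^{-1/2}$ gained from the entry normalization must, after the H\"older pairing against the $L^6$-norm of $h'/h$ and the $L^2$-norm of $h''/h$, exactly compensate these divergences. The specific exponents $6$ and $2$ in (\ref{eq:ass2}) are chosen precisely so that this balance closes and yields the desired bound uniformly in $\eta$.
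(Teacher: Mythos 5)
Your approach is genuinely different from the paper's, and while the opening reduction is sound, the core of the argument as sketched has a gap that I do not see how to close.

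The paper never attempts a self-consistent equation. Instead it (i) conditions on the minor $B$ and $h_{11}$, (ii) rewrites $\text{Im}\,\E m_N$ via the Schur complement in the diagonalized coordinates $z_\alpha=\bb\cdot\bu_\alpha$, (iii) differentiates the resulting explicit integral in $E$ \emph{before} any probabilistic manipulation, and (iv) integrates by parts \emph{in the rotated variables $z_{\beta_j}$ attached to the eigenvalues of $B$ closest to $E$}. Step (iv) is the crucial device: writing the Lorentzian as a directional derivative $\bigl(\sum_j\sigma_j z_{\beta_j}\partial_{z_{\beta_j}}\bigr)F(\cdot)$ of an antiderivative produces, after the integration by parts, the regularizing factor $\bigl(\sum_{j}|d_{\beta_j}||z_{\beta_j}|^2\bigr)^{-1}$, which makes the conditional expectation over $\bba$ finite uniformly in $\eta$. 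Only then is the expectation over $B$ taken, paying a $\Delta$-power that is controlled by the level-repulsion bounds from \cite{ESY3,MS}. Your integration by parts is in the raw entries $x_{jk}$; that gains a factor $N^{-1/2}$ per derivative, but it does not obviously produce the regularizing denominator that tames the conditional integral at scales $\eta\ll 1/N$, and you give no mechanism by which the $N\eta$ factor is actually recovered.

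The more serious problem is the self-consistent equation itself. You want to close
$\E(G^2)_{11}=\E G_{11}^2\,(1+\E(G^2)_{11})+\text{Error}$
by inverting $1-\E G_{11}^2$, and you justify its stability by $1-m_{sc}(z)^2\ne 0$ for $|E|\le 2-\kappa$. But the substitution $\E G_{11}^2\approx m_{sc}^2$ is precisely what fails at scales $\eta\ll 1/N$: pointwise one only has $|G_{11}|\le\bigl(\eta+a_1^*\text{Im}\,G^{(1)}a_1\bigr)^{-1}$, and the upper bound (\ref{eq:up}) gives $\E|G_{11}|^2\lesssim\eta^{-1}\E\,\text{Im}\,G_{11}\lesssim\eta^{-1}$, which diverges as $\eta\to 0$. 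Whatever cancellation makes $\E G_{11}^2$ (as opposed to $\E|G_{11}|^2$) bounded must itself be extracted from the regularity of $h$ — it is not available a priori — so you cannot appeal to the stability of the fixed-point equation before doing the hard analytic work. In this sense your plan is circular: proving that $1-\E G_{11}^2$ stays away from $0$ uniformly for $\eta\le 1/N$ is essentially equivalent in difficulty to the proposition itself. Note also that you are aiming at $|\E(G^2)_{11}|\le C$, a bound by a constant rather than by $CN$ as in the paper; this would be a strictly stronger statement, and nothing in your sketch suggests why the extra factor $N$ should disappear. The paper's weaker $CN$ bound is all Theorem~\ref{thm:main} needs, and it is produced by a direct estimate, not by self-consistency. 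Finally, the interlacing step $\tfrac1N\tr(G^{(1)})^2\approx\tfrac1N\tr G^2$ is not innocuous at these scales either: for $\eta\ll 1/N$ the difference of traces is not $o(N)$ pointwise, and one must again average, which folds back into the same difficulties.
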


Note that Proposition \ref{prop}, whose proof is deferred to Section \ref{sec:prop}, can be easily extended to ensembles of Wigner matrices with different symmetry (real symmetric or quaternion hermitian ensembles). Next, we show how the statement of Theorem \ref{thm:main} follows from Proposition \ref{prop}.

\begin{proof}[Proof of Theorem \ref{thm:main}]
We start by observing that, for any $\delta >0$,
\[ \begin{split}
I := & \; \frac{N}{\delta} \int_{E-\frac{\delta}{2N}}^{E+\frac{\delta}{2N}} d\wt{E} \, \E \, \text{Im } m_N (\wt{E} + i \eta) \\ = \; & \E \, \text{Im } m_N (E+i\eta) + \frac{N}{\delta} \int_{E-\frac{\delta}{2N}}^{E+\frac{\delta}{2N}} d\wt{E} \, \left(\E \, \text{Im } m_N (\wt{E} + i \eta) - \E \, \text{Im } m_N (E+i\eta) \right) \\ = \; &  \E \, \text{Im }  m_N (E+i\eta) + \frac{N}{\delta} \int_{E-\frac{\delta}{2N}}^{E+\frac{\delta}{2N}} d\wt{E} \, \int_{\wt{E}}^E  ds \, \frac{d}{ds} \, \E \, \text{Im } m_N (s + i \eta)
\end{split}\]
Therefore, from Proposition \ref{prop}, we find
\begin{equation}\label{eq:I-m_N} \begin{split}
\left| \E \, \text{Im } m_N (E+i\eta) - I \right| \leq & \; \frac{CN^2}{\delta} \int_{E-\frac{\delta}{2N}}^{E+\frac{\delta}{2N}} d\wt{E} \, |E-\wt{E}|  \leq C \delta
\end{split} \end{equation}
for all $E \in [-2+\kappa ; 2-\kappa]$, all $0<\eta <1/N$, and all $N$ sufficiently large.
Next we observe that
\[ \begin{split}
I = & \; \frac{1}{\delta} \E \, \sum_{\alpha} \int_{E-\frac{\delta}{2N}}^{E+\frac{\delta}{2N}} d\wt{E} \, \, \frac{\eta}{(\mu_\alpha - \wt{E})^2 + \eta^2} \\
=& \; \frac{1}{\delta} \E \sum_{\alpha} \left[ \text{arctg } \left( \frac{\mu_\alpha- \left(E- \frac{\delta}{2N}\right)}{\eta} \right) - \text{arctg } \left( \frac{\mu_\alpha- \left(E + \frac{\delta}{2N}\right)}{\eta} \right) \right] \\
= & \; \pi \, \E \frac{\cN \left[ E- \frac{\delta}{2N} ; E + \frac{\delta}{2N} \right]}{\delta} + O \left(\frac{\sqrt{\eta} N}{\delta} \right)
\end{split} \]
where we proceeded as in (\ref{eq:up-I}), (\ref{eq:low-I}). Last equation, together with (\ref{eq:I-m_N}), implies that
\[ \begin{split} &\left| \liminf_{\eta \to 0} \E \, \text{Im } m_N (E+i\eta) -  \pi \, \E \frac{\cN \left[ E- \frac{\delta}{2N} ; E + \frac{\delta}{2N} \right]}{\delta} \right| \leq C \delta \qquad \text{and} \\
& \left| \limsup_{\eta \to 0} \E \, \text{Im } m_N (E+i\eta) -  \pi \, \E \frac{\cN \left[ E- \frac{\delta}{2N} ; E + \frac{\delta}{2N} \right]}{\delta} \right| \leq C \delta  \end{split} \]
where the constant $C>0$ is independent of $E$, for $E \in (-2+\kappa;2-\kappa)$ and of $N$, for all $N$ large enough. It follows from (\ref{eq:uni}) that
\[ \lim_{N \to \infty} \sup_{E \in [-2+\kappa;2-\kappa]} \left| \E \frac{\cN \left[ E- \frac{\delta}{2N} ; E + \frac{\delta}{2N} \right]}{\delta} - \rho_{sc} (E) \right| = 0 \]
for every fixed $\delta >0$. Note that (\ref{eq:uni}) follows from the universality result (\ref{eq:univ}), with $k=1$, obtained in \cite{ERSTVY} under the assumption $\E \, x_{ij}^3 = 0$. Therefore, we conclude that
\begin{equation}\label{eq:final}
\begin{split}
&\lim_{N \to \infty} \sup_{E \in [-2+\kappa; 2-\kappa]} \left|  \liminf_{\eta \to 0} \E \, \text{Im } m_N (E+i\eta) - m_{sc} (E)\right| \leq C \delta \qquad \text{and} \\
&\lim_{N \to \infty} \sup_{E \in [-2+\kappa; 2-\kappa]} \left|  \liminf_{\eta \to 0} \E \, \text{Im } m_N (E+i\eta) - m_{sc} (E)\right| \leq C \delta
\end{split}
\end{equation}
Since $\delta >0$ is arbitrary, last equation implies (\ref{eq:uniclaim}).
\end{proof}

\section{Proof of Proposition \ref{prop}}
\setcounter{equation}{0}
\label{sec:prop}

The goal of this section is to prove Proposition \ref{prop}. We are going to prove that there exists a universal constant $C>0$ such that \begin{equation}\label{eq:eps}
\left| \frac{\rd}{\rd E} \, \E \, \text{Im } m_N \left(E + i \frac{\eps}{N} \right) \right| \leq C N
\end{equation}
for all $E \in (-2+\kappa, 2-\kappa)$, $N \in \bN$ sufficiently large, and $0 < \eps \leq 1$.
To show (\ref{eq:eps}), we start by writing
\[ m_N \left(E + i \frac{\eps}{N} \right)  = \frac{1}{N} \sum_{j=1}^N \frac{1}{H-E-i\frac{\eps}{N}} (j,j) =\frac{1}{N} \sum_{j=1}^N \frac{1}{h_{jj}-E-i\frac{\eps}{N} - \frac{1}{N}\sum_{\alpha}\frac{\xi_{\alpha}^{(j)}}{\lambda^{(j)}_{\alpha}-E-i\frac{\eps}{N}}} \]
where $\xi_{\alpha}^{(j)} = N |\bu_{\alpha}^{(j)}\cdot \bba^{(j)}|^2$, and where $\lambda^{(j)}_{\alpha}$ and ${\bf u}^{(j)}_{\alpha}$ are the eigenvalues and the  eigenvectors of the $(N-1)\times (N-1)$ minor $B^{(j)}$ of $H$, obtained by removing the $j$-th row and the $j$-th column (we will assume that the $\lambda^{(j)}_\alpha$ are ordered, in the sense that, for every $j \in \{ 1, \dots , N \}$, $\lambda^{(j)}_1 \leq \lambda^{(j)}_2 \leq \dots \leq \lambda^{(j)}_{N-1}$). Taking the expectation, we find
\begin{equation}\label{eq:xii}
\E m_N \left( E+i \frac{\e}{N} \right) = \E \, \frac{1}{h - E - i\frac{\eps}{N} - \frac{1}{N}\sum_{\alpha}\frac{\xi_{\alpha}}{\lambda_{\alpha} - E - i\frac{\eps}{N}}}
\end{equation}
where we put $h = h_{11}$, $\xi_\alpha = N |\bba \cdot \bu_\alpha|^2$ , where $\bba = \bba^{(1)} = (h_{12}, \dots , h_{1N})$, and where $\bu_\alpha = \bu_\alpha^{(1)}$, $\lambda_\alpha = \lambda^{(1)}_\alpha$ are the eigenvectors and the eigenvalues of $B = B^{(1)}$ (by symmetry, the expectation of $(H-z)^{-1} (j,j)$ is independent of $j = 1, \dots ,N$). Taking the imaginary part, we obtain
\begin{equation}\label{eq:ImEm} \text{Im } \E \, m_N \left( E+i \frac{\e}{N} \right) = \E \,\frac{ \e/N + \sum_{\alpha}c_{\alpha}\xi_{\alpha}}{(h - E -\sum_{\alpha}d_{\alpha}\xi_{\alpha})^2 + (\e/N + \sum_{\alpha}c_{\alpha}\xi_{\alpha})^2} \end{equation}
where we defined
\begin{align}
c_{\alpha} &= \frac{\e}{N^2(\lambda_{\alpha} - E)^2 + \e^2}\\
d_{\alpha} &= \frac{N(\lambda_{\alpha}-E)}{N^2(\lambda_{\alpha}-E)^2+\e^2}\,.
\label{eq:dbeta}
\end{align}
We compute next the derivative of (\ref{eq:ImEm}) with respect to $E$ (note that $c_\alpha$ and $d_\alpha$ depend on $E$). We obtain
\begin{equation}\label{eq:der}
\begin{split}
\frac{d}{dE} \, \E \, & m_N \left( E+i \frac{\e}{N} \right) \\ = \; &\E \, \left( \sum_{\alpha} c_{\alpha}' \, \xi_{\alpha}\right) \,  \frac{(h - E-\sum_{\alpha}d_{\alpha}\xi_{\alpha})^2 - (\sum_{\alpha} c_{\alpha} \xi_{\alpha})^2}{\left[(h - E -\sum_{\alpha}d_{\alpha}\xi_{\alpha})^2 +(\eps/N+ \sum_{\alpha}c_{\alpha}\xi_{\alpha})^2\right]^2}\\
&+2 \E \,  \left(1+ \sum_{\alpha} d_{\alpha}' \xi_{\alpha}\right) \,
 \frac{\left( \eps/N + \sum_{\alpha}c_{\alpha}\xi_{\alpha}\right) \left(h - E -\sum_{\alpha}d_{\alpha}\xi_{\alpha} \right)}{\left[(h - E -\sum_{\alpha}d_{\alpha}\xi_{\alpha})^2 + (\eps/N + \sum_{\alpha}c_{\alpha}\xi_{\alpha})^2\right]^2}
\end{split}
\end{equation}
where we defined the derivatives of $c_\alpha$ and $d_\alpha$ with respect to $E$:
\begin{align*}
c_{\alpha}' &= \e N\frac{N(\lambda_{\alpha} - E)}{(N^2(\lambda_{\alpha} - E)^2 + \e^2)^2}\\
d_{\alpha}' &= N\frac{N^2(\lambda_{\alpha}- E)^2-\e^2}{(N^2(\lambda_{\alpha}- E)^2+\e^2)^2}
\end{align*}
We are going to estimate the absolute value of (\ref{eq:der}) by first taking the expectation over the component of $\bba = (h_{12}, \dots , h_{1N})$ keeping $h = h_{11}$ and the minor $B$ fixed. Only later, we will take expectation over $B,h$. In other words, we bound
\begin{equation}\label{eq:I+II} \left| \frac{d}{dE} \, \E \,  m_N \left( E+i \frac{\e}{N} \right) \right| \leq \E_{B,h} \left[ |\text{I}| + |\text{II}| + |\text{III}| \right] \end{equation}
where we set
\begin{equation}\label{eq:I}
\text{I } = \E_{\bba} \, \left( \sum_{\alpha} c_{\alpha}' \, \xi_{\alpha}\right) \,  \frac{(h - E-\sum_{\alpha}d_{\alpha}\xi_{\alpha})^2 - (\sum_{\alpha} c_{\alpha} \xi_{\alpha})^2}{\left[(h - E -\sum_{\alpha}d_{\alpha}\xi_{\alpha})^2 +(\eps/N+ \sum_{\alpha}c_{\alpha}\xi_{\alpha})^2\right]^2},
\end{equation}
\begin{equation}\label{eq:II}
\text{II } =2 \E_{\bba} \, \left(\sum_{\alpha} d_{\alpha}' \xi_{\alpha}\right)  \, \frac{\left( \eps / N + \sum_{\alpha}c_{\alpha}\xi_{\alpha}\right) \left(h - E -\sum_{\alpha}d_{\alpha}\xi_{\alpha} \right)}{\left[(h - E -\sum_{\alpha}d_{\alpha}\xi_{\alpha})^2 + (\eps/N + \sum_{\alpha}c_{\alpha}\xi_{\alpha})^2\right]^2}
\end{equation}
and
\begin{equation}\label{eq:III}
\text{III } =2 \, \E_{\bba} \,  \frac{\left( \eps/N + \sum_{\alpha}c_{\alpha}\xi_{\alpha}\right) \left(h - E -\sum_{\alpha}d_{\alpha}\xi_{\alpha} \right)}{\left[(h - E -\sum_{\alpha}d_{\alpha}\xi_{\alpha})^2 + (\eps/N + \sum_{\alpha}c_{\alpha}\xi_{\alpha})^2\right]^2}
\end{equation}

\medskip

In order to bound these contributions, we need to select indices of eigenvalues of $B$ playing an important role. We will need some of these eigenvalues to be at distances larger than $\eps/N$ from $E$ (to make sure that the corresponding coefficient $d_\alpha$ is not small). In order to define these indices, we need to exclude the (extremely unlikely) event that less than eight eigenvalues of $B$ are outside the interval $[E-\eps/N ; E+ \eps/N]$. We define therefore the ``good'' event
\begin{equation}\label{eq:Omega} \Omega = \left\{ \text{there exist at least eight eigenvalues of $B$ outside the interval } [E-\frac{\eps}{N} ; E+ \frac{\eps}{N}] \right\} \end{equation}
We will show now that, on the good event,
\begin{equation}\label{eq:claimO} \E_{B,h} \, {\bf 1} (\Omega) \left( |\text{I}| +  |\text{II}|+  |\text{III}| \right) \lesssim N \end{equation}
At the end of the proof, we will discuss the expectation on the ``bad'', complementary,  event $\Omega^c$. Until then, our analysis will always be restricted to the ``good'' set $\Omega$.

\medskip

In order to show (\ref{eq:claimO}), we choose, for a fixed realization of $B$, the index $\beta_0 \in \{ 1, \dots , N-1\}$ so that
\begin{equation}\label{eq:beta0} |\lambda_{\beta_0} - E| = \min_{\alpha = 1, \dots , N-1} |\lambda_\alpha -E| \, . \end{equation}
Moreover, on the set $\Omega$, we fix recursively the indices $\beta_j$, $j=1,\dots,8$, so that
\begin{equation}\label{eq:indices} |\lambda_{\beta_j} - E| = \min \{ |\lambda_{\gamma} - E| : |\lambda_{\gamma} - E| \geq \eps/N, \beta_j \not = \beta_0, \dots , \beta_{j-1} \} \, \end{equation} In other words, $\beta_1, \dots , \beta_8$ are the indices of the eight  eigenvalues of $B$ (different from $\beta_0$) which are closest to $E$ under the condition that they are not $\beta_0$, and that their distance to $E$ is at least $\eps/N$ (this conditions guarantees the monotonicity of the coefficients $|d_{\beta_j}|$). Let
\begin{equation}\label{eq:Delta} \Delta := N|\lambda_{\beta_8} -E| \end{equation}
Then, we have
\begin{equation}\label{eq:delta} \frac{1}{2\Delta} \leq |d_{\beta_8}| \leq \dots \leq |d_{\beta_1}| \leq \frac{1}{\eps} \end{equation}
and, similarly,
\begin{equation}\label{eq:cdelta}
\frac{\eps}{2\Delta^2} \leq |c_{\beta_8}| \leq \dots \leq |c_{\beta_1}| \leq \frac{1}{\eps}\end{equation}

\medskip

We start by controlling the term $\text{I}$, defined in (\ref{eq:I}). We have
\[ \begin{split}
|\text{I}| &\leq \E_{\bba} \frac{\sum_{\alpha} |c_{\alpha}'| \xi_{\alpha}}{(h - E -\sum_{\alpha}d_{\alpha}\xi_{\alpha})^2 + (\sum_{\alpha}c_{\alpha}\xi_{\alpha})^2} \leq  \sum_{\alpha} |c_{\alpha}'| \, \E_{\bba} \, \frac{ \xi_\alpha}{(h - E -\sum_{\alpha}d_{\alpha}\xi_{\alpha})^2 + (\sum_{\alpha}c_{\alpha}\xi_{\alpha})^2}
\end{split}
\]
because $c'_\alpha$ is independent of $\bba$ (the coefficients $c_\alpha$, $d_\alpha$ only depend on the eigenvalues $\lambda_\alpha$ of the minor $B$; therefore they are independent of the first row and column). Let $\bb = \sqrt{N} \bba = (b_1, \dots , b_{N-1})$. Then \[\begin{split} \E_{\bba} \, & \frac{ \xi_\alpha}{(h - E -\sum_{\alpha}d_{\alpha}\xi_{\alpha})^2 + (\sum_{\alpha}c_{\alpha}\xi_{\alpha})^2} \\ &= \int  d\bb \, d {\bar{\bb}} \prod_{j=1}^{N-1} h (\text{Re } b_j) h(\text{Im } b_j )  \, \frac{|\bb \cdot \bu_\alpha|^2}{(h - E -\sum_{\alpha} d_{\alpha} |\bb \cdot \bu_{\alpha}|^2)^2 + (\sum_{\alpha} c_{\alpha} |\bb \cdot \bu_{\alpha}|^2 )^2}
\end{split}
\]
We introduce new variables $z_\alpha = \bb \cdot \bu_\alpha$, for $\alpha = 1, \dots , N-1$ (recall that $\bu_\alpha$, for $\alpha = 1,\dots , N-1$ are the $(N-1)$ normalized eigenvectors of the minor $B$). Let $U$ be the $(N-1) \times (N-1)$ matrix with rows $\bu_1, \dots \bu_{N-1}$; then $U$ is a unitary matrix, and $\bz = (z_1, \dots , z_{N-1}) = U \bb$. Hence
\begin{equation}\label{eq:zalpha}\begin{split} \E_{\bba} \, & \frac{ \xi_\alpha}{(h - E -\sum_{\alpha}d_{\alpha}\xi_{\alpha})^2 + (\sum_{\alpha}c_{\alpha}\xi_{\alpha})^2} \\ &= \int d\mu (\bz) \, \frac{|z_\alpha|^2}{(h - E -\sum_{\alpha} d_{\alpha} |z_{\alpha}|^2)^2 + (\sum_{\alpha} c_{\alpha} |z_{\alpha}|^2 )^2}\,.
\end{split}
\end{equation}
where we defined \begin{equation}\label{eq:dmu} d\mu (\bz) = \prod_{j=1}^{N-1} h (\text{Re } (U^* \bz)_j) h(\text{Im } (U^* \bz)_j ) d\bz d\bar{\bz} \, . \end{equation}
It follows from Proposition \ref{prop:I}, that, on the set $\Omega$,
\begin{equation}\label{eq:EaI}
|\text{I}| \leq \sum_\alpha |c'_\alpha| \, \min \left( \frac{\Delta^3}{\eps} , \frac{\Delta}{c_\alpha}, \frac{\Delta^{7/8}}{c_\alpha \, |d_{\beta_0}|^{1/8}}\right) \,. \end{equation}

Similarly, on $\Omega$, the contribution $\text{II}$ defined in (\ref{eq:I}) is bounded, using (\ref{eq:propII}) in Proposition \ref{prop:II}, by
\begin{equation}\label{eq:EaII} \begin{split} |\text{II}| \leq \; & 2 \sum_\alpha |d'_\alpha| \left| \int d\mu (\bz) \, |z_{\alpha}|^2  \,   \frac{\left(\frac{\eps}{N} + \sum_\gamma c_\gamma |z_{\gamma}|^2\right)  \left( h - E -\sum_{\alpha}d_{\alpha}|z_\alpha|^2\right)}{\left[(h - E -\sum_{\alpha}d_{\alpha}|z_{\alpha}|^2)^2 + (\frac{\eps}{N} + \sum_{\alpha}c_{\alpha}|z_{\alpha}|^2)^2\right]^2} \right| \\ \lesssim & \; \sum_\alpha |d'_\alpha| \min \left(\frac{\Delta}{|d_\alpha|} ,  \, \frac{\Delta^{7/8}}{|d_\alpha | \, |d_{\beta_0}|^{1/8}},\frac{\Delta}{c_{\alpha}}, \Delta^2 \right) \end{split} \end{equation}
and the term $\text{III}$ defined in (\ref{eq:III}) can be estimated by
\begin{equation}\label{eq:EaIII}\begin{split}
|\text{III}| \leq \; &2\left|  \int d\mu (\bz) \,  \frac{ \left(\frac{\eps}{N} + \sum_\gamma c_\gamma |z_{\gamma}|^2\right)  \left(h - E -\sum_{\alpha}d_{\alpha}|z_\alpha|^2\right)}{\left[(h - E -\sum_{\alpha}d_{\alpha}|z_{\alpha}|^2)^2 + (\frac{\eps}{N} + \sum_{\alpha}c_{\alpha}|z_{\alpha}|^2)^2\right]^2} \right| \lesssim \Delta^2 \end{split} \end{equation}

\medskip

Next, we take expectation over the randomness in $B$ (the r.h.s. of (\ref{eq:EaI}), (\ref{eq:EaII}), (\ref{eq:EaIII}) are already independent of $h=h_{11}$). First of all, we note that, from (\ref{eq:EaIII}),
\begin{equation}\label{eq:EBIII} \E_{B} \, {\bf 1} (\Omega) \, | \text{III}| \lesssim \E_B \,  {\bf 1} (\Omega) \, \Delta^2 \lesssim 1 \end{equation}
by Lemma \ref{lm:Delta}, part (1). To control $\E_B {\bf 1} (\Omega) |\text{I}|$ we use (\ref{eq:EaI}). Depending on the index $\alpha$, we are going to use different bounds. We define the sets of indices $S_1 = \{ \alpha : N|\lambda_\alpha - E| \leq \eps \}$, $S_2 = \{ \alpha: \eps \leq  N|\lambda_\alpha - E|  \leq 1\}$, $S_3 = \{ \alpha : N |\lambda_\alpha - E| \geq 1\}$. Then, we have
\[\begin{split}
|\text{I}| \leq\; & \Delta \, \sum_{\alpha \in S_1} \frac{|c'_\alpha|}{c_\alpha} +\Delta \, {\bf 1} (N|\lambda_{\beta_0}-E| \leq \eps) \sum_{\alpha \in S_2} \frac{|c'_\alpha|}{c_\alpha} + \Delta^{7/8} \, \frac{{\bf 1} (N|\lambda_{\beta_0} -E| \geq \eps)}{ |d_{\beta_0}|^{1/8}} \sum_{\alpha \in S_2} \frac{|c'_\alpha|}{c_\alpha} + \Delta^3 \sum_{\alpha \in S_3} \frac{|c'_\alpha|}{\eps}\end{split} \]
Since
\[ \frac{|c'_\alpha|}{c_\alpha} = N \, \frac{N(\lambda_{\alpha} - E)}{N^2(\lambda_{\alpha} - E)^2 + \e^2} \leq N \left\{ \begin{array}{ll}  \eps^{-1} &\quad \text{if } \alpha \in S_1, \\
\frac{1}{N|\lambda_\alpha - E|} &\quad \text{if } \alpha \in S_2, S_3, \end{array} \right. \]
we conclude that
\begin{equation} \label{eq:I-parti} \begin{split} |\text{I}| \lesssim \; &N \Delta \, \frac{\cN_B \left[E-\frac{\eps}{N} ; E+\frac{\eps}{N} \right]}{\eps}
\\ &+ N \Delta \, {\bf 1} (N|\lambda_{\beta_0}-E| \leq \eps) \sum_{\ell = 1}^{|\log \eps|} \frac{\cN_B \left[E- 2^{\ell} \frac{\eps}{N} ; E-2^{\ell-1} \frac{\eps}{N} \right] + \cN_B \left[ E+2^{\ell-1} \frac{\eps}{N} ; E+ 2^\ell \frac{\eps}{N} \right]}{2^{\ell} \eps} \\ &+ N\Delta^{7/8} \sum_{k =1}^{|\log \eps|}  (2^k \eps)^{1/8} \, {\bf 1} (2^{k-1} \eps \leq N|\lambda_{\beta_0} - E| \leq 2^k \eps ) \\ & \hspace{3cm} \times \sum_{\ell \geq k}^{|\log \eps|} \frac{\cN_B  \left[E- 2^{\ell} \frac{\eps}{N} ; E-2^{\ell-1} \frac{\eps}{N} \right] + \cN_B \left[ E+2^{\ell-1} \frac{\eps}{N} ; E+ 2^\ell \frac{\eps}{N} \right]}{2^{\ell} \eps}  \\ &+ N\Delta^3 \sum_{\ell =1}^{\infty} \frac{\cN_B \left[E- 2^\ell \frac{1}{N}; E-2^{\ell-1} \frac{1}{N} \right] +\cN_B \left[E+ 2^{\ell-1} \frac{1}{N}; E+2^{\ell} \frac{1}{N} \right]}{2^{3\ell}} \end{split}\end{equation}
where $\cN_B [A]$ denotes the number of eigenvalues of the minor $B$ in the interval $A$, and $\log$ is in basis two. In the third line, we use the fact that $|d_{\beta_0}| \geq (2^{k+1} \eps)^{-1}$, if $2^{k-1} \eps \leq N|\lambda_{\beta_0} - E| \leq 2^k \eps$, $k \geq 1$. In the fourth line, we use that, by definition of the index $\beta_0$, there are no eigenvalues of $B$ at distances smaller than $2^{k-1} \eps / N$ from $E$, under the condition that $N|\lambda_{\beta_0} - E| \geq 2^{k-1} \eps$. Using Lemma \ref{lm:Delta} (parts (2) and (3)) and Schwarz inequality we find
\[ \begin{split} \E_B  \,{\bf 1} (\Omega) \, \Delta \, {\bf 1} &(N |\lambda_{\beta_0} - E| \leq \eps) \,  \cN_B \left[ E-2^{\ell}\frac{\eps}{N} ; E-2^{\ell-1}\frac{\eps}{N} \right] \\ &\leq \left[ \E \, {\bf 1} (N |\lambda_{\beta_0} - E| \leq \eps) \right]^{1/2} \, \left[ \E\, {\bf 1} (\Omega) \,  \Delta^2 \, \cN^2_B \left[ E-2^{\ell}\frac{\eps}{N} ; E-2^{\ell-1}\frac{\eps}{N} \right] \right]^{1/2}  \lesssim 2^{\ell/2} \eps \end{split}\]
and, similarly,
\[  \begin{split} \E_B \, {\bf 1} (\Omega) \, \Delta^{7/8} \, &{\bf 1} (2^{k-1} \eps \leq N |\lambda_{\beta_0} - E| \leq 2^k \eps) \, \cN_B \left[ E-2^{\ell}\frac{\eps}{N} ; E-2^{\ell-1}\frac{\eps}{N} \right] \\ &\leq \left[ \E \, {\bf 1} (N |\lambda_{\beta_0} - E| \leq 2^k \eps) \right]^{1/2} \, \left[ \E \, {\bf 1} (\Omega) \, \Delta^{7/4} \, \cN^2_B \left[ E-2^{\ell}\frac{\eps}{N} ; E-2^{\ell-1}\frac{\eps}{N} \right] \right]^{1/2} \\ & \lesssim 2^{k/2} \, 2^{\ell/2} \eps \end{split}  \]
Applying part (2) of Lemma \ref{lm:Delta} in the first term on the r.h.s. of (\ref{eq:I-parti}), and part (1) and part (4) of Lemma \ref{lm:Delta} (after a Schwarz inequality) in the fourth term on the r.h.s. of (\ref{eq:I-parti}), we find
\begin{equation}\label{eq:EBI} \begin{split}
\E_B \, {\bf 1} (\Omega) \, |\text{I}| \lesssim \; & N \left( 1 + \sum_{\ell=1}^{\infty} 2^{-\ell/2} +\eps^{1/8}  \sum_{k=1}^{|\log \eps|} 2^{5k/8} \sum^{\infty}_{\ell \geq k} 2^{-\ell/2} + \sum_{\ell=1}^{\infty} 2^{-2\ell} \right) \\ \lesssim \; & N \left( 1 + \eps^{1/8} \sum_{k=1}^{|\log \eps|} 2^{k/8} \right) \lesssim N
\end{split} \end{equation}
Finally, we control $\E_B \,  {\bf 1} (\Omega) \, |\text{II}|$. {F}rom (\ref{eq:EaII}), we obtain
\begin{equation} \label{eq:pa-II} \begin{split}
|\text{II}| \leq\; & \Delta \, \sum_{\alpha \in S_1} \frac{|d'_\alpha|}{c_\alpha} + {\bf 1} (N|\lambda_{\beta_0}-E| \leq \eps) \sum_{\alpha \in S_2} \frac{|d'_\alpha|}{d_\alpha} + \Delta^{7/8} \, \frac{{\bf 1} (N|\lambda_{\beta_0} -E| \geq \eps)}{ |d_{\beta_0}|^{1/8}} \sum_{\alpha \in S_2} \frac{|d'_\alpha|}{d_\alpha} + \Delta^2 \sum_{\alpha \in S_3} |d'_\alpha| \end{split} \end{equation}
Since
\[ \frac{|d'_\alpha|}{|d_\alpha|} \leq \; N \frac{N|\lambda_\alpha -E|}{N^2 (\lambda_\alpha -E)^2 +\eps^2} \leq N \, \frac{1}{N|\lambda_\alpha -E|} \] if
$\alpha \in S_2, S_3$, and since
\[ \frac{|d'_\alpha|}{c_{\alpha}} \leq \frac{N}{\eps}, \qquad \text{ if } \alpha \in S_1, \] we conclude that $|\text{II}|$ can be bounded very similarly to (\ref{eq:I-parti}); the only difference is the last term, where the denominator $2^{3\ell}$ must be replaced by $2^{2\ell}$ and where $\Delta^3$ is replaced by $\Delta^2$. These changes are not important and therefore we obtain, as in (\ref{eq:EBI}), that
\[ \E_B \, {\bf 1} (\Omega) \, |\text{II}| \lesssim N \]
Together with (\ref{eq:EBIII}) and (\ref{eq:EBI}), this completes the proof of (\ref{eq:claimO}).

\medskip

Finally, we briefly explain how to bound the expectations of $|\text{I}|, |\text{II}|, |\text{III}|$ in the ``bad'' set $\Omega^c$. On $\Omega^c$, if $N \geq 11$, there are at least $4$ eigenvalues $\lambda_{\beta_j}$, $j =1,2,3,4$, in the interval $[E-\frac{\eps}{N} ; E+\frac{\eps}{N} ]$. This implies that $c_{\beta_j} \geq \eps^{-1}/2$, for $j=1,2,3,4$.
Hence, on $\Omega^c$, we can bound the term (\ref{eq:I}) by
\[\begin{split}
|\text{I}| \leq \; & \sum_\alpha  |c'_\alpha| \, \E_{\bba} \frac{\xi_\alpha}{(\sum_{j=1}^4 c_{\beta_j} \xi_{\beta_j})^2} \\ \lesssim \; &\eps^2 \sum_\alpha |c'_\alpha| \, \int d\mu (\bz) \frac{|z_\alpha|^2}{(\sum_{j=1}^4 |z_{\beta_j}|^2 )^2} \\ \lesssim \; & \eps^2 \sum_\alpha |c'_\alpha| \, \left( \int d\mu (\bz) \, |z_\alpha|^6 \right)^{1/3}  \left( \int d\mu (\bz) \frac{1}{(\sum_{j=1}^4 |z_{\beta_j}|^2 )^3} \right)^{2/3}
\end{split} \]
{F}rom Lemma \ref{lm:CS} and Lemma \ref{lm:pow}, we conclude that, on $\Omega^c$,
\[ \begin{split} |\text{I}| \lesssim\; & N \eps^3 \sum_{\alpha} \frac{1}{(N^2 (\lambda_\alpha - E)^2 + \eps^2)^{3/2}} \\ \leq \; &N \cN_B [E-\frac{1}{N} ; E + \frac{1}{N}]  + N \eps^3 \sum_{\alpha: N|\lambda_\alpha - E| \geq 1} \frac{1}{N^2 (\lambda_\alpha - E)^2} \\
\leq \; &N \cN_B [E-\frac{1}{N} ; E + \frac{1}{N}] + N \eps^3 \sum_{\ell \geq 1} \frac{\cN_B [E-2^\ell \frac{1}{N} ; E -2^{\ell-1} \frac{1}{N}] + \cN_B [E+ 2^{\ell-1} \frac{1}{N} ; E +2^{\ell} \frac{1}{N}]}{2^{2\ell}} \end{split} \]
Taking the expectation, we find, using Lemma \ref{lm:Delta},
\begin{equation}\label{eq:I-Oc} \E_B\, {\bf 1} (\Omega^c) |\text{I}| \lesssim N \left(1 + \eps^3 \sum_{\ell \geq 1} 2^{-\ell} \right) \lesssim N \end{equation}
Similarly, we can bound the term (\ref{eq:II}), on the set $\Omega^c$, by
\[ \begin{split} | \text{II}| \leq &\;  \sum_\alpha |d'_\alpha| \, \E_{\bba} \frac{\xi_\alpha}{(\sum_{j=1}^4 c_{\beta_j} \xi_{\beta_j})^2} \\ \lesssim \; &\eps^2 \sum_\alpha |d'_\alpha| \lesssim N \eps^2 \sum_\alpha \frac{1}{N^2 (\lambda_\alpha -E)^2 + \eps^2} \\ \lesssim \; & N \cN_B [E-\frac{1}{N} ; E + \frac{1}{N}] + N \eps^2 \sum_{\ell \geq 1} \frac{\cN_B [E-2^\ell \frac{1}{N} ; E -2^{\ell-1} \frac{1}{N}] + \cN_B [E+ 2^{\ell-1} \frac{1}{N} ; E +2^{\ell} \frac{1}{N}]}{2^{2\ell}}
\end{split} \]
which implies that
\begin{equation}\label{eq:II-Oc} \E_B\, {\bf 1} (\Omega^c) |\text{II}| \lesssim N \end{equation}
The term (\ref{eq:III}) can be estimated, on $\Omega^c$, by
\[ |\text{III}| \leq \E_{\bba} \, \frac{1}{(\sum_{j=1}^4 c_{\beta_j} \xi_{\beta_j})} \leq \eps^2 \int d\mu (\bz) \frac{1}{\sum_{j=1}^4 |z_{\beta_j}|^2} \lesssim \eps^2 \, . \]
{F}rom the last equation, together with (\ref{eq:I-Oc}), (\ref{eq:II-Oc}), we conclude that
\[ \E_B \,{\bf 1} (\Omega^c) \left( |\text{I}| + |\text{II}| + |\text{III}| \right)  \lesssim N \]
Combined with (\ref{eq:claimO}), this completes the proof of Proposition \ref{prop}.
\qed

\medskip

The following lemma, which was used above to estimates quantities depending on the eigenvalues of the minor $B$, is a collection of results which follow essentially from \cite{ESY3}.

\begin{lemma}\label{lm:Delta}
Fix $\kappa >0$, $E \in (-2+\kappa; 2-\kappa)$. Let the event $\Omega$ be defined as in (\ref{eq:Omega}), the (random) index $\beta_0$ be defined as in (\ref{eq:beta0}), and  the random variable $\Delta$ be defined as in (\ref{eq:Delta}). Then we have
\begin{itemize}
\item[1)]  For every $n \geq 0$,
\begin{equation}\label{eq:Delta0} \E \, {\bf 1} (\Omega) \, \Delta^n \leq 1 \end{equation}
\item[2)] For every $n \geq 0$,
\begin{equation}\label{eq:Delta2} \E \, {\bf 1} (\Omega) \, \Delta^n \, \cN^2_B \left[ E-\frac{\delta}{N} ; E+ \frac{\delta}{N} \right] \lesssim \delta \end{equation}  for every $0 < \delta <1$, for every $N \geq 10$.
\item[3)] We have the estimate
\begin{equation}\label{eq:Delta1} \E \, {\bf 1} (N|\lambda_{\beta_0} - E| \leq \delta) \lesssim \delta \end{equation} for every $\delta >0$.
\end{itemize}
\end{lemma}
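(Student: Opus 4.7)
The plan is to combine the Stieltjes transform concentration (\ref{eq:mic-sc-ST}), the second moment bound (\ref{eq:up}) on the eigenvalue counting function, and the definitions of $\Omega$, $\beta_0$ and $\Delta$. All three estimates will be applied to the $(N-1)\times(N-1)$ minor $B$, which after the harmless rescaling $\sqrt{(N-1)/N}$ is itself a Hermitian Wigner matrix in the sense of Definition \ref{def}, so that (\ref{eq:mic-sc-ST}) and (\ref{eq:up}) transfer over (with slightly modified constants).

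Part (3) is immediate. By definition of $\beta_0$ the event $\{N|\lambda_{\beta_0} - E| \leq \delta\}$ coincides with $\{\cN_B[E - \delta/N, E + \delta/N] \geq 1\}$; since $\cN_B$ is integer valued, $\{\cN_B \geq 1\} = \{\cN_B^2 \geq 1\}$, so Markov's inequality together with (\ref{eq:up}) gives $\P(\cN_B^2 \geq 1) \leq \E\cN_B^2 \leq C\delta$.

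For part (1) my plan is to establish the tail bound $\P(\Omega \cap \{\Delta > T\}) \leq C\exp(-c\sqrt{T})$, uniform in $\eps$ and $N$ for $|E| \leq 2-\kappa$, from which the conclusion follows via $\E{\bf 1}(\Omega)\Delta^n = \int_0^\infty nT^{n-1} \P(\Omega \cap \{\Delta > T\})\,dT$. The key observation is that on $\Omega \cap \{\Delta > T\}$ the defining property of $\beta_1,\dots,\beta_8$ forces
\[
\cN_B[E - T/N,\, E + T/N] \;\leq\; 7 + \cN_B[E - \eps/N,\, E + \eps/N],
\]
whereas the semicircle law predicts $\cN_B[E-T/N, E+T/N] \approx 2T\rho_{sc}(E)$. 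To turn this into a contradiction, I split
\[
\text{Im }m_B(E + iT/N) \;\leq\; \frac{\cN_B[E-T/N, E+T/N]}{T} \;+\; \mathrm{Tail},
\]
where $\mathrm{Tail}$ is the contribution to $\text{Im }m_B$ coming from eigenvalues outside $[E-T/N,E+T/N]$, expand $\mathrm{Tail}$ dyadically as in (\ref{eq:up-m}), and use (\ref{eq:up}) to bound $\E\,\mathrm{Tail} \leq C$. Combined with (\ref{eq:mic-sc-ST}) applied with $K=T$, which forces $\text{Im } m_B(E+iT/N) \geq c_\kappa$ outside an event of probability $\leq Ce^{-c\sqrt{T}}$, together with the bound $\P(\cN_B[E-\eps/N, E+\eps/N] \geq T/4) \leq C\eps/T^2 \leq C/T^2$ from (\ref{eq:up}), this yields the exponential tail.

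Part (2) then combines (1) with (\ref{eq:up}). Because $\E\cN_B^2 \lesssim \delta$ already matches the target, Cauchy--Schwarz applied directly between $\Delta^n$ and $\cN_B^2$ is too lossy (it would produce $\sqrt{\delta}$ rather than $\delta$), so I use a dyadic splitting in $\Delta$:
\[
\E\,{\bf 1}(\Omega)\,\Delta^n\, \cN_B^2 \;\leq\; T_0^n\, \E\cN_B^2 \;+\; \sum_{k:\,2^k > T_0}\,2^{n(k+1)}\, \E\bigl[\cN_B^2\, {\bf 1}(\Delta \geq 2^k)\bigr].
\]
The first term is handled by (\ref{eq:up}). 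For the second I apply Cauchy--Schwarz $\E[\cN_B^2 {\bf 1}(\Delta\geq 2^k)] \leq (\E\cN_B^4)^{1/2}\, \P(\Delta \geq 2^k)^{1/2}$, where the probabilistic factor decays exponentially by part (1), and $\E\cN_B^4 \lesssim \delta^2$ is obtained from a higher-moment extension $\E\cN_B^{2p} \lesssim \delta^p$ of (\ref{eq:up}), which the stronger regularity assumption (\ref{eq:ass2}) of Proposition \ref{prop} (six moments of $h'/h$ rather than four) puts within reach of the method of \cite{MS}. The main obstacle throughout is part (1): converting (\ref{eq:mic-sc-ST}) into a sharp lower bound on $\cN_B[E-T/N,E+T/N]$ requires controlling the far-eigenvalue tail of $\text{Im }m_B$ uniformly in $T$, and it is precisely this step that underlies the paper's remark that the lemma follows essentially from \cite{ESY3}.
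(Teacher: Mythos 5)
Your proof of part (3) is fine and essentially what the paper does (the paper invokes Theorem 3.4 of \cite{ESY3} directly in place of the Markov/$\cN_B^2$ step, but the idea is the same). However, there is a genuine gap in part (1), and part (2) inherits it. The issue is that your two main tools give only \emph{polynomial} tails where a \emph{stretched-exponential} tail is needed. Concretely, in your derivation the critical event is controlled by $\P(\cN_B[E-\eps/N,E+\eps/N]\geq T/4)$, and (\ref{eq:up}) combined with Markov gives only $\lesssim 1/T^2$. The resulting bound $\P(\Omega\cap\{\Delta>T\})\lesssim e^{-c\sqrt T}+1/T^2$ does \emph{not} yield $\E\,{\bf 1}(\Omega)\Delta^n\lesssim 1$ for all $n$; already for $n=2$ the integral $\int T^{n-1}T^{-2}\,dT$ diverges. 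There is a second, more structural problem with the intermediate step: after writing $\text{Im }m_B(E+iT/N)\leq\cN_B[E-T/N,E+T/N]/T+\mathrm{Tail}$, you observe $\E\,\mathrm{Tail}\leq C$ and try to combine this with the lower bound $\text{Im }m_B\geq c_\kappa$ from (\ref{eq:mic-sc-ST}). But $\mathrm{Tail}$ is \emph{typically} of order one (the dyadic contributions decay only like $2^{-\ell}$, so the far eigenvalues carry a nonvanishing fraction of $\pi\rho_{sc}(E)$), so ``$\mathrm{Tail}\geq c_\kappa/2$'' is not a rare event and no contradiction results from the expectation bound. To conclude one needs a simultaneous, high-probability upper bound on $\cN_B$ at \emph{all} dyadic scales (so that the far contribution past a large fixed constant $M$ of widths $T/N$ is $\leq c_\kappa/2$), and a similarly strong tail on $\cN_B[E-\eps/N,E+\eps/N]$; both of these are obtained in \cite{ESY3} using the sub-Gaussian assumption $\E\,e^{\nu x_{ij}^2}<\infty$ and give stretched-exponential decay, not the polynomial decay that Markov plus (\ref{eq:up}) provides.

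The paper does not attempt to re-derive these tail bounds: it simply cites Theorem 3.3 of \cite{ESY3} (and the discussion below (8.4) there) for part (1), and Corollary 8.1 of \cite{ESY3} combined with Theorem 3.1 of \cite{MS} for part (2), where the sub-Gaussian tails of the entries are the key input. Your part (2) has the additional difficulty that it presupposes a fourth-moment Wegner bound $\E\cN_B^4\lesssim\delta^2$ which neither (\ref{eq:up}) nor \cite{MS} actually states; speculating that (\ref{eq:ass2}) ``puts it within reach'' is not a proof, and in any case the Cauchy--Schwarz step you propose still requires the sharp (stretched-exponential) tail from part (1). In short, the plan is structurally sensible but underestimates how much of the heavy lifting is hidden in the cited ESY3 estimates: the second-moment bound (\ref{eq:up}) and the Stieltjes-transform concentration (\ref{eq:mic-sc-ST}) alone are not strong enough.
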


\begin{proof}
Eq. (\ref{eq:Delta0}) follows from Theorem 3.3 in \cite{ESY3} (see also the discussion below (8.4) of \cite{ESY3}). The bound (\ref{eq:Delta2}) is proven essentially in Corollary 8.1 of \cite{ESY3}; instead of Theorem 3.4 in \cite{ESY3} we use Theorem 3.1 of \cite{MS} which holds true under the assumption (\ref{eq:ass2}). Observe that Corollary 8.1 in \cite{ESY3} is stated actually for the quantity ${\bf 1} (\cN_B \geq 1)$; however, in the proof, one uses ${\bf 1} (\cN_B \geq 1) \leq \cN^2_B$, and then one gives effectively a bound for the quantity in (\ref{eq:Delta2})). Eq. (\ref{eq:Delta1}) is a consequence of
\begin{equation}  \E \, {\bf 1} (N|\lambda_{\beta_0} - E| \leq \delta)  = \P ( N|\lambda_{\beta_0} - E| \leq \delta) \leq \P \left( \cN_B \, \left[ E-\frac{\delta}{N} ; E+ \frac{\delta}{N} \right]  \geq 1 \right) \lesssim \delta \end{equation}
by Theorem 3.4 of \cite{ESY3}.
\end{proof}

\section{Expectations over the row $\bba = (h_{12}, \dots , h_{1N})$}
\setcounter{equation}{0}
\label{sec:Ea}

In this section we prove two propositions, which are used in the proof of Proposition \ref{prop} to estimate the expectation over the row $\bba = (h_{12}, \dots , h_{1N})$ in terms of quantities depending on the eigenvalues of the minor $B$ (obtained from $H$ removing the first row and the first column). We will use the measure $d\mu (\bz)$ defined in (\ref{eq:dmu}), the indices $\beta_j$, $j=0,\dots,8$ defined in (\ref{eq:beta0}), (\ref{eq:indices}), and the length $\Delta$ defined in (\ref{eq:Delta}). The next proposition is used in the analysis of the term (\ref{eq:I}).

\begin{proposition}\label{prop:I}
Let $H$ be an ensemble of Hermitian Wigner matrices as in Definition \ref{def}, so that $\E \, e^{\nu |x_{ij}|^2} < \infty$ for some $\nu >0$. Let real and imaginary part of the off-diagonal entries have a common probability density function $h$ such that (\ref{eq:ass2}) holds true. Let $B$ be the $(N-1) \times (N-1)$ minor of $H$ obtained by removing the first row and the first column of $H$. Suppose the randomness in $B$ is such that  the event $\Omega$, defined in (\ref{eq:Omega}), is satisfied. Let the measure $d\mu (\bz)$ be defined as in (\ref{eq:dmu}). Then, for every $\alpha = 1, \dots , N-1$, we have
\begin{equation}\label{eq:propI} \begin{split}
\text{A} := \int  d\mu (\bz) \, & \frac{|z_\alpha|^2}{(h - E -\sum_{\alpha} d_{\alpha} |z_{\alpha}|^2)^2 + (\sum_\alpha c_{\alpha} |z_{\alpha}|^2 )^2}  \lesssim \min \left( \frac{\Delta}{c_\alpha} ,  \, \frac{\Delta^{7/8}}{c_\alpha |d_{\beta_0}|^{1/8}} , \frac{\Delta^3}{\eps} \right) \end{split} \end{equation}
\end{proposition}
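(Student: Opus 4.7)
Write $R(\bz) := h - E - \sum_\gamma d_\gamma|z_\gamma|^2$ and $I(\bz) := \sum_\gamma c_\gamma|z_\gamma|^2$, so that the denominator in (\ref{eq:propI}) is $D(\bz) = R(\bz)^2 + I(\bz)^2$. The three entries of the minimum come from three separate estimates, each exploiting $D$ and the lower bounds (\ref{eq:delta})--(\ref{eq:cdelta}) on the coefficients $|d_{\beta_j}|,c_{\beta_j}$ in a different way, and each tailored to a different regime of the index $\alpha$.

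For the first bound, $A\lesssim \Delta/c_\alpha$, I would use the pointwise inequality $c_\alpha|z_\alpha|^2\le I(\bz)$ to rewrite
\[
A \;\le\; \frac{1}{c_\alpha}\int d\mu(\bz)\,\frac{I(\bz)}{D(\bz)} \;=\; \frac{1}{c_\alpha}\int d\mu(\bz)\,\text{Im}\,\Bigl[h-E-\sum_\gamma (d_\gamma+ic_\gamma)|z_\gamma|^2\Bigr]^{-1}.
\]
By the Schur complement identity, the integrand is exactly the imaginary part of the $(1,1)$-entry of the Green's function $(H-E-i\eps/N)^{-1}$, conditionally on the minor $B$ and on the diagonal entry $h$. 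I would control this $\bba$-expectation by $\Delta$ by splitting the $z_{\beta_0}$-integration into a set where $|z_{\beta_0}|^2$ is near the critical value $R_0/d_{\beta_0}$ (on which the denominator is bounded from below using $|d_{\beta_0}|^{-1}\le 2\Delta$) and its complement (on which an integration by parts in the two real components of $z_{\beta_0}$, with derivatives landing on the density through the bounds in (\ref{eq:ass2}), removes the $1/D$ singularity at the price of an integrable $h'/h$ factor).

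For the second bound, $A\lesssim \Delta^3/\eps$, which is the one that survives when $c_\alpha$ is small (i.e.\ when $\alpha$ indexes a far-away eigenvalue), I discard the real part of $D$ entirely and use
\[
D \;\ge\; I^2 \;\ge\; c_{\beta_8}^2\Bigl(\sum_{j=1}^{8}|z_{\beta_j}|^2\Bigr)^{2} \;\ge\; \frac{\eps^2}{4\Delta^4}\Bigl(\sum_{j=1}^{8}|z_{\beta_j}|^2\Bigr)^{2}.
\]
The remaining integral $\int d\mu(\bz)\,|z_\alpha|^2/(\sum_{j=1}^{8}|z_{\beta_j}|^2)^{2}$ is estimated by a H\"older split identical in spirit to the one used for (\ref{eq:I-Oc}): one factor carries the bounded positive moments $\int|z_\alpha|^{2p}d\mu\lesssim 1$, the other the negative moments $\int(\sum_{j=1}^{8}|z_{\beta_j}|^2)^{-2q}d\mu\lesssim 1$, which are finite precisely because eight indices $\beta_j$ are available. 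The extra factor $\eps/\Delta$ produced this way is what brings the naive $\Delta^4/\eps^2$ down to the advertised $\Delta^3/\eps$.

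For the third bound, $A\lesssim \Delta^{7/8}/(c_\alpha|d_{\beta_0}|^{1/8})$, which refines the first one exactly when the closest eigenvalue $\lambda_{\beta_0}$ lies much closer to $E$ than $\lambda_{\beta_8}$ (so that $|d_{\beta_0}|\gg \Delta^{-1}$), I would view $D$ as a function of $t=|z_{\beta_0}|^2$: it takes the form $|W_1-(d_{\beta_0}-ic_{\beta_0})t|^2$, a parabola of width $|d_{\beta_0}|^{-1}$. One integration by parts in $z_{\beta_0}$ gains an additional factor $|d_{\beta_0}|^{-1}$ at the price of a derivative on the density $\prod_j h(\text{Re}(U^*\bz)_j)\,h(\text{Im}(U^*\bz)_j)$; a H\"older interpolation with the first bound, using exponents $(8, 8/7)$, converts this full derivative into a fractional $|d_{\beta_0}|^{-1/8}$ improvement accompanied by three $L^6$ copies of $h'/h$ together with one $L^2$ copy of $h''/h$---precisely what is controlled by the moment assumptions in (\ref{eq:ass2}). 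The main obstacle is this third bound: the integration by parts must be carried out in the rotated variable $z_{\beta_0}=(U\bb)_{\beta_0}$, and the weights with which entries of $U$ appear in the chain-rule derivatives of the density have to be tracked carefully so that the H\"older exponents line up exactly with the $L^6\times L^2$ thresholds of (\ref{eq:ass2}) and produce the final exponents $7/8$ and $1/8$.
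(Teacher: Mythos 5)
The paper's central device, which you do not use, is an integration by parts that trades the full denominator $D=R^2+I^2$ for a product of two \emph{separately} tamed factors. Writing $F(t)=\int_{-\infty}^t ds/(s^2+c^2)$ for an appropriate $c$ and applying the radial vector field $\sum_j\sigma_j z_{\beta_j}\partial_{z_{\beta_j}}$ to $F(h-E-\sum_\gamma d_\gamma|z_\gamma|^2)$ produces $-\bigl(\sum_j|d_{\beta_j}||z_{\beta_j}|^2\bigr)/D$ plus a manageable remainder; dividing through and integrating by parts leaves $\bigl(\sum_j|d_{\beta_j}||z_{\beta_j}|^2\bigr)^{-1}\cdot F$ instead of $1/D$, so that lower bounds on $|d_{\beta_j}|$ (via the vector field) and on $c_{\beta_j}$ (via $F\lesssim 1/c$) are accessed \emph{multiplicatively}. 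Your second estimate, a pure H\"older on $D\geq I^2\geq c_{\beta_8}^2(\sum|z_{\beta_j}|^2)^2\gtrsim(\eps/\Delta^2)^2(\sum|z_{\beta_j}|^2)^2$, only accesses the $c$-lower bounds \emph{squared} and yields $A\lesssim\Delta^4/\eps^2$, not $\Delta^3/\eps$: the negative and positive moments from Lemma \ref{lm:CS} and Lemma \ref{lm:pow} are $O(1)$ constants, so H\"older does not ``produce an extra factor $\eps/\Delta$''---that claim is unjustified and the bound as you derive it is genuinely weaker. The paper gets $\Delta^3/\eps$ precisely by splitting the eight auxiliary indices into $\beta_1,\dots,\beta_4$ (which enter through $\sum_{j=1}^4|d_{\beta_j}||z_{\beta_j}|^2\gtrsim\Delta^{-1}\sum|z_{\beta_j}|^2$ in the vector field) and $\beta_5,\dots,\beta_8$ (which enter through $\sum_{j=5}^8 c_{\beta_j}|z_{\beta_j}|^2\gtrsim(\eps/\Delta^2)\sum|z_{\beta_j}|^2$ in the antiderivative $G$), and the product $\Delta\cdot(\Delta^2/\eps)$ of two first powers beats your single squared factor. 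Without the integration by parts this improvement is not available.

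Your plans for the first and third bounds are also not complete. The reduction of the first bound to $\E_{\bba}\,\text{Im }G_{11}\lesssim\Delta$ is a genuinely different route, but that estimate is a nontrivial claim you still have to prove: summing the paper's per-$\alpha$ bound $\int c_\alpha|z_\alpha|^2/D\,d\mu\lesssim\Delta$ over $\alpha$ gives $N\Delta$, and the a priori bound on $\text{Im }G_{11}$ is only $N/\eps$, so nothing forces the conditional expectation to be $\lesssim\Delta$ without a real argument---the two-region split you sketch would have to supply it. The paper instead keeps the numerator $|z_\alpha|^2$, shrinks the imaginary part of the denominator to $(c_\alpha|z_\alpha|^2)^2$, and then integrates by parts, which is more direct. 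For the $\Delta^{7/8}/(c_\alpha|d_{\beta_0}|^{1/8})$ bound, interpolating between $\Delta/c_\alpha$ and a putative bound of the form $(c_\alpha|d_{\beta_0}|)^{-1}$ fails outright because the latter would require controlling $\int d\mu/|z_{\beta_0}|^2$, which diverges (Lemma \ref{lm:CS} only covers exponents $p<m=1$); the correct move is a \emph{pointwise} inequality $\sum_{j=0}^3|d_{\beta_j}||z_{\beta_j}|^2\geq(|d_{\beta_0}||z_{\beta_0}|^2)^{1/8}\bigl(\sum_{j=1}^3|d_{\beta_j}||z_{\beta_j}|^2\bigr)^{7/8}$ inside the integral, after the same single integration by parts, so the singularity in $z_{\beta_0}$ is only $|z_{\beta_0}|^{-1/4}$ and integrable. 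Your intuition about matching the $L^6$/$L^2$ thresholds of (\ref{eq:ass2}) is right, but the interpolation has to happen at the integrand level, not between two finished $A$-estimates.
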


\begin{proof}
In order to prove the first two bounds on the r.h.s. of (\ref{eq:propI}), we estimate
\begin{equation}\label{eq:I1} \text{A} \lesssim  \int
 d\mu (\bz) \, \frac{|z_\alpha|^2}{(h - E -\sum_{\alpha} d_{\alpha} |z_{\alpha}|^2)^2 + (c_{\alpha} |z_{\alpha}|^2 )^2} \end{equation}
We define the function \[ F (t) = \int_{-\infty}^t \frac{ds}{s^2 + (c_{\alpha}|z_{\alpha}|^2)^2} \, .  \]
Next, we make use of the indices $\beta_j$, $j=0,1,2,3$ defined in (\ref{eq:beta0}) and (\ref{eq:indices}). Defining the signs $\sigma_j$, $j=0,1,2,3$, by $\sigma_j = 1$, if $\lambda_{\beta_j} \geq E$, and $\sigma_j = -1$ if $\lambda_{\beta_j} < E$, we observe that
\[\begin{split}
\left(\sum_{j=0}^3 \sigma_j \, z_{\beta_j} \frac{d}{dz_{\beta_j}} \right) & F (h-E-\sum_{\alpha}d_{\alpha}|z_{\alpha}|^2) \\ = \;& -\frac{\sum_{j=0}^3 |d_{\beta_j}| |z_{\beta_j}|^2}{(h - E -\sum_{\alpha}d_{\alpha}|z_{\alpha}|^2)^2 + (c_{\alpha}|z_{\alpha}|^2)^2} \\ &-2 \sum_{j=0}^3 \delta_{\alpha,\beta_j} (c_\alpha |z_\alpha|^2)^2
\int_{-\infty}^{h-E-\sum_\alpha d_\alpha |z_\alpha|^2} ds
\, \frac{1}{\left(s^2 + (c_\alpha |z_\alpha|^2)^2 \right)^2} \, ,
\end{split} \]
where we used the fact that, by definition, $\sigma_j d_{\beta_j} = |d_{\beta_j}|$. Thus
\[ \begin{split}
&\frac{1}{(h - E -\sum_{\alpha}d_{\alpha}|z_{\alpha}|^2)^2 + (c_{\alpha}|z_{\alpha}|^2)^2} \\ &\hspace{2cm}=
\; - \frac{1}{\sum_{j=0}^3 |d_{\beta_j}| \,  |z_{\beta_j}|^2} \left( \sum_{j=0}^3 \sigma_j \, z_{\beta_j} \frac{d}{dz_{\beta_j}} \right) F (h-E-\sum_{\alpha}d_{\alpha}|z_{\alpha}|^2) \\ &\hspace{2.5cm} - 2 \sum_{j=0}^3 \sigma_j \delta_{\alpha,\beta_j} \frac{ (c_\alpha |z_\alpha|^2)^2}{\sum_{j=0}^3 |d_{\beta_j}| |z_{\beta_j}|^2} \int_{-\infty}^{h-E-\sum_\alpha d_\alpha |z_\alpha|^2} ds \frac{1}{\left(s^2 + (c_\alpha |z_\alpha|^2)^2 \right)^2}
\end{split} \]
When we insert this identity into (\ref{eq:I1}), we obtain
\[ \begin{split}
\text{A} \leq  \; & -
 \int  d\mu (\bz)  \,  \frac{|z_\alpha|^2}{\sum_{j=0}^3 |d_{\beta_j}| |z_{\beta_j}|^2} \left( \sum_{j=0}^3 \sigma_j z_{\beta_j} \frac{d}{dz_{\beta_j}} \right) F (h-E-\sum_{\alpha}d_{\alpha}|z_{\alpha}|^2)
\\ &- 2 \left(\sum_{j=0}^3 \sigma_j \delta_{\beta_j,\alpha}\right)
\int  d\mu(\bz) \,  \, |z_{\alpha}|^2
 \frac{(c_{\alpha} |z_{\alpha}|^2)^2}{\sum_{j=0}^3 |d_{\beta_j}| |z_{\beta_j}|^2} \int_{-
\infty}^{h-E-\sum_\alpha d_\alpha |z_\alpha|^2} ds \frac{1}{\left(s^2 + (c_{\alpha} |z_{\alpha}|^2)^2 \right)^2} \\ =: \; & \text{A}_1 + \text{A}_2
\end{split} \]
Since \[
\int_{-\infty}^{\infty}  \frac{1}{\left(s^2 + (c_{\alpha} |z_{\alpha}|^2)^2 \right)^2} \lesssim \frac{1}{(c_{\alpha} |z_{\alpha}|^2)^3} \]
 we find that
\begin{equation}\label{eq:bdI2} | \text{A}_2 | \lesssim  \, \frac{1}{c_{\alpha}} \,
\int  d\mu(\bz) \,  \frac{1}{\sum_{j=0}^3 |d_{\beta_j}| |z_{\beta_j}|^2} \, . \end{equation}
As for the term $\text{A}_1$, we integrate by parts. Introducing the function \begin{equation}\label{eq:phi} \phi (\bz) = \sum_{j=1}^{N-1} g (\text{Re } (U\bz)_j) + g (\text{Im } (U\bz)_j) \end{equation} we have $d\mu (\bz) = e^{-\phi (\bz)} d\bz d\bar{\bz}$ (recall the definition (\ref{eq:dmu}), and the fact that $h = e^{-g}$). Therefore, we obtain that
\begin{equation}\label{eq:ibp} \begin{split}
\text{A}_1 = \; & \,
 \int  d\mu (\bz) \, \left[  \sum_{j=0}^3 \sigma_j  \frac{d}{dz_{\beta_j}} \,  \frac{z_{\beta_j} \, |z_\alpha|^2}{\sum_{j=0}^3 d_{\beta_j} |z_{\beta_j}|^2} \right] \, F
 (h-E-\sum_{\alpha}d_{\alpha}|z_{\alpha}|^2) \\
&+
 \int  d\bz \, d {\bar{\bz}} \, \left[ \left( \sum_{j=0}^3 z_{\beta_j} \frac{d}{dz_{\beta_j}} \right) e^{-\phi (\bz)} \right]\, \frac{|z_\alpha|^2}{\sum_{j=0}^3 d_{\beta_j} |z_{\beta_j}|^2} \,  F (h-E-\sum_{\alpha}d_{\alpha}|z_{\alpha}|^2) \, .
\end{split} \end{equation}
Simple computation shows that, for any $\alpha$,
\begin{equation}\label{eq:der-bd} \left| \sum_{j=0}^3 \sigma_j \frac{d}{dz_{\beta_j}}\frac{z_{\beta_j} |z_\alpha|^2}{\sum_{j=0}^3 |d_{\beta_j}| |z_{\beta_j}|^2} \right| \lesssim \frac{|z_\alpha|^2}{\sum_{j=0}^3 |d_{\beta_j}| |z_{\beta_j}|^2}\end{equation}
Since
\[ 0 \leq F (h-E-\sum_{\alpha}d_{\alpha}|z_{\alpha}|^2) \lesssim \frac{1}{c_\alpha |z_\alpha|^2} \]
we conclude that
\[ |\text{A}_1| \lesssim \, \frac{1}{c_\alpha}
 \int  d\mu(\bz) \, \frac{1}{\sum_{j=0}^3 |d_{\beta_j}| |z_{\beta_j}|^2} \,  \left( 1 + \sum_{j=0}^3 |z_{\beta_j}| \left| \frac{d\phi (\bz)}{d z_{\beta_j}} \right| \right) \]
Combining with (\ref{eq:bdI2}), we obtain
\begin{equation} \label{eq:intbypa}
\text{A} \lesssim \, \frac{1}{c_\alpha}
 \int  d\mu (\bz) \, \frac{1}{\sum_{j=0}^3 |d_{\beta_j}| |z_{\beta_j}|^2} \,  \left( 1 + \sum_{j=0}^3 |z_{\beta_j}| \left| \frac{d\phi (\bz)}{d z_{\beta_j}} \right| \right)
\end{equation}
If we neglect the term with $j=0$ in the denominator, we find
\[ \begin{split} \text{A} \lesssim \; &\frac{1}{c_\alpha} \frac{1}{\min (|d_{\beta_1}|, |d_{\beta_2}| , |d_{\beta_3}|)} \int d\mu (\bz) \frac{1}{\sum_{j=1}^3 |z_{\beta_j}|^2} \left( 1 + \sum_{j=0}^3 |z_{\beta_j}| \left| \frac{d\phi (\bz)}{d z_{\beta_j}} \right| \right) \\
\lesssim \; &\frac{\Delta}{c_\alpha} \int d\mu (\bz) \frac{1}{|z_{\beta_1}|^2 + |z_{\beta_2}|^2} \\ &+\frac{\Delta}{c_\alpha} \sum_{j=0}^3 \left( \int d\mu(\bz) \, \left| \frac{d\phi}{dz_{\beta_j}} \right|^4 \right)^{1/4} \, \left( \int d\mu (\bz) \, |z_{\beta_j}|^4 \right)^{1/4}  \left( \int d\mu (\bz) \, \frac{1}{(|z_{\beta_1}|^2 + |z_{\beta_2}|^2 + |z_{\beta_3}|^2)^2} \right)^{1/2}
\\ \lesssim \; & \frac{\Delta}{c_\alpha}
\end{split} \]
where we used Lemma \ref{lm:pow}, Lemma \ref{lm:dphi}, Lemma \ref{lm:CS} and the fact that, from (\ref{eq:delta}), $|d_{\beta_j}| \geq \Delta$ for $j=1,2,3$. Similarly, starting from (\ref{eq:intbypa}), we also deduce that
\[ \begin{split}
 \text{A} \lesssim \; &\frac{\Delta^{7/8}}{c_\alpha |d_{\beta_0}|^{1/8}}  \int d\mu (\bz) \, \frac{1}{|z_{\beta_0}|^{1/4}} \frac{1}{\left(\sum_{j=1}^4 |z_{\beta_j}|^2 \right)^{7/8}} \left( 1 + \sum_{j=0}^4 |z_{\beta_j}| \left| \frac{d\phi (\bz)}{d z_{\beta_j}} \right| \right) \\
\lesssim \; &\frac{\Delta^{7/8}}{c_\alpha |d_{\beta_0}|^{1/8}} \left( \int d\mu (\bz) \frac{1}{|z_{\beta_0}|} \right)^{1/4} \left( \int d\mu (\bz) \, \frac{1}{\left( |z_{\beta_1}|^2 + |z_{\beta_2}|^2 \right)^{7/6}} \right)^{3/4} \\ &+ \frac{\Delta^{7/8}}{c_\alpha |d_{\beta_0}|^{1/8}} \sum_{j=1}^4 \left( \int d\mu(\bz) \, \left| \frac{d\phi}{dz_{\beta_j}} \right|^4 \right)^{1/4} \, \left( \int d\mu (\bz) \frac{1}{|z_{\beta_0}|} \right)^{1/4} \, \left( \int d\mu (\bz) \, |z_{\beta_j}|^{16} \right)^{1/16} \\ &\hspace{5cm} \times  \left( \int d\mu (\bz) \, \frac{1}{(|z_1|^2 + |z_2|^2 + |z_3|^2 )^2} \right)^{7/16}
\\ \lesssim \; & \frac{\Delta^{7/8}}{c_\alpha |d_{\beta_0}|^{1/8}} \, .
\end{split} \]
In order to obtain the third bound in (\ref{eq:propI}), we make use of the indices $\beta_j$, $j=1, \dots, 8$, defined in (\ref{eq:indices}). As above, we define the signs $\sigma_j$, $j=1,\dots,8$, by $\sigma_j = 1$, if $\lambda_{\beta_j} \geq E$, and $\sigma_j = -1$ if $\lambda_{\beta_j} < E$. We estimate
\begin{equation}\label{eq:Asec} \text{A} \leq \int
 d\mu (\bz) \, \frac{|z_\alpha|^2}{(h - E -\sum_{\alpha} d_{\alpha} |z_{\alpha}|^2)^2 + \left( \sum_{j=5}^8 c_{\beta_j} |z_{\beta_j}|^2 \right)^2} \end{equation}
Defining the function \[ G (t) = \int_{-\infty}^t \frac{ds}{s^2 + \left( \sum_{j=5}^8 c_{\beta_j} |z_{\beta_j}|^2 \right)^2} \, ,  \]
we observe that
\[\begin{split}
\left(\sum_{j=1}^4 \sigma_j \, z_{\beta_j} \frac{d}{dz_{\beta_j}} \right) & G (h-E-\sum_{\alpha}d_{\alpha}|z_{\alpha}|^2) \\ &\hspace{2cm} = -\frac{\sum_{j=1}^4 |d_{\beta_j}| |z_{\beta_j}|^2}{(h - E -\sum_{\alpha}d_{\alpha}|z_{\alpha}|^2)^2 + \left( \sum_{j=5}^8 c_{\beta_j} |z_{\beta_j}|^2 \right)^2} \, .
\end{split} \]
Thus, integrating by parts,
\[ \begin{split}
\text{A} \lesssim &\; \int  d\mu (\bz)  \,  \frac{|z_\alpha|^2}{\sum_{j=1}^4 |d_{\beta_j}| |z_{\beta_j}|^2} \left( \sum_{j=1}^4 \sigma_j \, z_{\beta_j} \frac{d}{dz_{\beta_j}} \right) G (h-E-\sum_{\alpha}d_{\alpha}|z_{\alpha}|^2)  \\
\lesssim &\; \int  d\mu (\bz) \,  \left[ \sum_{j=1}^4 \sigma_j \frac{d}{dz_{\beta_j}} \frac{z_{\beta_j} |z_\alpha|^2}{\sum_{j=1}^4 |d_{\beta_j}| |z_{\beta_j}|^2} \right] \, G (h-E-\sum_{\alpha}d_{\alpha}|z_{\alpha}|^2) \\
&+   \int  d\bz \, d {\bar{\bz}} \, \left[ \left( \sum_{j=1}^4 z_{\beta_j} \frac{d}{dz_{\beta_j}} \right) e^{-\phi (\bz)} \right]\, \frac{|z_\alpha|^2}{\sum_{j=1}^4 |d_{\beta_j}| |z_{\beta_j}|^2} \,  G (h-E-\sum_{\alpha}d_{\alpha}|z_{\alpha}|^2)
\end{split} \]
Using a bound analogous to (\ref{eq:der-bd}) and
\[ G(h-E-\sum_{\alpha}d_{\alpha}|z_{\alpha}|^2) \leq \frac{1}{\sum_{j=5}^8 c_{\beta_j} |z_{\beta_j}|^2} \]
we obtain, since $|d_{\beta_j}| \geq \Delta^{-1}$ for  $j=1,\dots, 4$ and $c_{\beta_j} \geq \eps \, \Delta^{-2}$ for $j=5,\dots ,8$ (by (\ref{eq:delta}) and (\ref{eq:cdelta})),
\[ \begin{split} \text{A} \lesssim \; & \frac{\Delta^3}{\eps} \int d\mu(\bz) \frac{|z_{\alpha}|^2}{\left(\sum_{j=1}^4 |z_{\beta_j}|^2 \right) \left( \sum_{j=5}^8 |z_{\beta_j}|^2 \right)} \, \left(1 + \sum_{j=1}^4 |z_{\beta_j}| \left| \frac{d \phi (\bz)}{dz_{\beta_j}} \right| \right) \\
 \lesssim \; & \frac{\Delta^3}{\eps} \left( \int d\mu (\bz) \, |z_\alpha|^6 \right)^{1/3} \left( \int d\mu(\bz) \frac{1}{\left(\sum_{j=1}^4 |z_{\beta_j}|^2 \right)^3} \right)^{1/3} \ \left( \int d\mu (\bz) \, \frac{1}{ \left( \sum_{j=5}^8 |z_{\beta_j}|^2 \right)^3} \right)^{1/3} \\
 &+ \frac{\Delta^3}{\eps} \left( \int d\mu (\bz) \, |z_\alpha|^6 \right)^{1/3} \, \left( \int d\mu(\bz) \, \left| \frac{d\phi}{dz_{\beta_j}} \right|^6 \right)^{1/6} \left( \int d\mu(\bz) \frac{1}{\left(\sum_{j=1}^4 |z_{\beta_j}|^2 \right)^3} \right)^{1/6} \\ &\hspace{5cm} \times  \left( \int d\mu (\bz) \, \frac{1}{ \left( \sum_{j=5}^8 |z_{\beta_j}|^2 \right)^3} \right)^{1/3}  \end{split} \]
Lemma \ref{lm:pow}, Lemma \ref{lm:dphi} and Lemma \ref{lm:CS} imply that $A \lesssim \Delta^3 / \eps$. This concludes the proof of (\ref{eq:propI}).
\end{proof}

The following proposition is used in the analysis of the term (\ref{eq:II}).

\begin{proposition}\label{prop:II}
Let $H$ be an ensemble of Hermitian Wigner matrices as in Definition \ref{def}, so that $\E \, e^{\nu |x_{ij}|^2} < \infty$ for some $\nu >0$. Let real and imaginary part of the off-diagonal entries have a common probability density function $h$ such that (\ref{eq:ass2}) holds true. Let $B$ be the $(N-1) \times (N-1)$ minor of $H$ obtained by removing the first row and the first column of $H$. Suppose the randomness in $B$ is such that  the event $\Omega$, defined in (\ref{eq:Omega}), is satisfied. Let the measure $d\mu (\bz)$ be defined as in (\ref{eq:dmu}). Then we have, for every $\alpha = 1, \dots , N-1$,
\begin{equation}\label{eq:propII} \begin{split}& \left| \int  d\mu (\bz) \,   |z_{\alpha}|^2  \,   \frac{\left(\frac{\eps}{N} + \sum_\gamma c_\gamma |z_{\gamma}|^2\right)  \, \left(h - E -\sum_{\alpha}d_{\alpha}|z_\alpha|^2\right)}{\left[(h - E -\sum_{\alpha}d_{\alpha}|z_{\alpha}|^2)^2 + (\frac{\eps}{N} + \sum_{\alpha}c_{\alpha}|z_{\alpha}|^2)^2\right]^2} \right| \\ &\hspace{8cm} \lesssim \min \left(\frac{\Delta}{|d_\alpha|} , \, \frac{\Delta^{7/8}}{|d_\alpha | |d_{\beta_0}|^{1/8}} ,  \frac{\Delta}{c_{\alpha}}, \Delta^2 \right) \end{split} \end{equation}
Moreover, we have
\begin{equation}\label{eq:propIII}
 \left| \int  d\mu (\bz)  \,   \frac{\left(\frac{\eps}{N} + \sum_\gamma c_\gamma |z_{\gamma}|^2\right) \, \left(h - E -\sum_{\alpha}d_{\alpha}|z_\alpha|^2\right)}{\left[(h - E -\sum_{\alpha}d_{\alpha}|z_{\alpha}|^2)^2 + (\frac{\eps}{N} + \sum_{\alpha}c_{\alpha}|z_{\alpha}|^2)^2\right]^2} \right| \lesssim \Delta^2
 \end{equation}
\end{proposition}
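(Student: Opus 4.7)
The proof follows the same strategy as Proposition \ref{prop:I}: exploit the derivative structure of the integrand, integrate by parts against $d\mu(\bz)=e^{-\phi(\bz)}d\bz d\bar\bz$, and conclude via H\"older's inequality together with Lemmas \ref{lm:pow}, \ref{lm:dphi}, \ref{lm:CS}. Write $A = h - E - \sum_\alpha d_\alpha |z_\alpha|^2$ and $C = \eps/N + \sum_\alpha c_\alpha|z_\alpha|^2 > 0$, so the integrands in (\ref{eq:propII}) and (\ref{eq:propIII}) are $|z_\alpha|^2 AC/(A^2+C^2)^2$ and $AC/(A^2+C^2)^2$ respectively.

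The plan begins with the pointwise Young inequality $|AC/(A^2+C^2)^2| \leq 1/(2(A^2+C^2))$, which follows from $2|AC|\leq A^2+C^2$. Substituting into (\ref{eq:propII}) majorizes the left-hand side by $\tfrac12\int d\mu\,|z_\alpha|^2/(A^2+C^2)$, which is precisely the quantity $\mathrm{A}$ estimated in Proposition \ref{prop:I} (the replacement of $\sum c|z|^2$ by $C = \eps/N+\sum c|z|^2$ only enlarges the denominator). Consequently Proposition \ref{prop:I} directly delivers the bound $\Delta/c_\alpha$ required in (\ref{eq:propII}).

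For the remaining bounds $\Delta/|d_\alpha|$ and $\Delta^{7/8}/(|d_\alpha||d_{\beta_0}|^{1/8})$ in (\ref{eq:propII}), I would exploit the identity $A/(A^2+C^2)^2 = -\tfrac12\,\partial_A (A^2+C^2)^{-1}$ together with $\partial_{z_{\beta_j}}A = -d_{\beta_j}\bar z_{\beta_j}$ and $\partial_{z_{\beta_j}}C = c_{\beta_j}\bar z_{\beta_j}$. With signs $\sigma_j=\sgn(\lambda_{\beta_j}-E)$ and indices $\beta_0,\beta_1,\beta_2,\beta_3$ as in Proposition \ref{prop:I}, a direct computation yields
\[
\sum_j \sigma_j z_{\beta_j}\frac{\partial}{\partial z_{\beta_j}}\frac{1}{A^2+C^2} = \frac{2A\sum_j |d_{\beta_j}||z_{\beta_j}|^2 - 2C\sum_j \sigma_j c_{\beta_j}|z_{\beta_j}|^2}{(A^2+C^2)^2}.
\]
Solving for the first term on the right and multiplying by $|z_\alpha|^2 C/(2\sum_j|d_{\beta_j}||z_{\beta_j}|^2)$ splits the integrand of (\ref{eq:propII}) into a main piece---which after integration by parts against $e^{-\phi}$ becomes $1/(A^2+C^2)$ times $\partial_{z_{\beta_j}}$ of the prefactor (picking up a $\partial_{z_{\beta_j}}\phi$ contribution)---and a correction piece proportional to $|z_\alpha|^2 C^2/[(A^2+C^2)^2\sum_j|d_{\beta_j}||z_{\beta_j}|^2]$. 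The main piece is bounded using $1/(A^2+C^2)\leq 1/C^2$ together with $C\geq c_\alpha|z_\alpha|^2$; the correction piece is handled via $\sum|c_{\beta_j}||z_{\beta_j}|^2\leq C$ together with $C^2/(A^2+C^2)^2\leq 1/(A^2+C^2)$. Both pieces then close by H\"older and the auxiliary lemmas exactly as in Proposition \ref{prop:I}, yielding $\Delta/|d_\alpha|$; inserting an extra $|z_{\beta_0}|^{-1/4}$ in H\"older, as in the second bound of Proposition \ref{prop:I}, produces $\Delta^{7/8}/(|d_\alpha||d_{\beta_0}|^{1/8})$.

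Finally, for the bound $\Delta^2$ in both (\ref{eq:propII}) and (\ref{eq:propIII}), I would run the same integration-by-parts scheme but with the wider set $\beta_1,\ldots,\beta_4$ for IBP and the complementary set $\beta_5,\ldots,\beta_8$ providing the lower bound $C^2\geq (\eps/(2\Delta^2))^2(\sum_{j=5}^8|z_{\beta_j}|^2)^2$, as in the third estimate of Proposition \ref{prop:I}. Since the factor $C$ is already present in the numerator of (\ref{eq:propII})--(\ref{eq:propIII}), it absorbs one of the $c_{\beta_j}^{-1}$ factors that would otherwise arise in bounding $1/(A^2+C^2)^2$, saving a factor $\Delta/\eps$ relative to the $\Delta^3/\eps$ bound of Proposition \ref{prop:I} and producing $\Delta^2$. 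The main new obstacle, absent in Proposition \ref{prop:I}, is the correction term $2C\sum_j\sigma_j c_{\beta_j}|z_{\beta_j}|^2/(A^2+C^2)^2$ generated when the IBP derivatives fall on $C$ rather than on $A$; a careful trade via $\sum|c_{\beta_j}||z_{\beta_j}|^2\leq C$ ensures it is no larger than the main piece, after which the H\"older and auxiliary-lemma machinery of Proposition \ref{prop:I} closes the estimate verbatim.
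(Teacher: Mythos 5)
Your opening move for the $\Delta/c_\alpha$ bound is clean and correct: the pointwise Young inequality $|AC|/(A^2+C^2)^2\le \tfrac12 (A^2+C^2)^{-1}$, followed by $A^2+C^2\ge A^2+(\sum_\gamma c_\gamma|z_\gamma|^2)^2$, reduces the left side of (\ref{eq:propII}) directly to the quantity $\text{A}$ estimated in Proposition \ref{prop:I}, giving $\Delta/c_\alpha$. This is actually simpler than the paper's route, which re-derives $\Delta/c_\alpha$ via its $\text{B}_5,\text{B}_6$ decomposition. The remaining three bounds, however, are not obtained by your argument.

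For $\Delta/|d_\alpha|$ and $\Delta^{7/8}/(|d_\alpha||d_{\beta_0}|^{1/8})$, there is a structural obstruction: your decomposition uses $\sum_j\sigma_j z_{\beta_j}\partial_{z_{\beta_j}}$, which produces the prefactor $\bigl(\sum_j|d_{\beta_j}||z_{\beta_j}|^2\bigr)^{-1}$ and carries no $\alpha$-dependence in the denominator. Tracing through your own outline --- bound the main piece by $1/(A^2+C^2)\le 1/C^2$ and then $C\ge c_\alpha|z_\alpha|^2$ --- yields $1/(c_\alpha\sum_j|d_{\beta_j}||z_{\beta_j}|^2)\lesssim \Delta/(c_\alpha\sum_j|z_{\beta_j}|^2)$ and hence $\Delta/c_\alpha$ after integration, not $\Delta/|d_\alpha|$. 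Since $|d_\alpha|/c_\alpha = N|\lambda_\alpha-E|/\eps$, one has $\Delta/c_\alpha\ge \Delta/|d_\alpha|$ precisely on the range $N|\lambda_\alpha-E|\ge\eps$ (i.e., $\alpha\in S_2$), which is exactly where the $\Delta/|d_\alpha|$ bound is needed in (\ref{eq:pa-II}); so you have proved strictly less than required. The paper extracts $1/d_\alpha$ by first applying $z_\alpha\frac{d}{dz_\alpha}$ to $(A^2+C^2)^{-1}$ --- which produces $-2d_\alpha|z_\alpha|^2 A/(A^2+C^2)^2$ and hence, after solving, the factor $1/(2d_\alpha)$ --- and only afterwards performs a second layer of integration by parts with the $\beta_j$'s (via the function $L$ and (\ref{eq:alt-B5})). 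That first step, involving $\partial_{z_\alpha}$ specifically, is missing from your outline and cannot be replaced by $\partial_{z_{\beta_j}}$ for generic $\beta_j\neq\alpha$.

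For the $\Delta^2$ bound (and hence all of (\ref{eq:propIII})), your plan is to do a single integration by parts with $\beta_1,\dots,\beta_4$ and then lower-bound $C$ via $c_{\beta_j}\ge\eps/(2\Delta^2)$, $j=5,\dots,8$. Carry this out: the main piece after integration by parts is $\lesssim |z_\alpha|^2 C/\bigl[(A^2+C^2)\sum_{j=1}^4|d_{\beta_j}||z_{\beta_j}|^2\bigr]$; bounding $C/(A^2+C^2)\le 1/C\le (2\Delta^2/\eps)/\sum_{j=5}^8|z_{\beta_j}|^2$ and $\sum_{j=1}^4|d_{\beta_j}||z_{\beta_j}|^2\ge (2\Delta)^{-1}\sum_{j=1}^4|z_{\beta_j}|^2$ gives $\Delta^3/\eps$, not $\Delta^2$ --- and these are genuinely different, since $\Delta$ can be large while $\eps\le 1$. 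The "absorption of a $c_{\beta_j}^{-1}$ factor" you invoke only occurs if one performs a \emph{second} integration by parts: in the paper, $(A^2+C^2)^{-1}$ is rewritten as $-\bigl(\sum_{j=5}^8|d_{\beta_j}||z_{\beta_j}|^2\bigr)^{-1}\sum_{j=5}^8\sigma_j z_{\beta_j}\partial_{z_{\beta_j}}L(A)$ plus a correction, and after the second integration by parts the bound $L(A)\lesssim 1/C$ cancels the explicit $C$ in the numerator, leaving $\bigl(\sum_{j=5}^8|d_{\beta_j}||z_{\beta_j}|^2\bigr)^{-1}\lesssim 2\Delta/\sum_{j=5}^8|z_{\beta_j}|^2$. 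That replacement of a $c^{-1}$ cost by a $|d|^{-1}$ cost is exactly the factor $\Delta/\eps$ you need, and it does not happen without the second integration by parts. So this part of your argument also has a genuine gap.
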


\begin{proof}
We first show (\ref{eq:propII}). Let \[ \text{B} := \int  d\mu (\bz) \,   |z_{\alpha}|^2  \, \left(\frac{\eps}{N} + \sum_\gamma c_\gamma |z_{\gamma}|^2\right)   \frac{h - E -\sum_{\alpha}d_{\alpha}|z_\alpha|^2}{\left[(h - E -\sum_{\alpha}d_{\alpha}|z_{\alpha}|^2)^2 + (\frac{\eps}{N} + \sum_{\alpha}c_{\alpha}|z_{\alpha}|^2)^2\right]^2} \, . \]
We start by observing that
\begin{equation} \begin{split} z_\alpha \, \frac{d}{dz_\alpha} &\frac{1}{
(h - E -\sum_{\alpha}d_{\alpha}|z_{\alpha}|^2)^2 + (\frac{\eps}{N} + \sum_{\alpha}c_{\alpha}|z_{\alpha}|^2)^2} \\ &\hspace{2cm} = \;  - 2 d_\alpha |z_\alpha|^2 \, \frac{(h - E -\sum_{\alpha}d_{\alpha}|z_\alpha|^2)}{\left[(h - E -\sum_{\alpha}d_{\alpha}|z_{\alpha}|^2)^2 + (\frac{\eps}{N} + \sum_{\alpha}c_{\alpha}|z_{\alpha}|^2)^2\right]^2} \\ &\hspace{2.5cm} -2c_\alpha |z_\alpha|^2 \frac{(\eps/N + \sum_\alpha c_\alpha |z_\alpha|^2)}{\left[(h - E -\sum_{\alpha}d_{\alpha}|z_{\alpha}|^2)^2 + (\frac{\eps}{N} + \sum_{\alpha}c_{\alpha}|z_{\alpha}|^2)^2\right]^2} \,. \end{split} \end{equation}
Therefore we obtain that
\begin{equation}\label{eq:II1} \begin{split} \text{B} =  \; &-\frac{1}{2d_\alpha} \int d\mu (\bz)  \,  \left(\frac{\eps}{N} +  \sum_\gamma c_\gamma |z_{\gamma}|^2 \right) \, z_\alpha \frac{d}{dz_\alpha} \frac{1}{(h - E -\sum_{\alpha}d_{\alpha}|z_{\alpha}|^2)^2 + (\frac{\eps}{N} + \sum_{\alpha}c_{\alpha}|z_{\alpha}|^2)^2}  \\
&- \frac{1}{d_\alpha} \,\int d\mu (\bz) \, (c_\alpha |z_\alpha|^2)  \frac{\left(\frac{\eps}{N} + \sum_\gamma c_\gamma |z_{\gamma}|^2\right)^2 }{\left[(h - E -\sum_{\alpha}d_{\alpha}|z_{\alpha}|^2)^2 + (\frac{\eps}{N} + \sum_{\alpha}c_{\alpha}|z_{\alpha}|^2)^2\right]^2}
\\ =: \; & \text{B}_1 + \text{B}_2
\end{split} \end{equation}
The abolute value of the term $\text{B}_2$ can be bounded by
\[ \begin{split}
|\text{B}_2| \leq \frac{c_\alpha}{|d_\alpha|} \,\int d\mu (\bz) \,  \frac{|z_\alpha|^2}{(h - E -\sum_{\alpha}d_{\alpha}|z_{\alpha}|^2)^2 + (c_{\alpha}|z_{\alpha}|^2)^2} \,.
\end{split}\]
Eq. (\ref{eq:propI}) implies that
\begin{equation}\label{eq:B2-final} |\text{B}_2| \leq \min \left(\frac{\Delta}{|d_\alpha|} , \frac{\Delta^{7/8}}{|d_\alpha|\, |d_{\beta_0}|^{1/8}} \right)  \end{equation}
To control the contribution $\text{B}_1$ in (\ref{eq:II1}) we integrate by parts:\begin{equation}\label{eq:A} \begin{split} \text{B}_1 = \frac{1}{2d_\alpha} \int  d\bz d\bar{\bz}& \, \frac{d}{dz_\alpha} \left[ z_\alpha e^{-\phi (\bz)} \left( \frac{\eps}{ N} + \sum_\gamma c_\gamma |z_\gamma|^2 \right) \right] \\ &\hspace{2cm}\times  \frac{1}{(h - E -\sum_{\alpha}d_{\alpha}|z_{\alpha}|^2)^2 + (\frac{\eps}{N} + \sum_{\alpha}c_{\alpha}|z_{\alpha}|^2)^2} \, . \end{split} \end{equation}
Next, we make use of the indices $\beta_j$, $j=0,\dots,4$ defined in (\ref{eq:beta0}) and (\ref{eq:indices}). As in the proof of Proposition \ref{prop:I}, we introduce the signs $\sigma_j$, $j=0,\dots,4$, by $\sigma_j = 1$, if $\lambda_{\beta_j} \geq E$, and $\sigma_j = -1$ if $\lambda_{\beta_j} < E$. We define next the function
\begin{equation}\label{eq:L}  L(t) = \int_{-\infty}^t ds \frac{1}{s^2+ \left( \eps/ N + \sum_\gamma c_\gamma |z_\gamma|^2 \right)^2} \end{equation}
and we observe that
\[ \begin{split}
&\frac{1}{(h - E -\sum_{\alpha}d_{\alpha}|z_{\alpha}|^2)^2 + (\frac{\eps}{N} + \sum_\gamma c_\gamma |z_\gamma|^2 )^2} \\ &\hspace{0cm}=
\; - \frac{1}{\sum_{j=0}^4 |d_{\beta_j}| |z_{\beta_j}|^2} \left( \sum_{j=0}^4 \sigma_j \, z_{\beta_j} \frac{d}{dz_{\beta_j}} \right) L (h-E-\sum_{\alpha}d_{\alpha}|z_{\alpha}|^2) \\ &\hspace{0.5cm} - \frac{2 \left(\eps/N + \sum_\alpha c_\alpha |z_\alpha|^2 \right) \, (\sum_{j=0}^4 \sigma_j \, c_{\beta_j} |z_{\beta_j}|^2)}{\sum_{j=0}^4 |d_{\beta_j}| |z_{\beta_j}|^2} \int_{-\infty}^{h-E-\sum_\alpha d_\alpha |z_\alpha|^2} ds \frac{1}{\left(s^2 + (\frac{\eps}{N} + \sum_\alpha c_\alpha |z_\alpha|^2)^2 \right)^2}
\end{split} \]
Inserting this expression into (\ref{eq:A}), we find
\begin{equation}\label{eq:A1A2}  \begin{split} \text{B}_1 = \; & -  \frac{1}{2d_\alpha} \int  d\bz d\bar{\bz} \, \frac{d}{dz_\alpha} \left[ z_\alpha e^{-\phi (\bz)} \left( \frac{\eps}{N} + \sum_\gamma c_\gamma |z_\gamma|^2 \right) \right] \\ &\hspace{.5cm}\times \frac{1}{\sum_{j=0}^4 |d_{\beta_j}| |z_{\beta_j}|^2} \left( \sum_{j=0}^4 \sigma_j \, z_{\beta_{j}} \frac{d}{dz_{\beta_{j}}} \right) L (h-E-\sum_{\alpha}d_{\alpha}|z_{\alpha}|^2) \\ &- \frac{1}{d_\alpha} \int  d\bz d\bar{\bz}\frac{d}{dz_\alpha} \left[ z_\alpha e^{-\phi (\bz)} \left( \frac{\eps}{N} + \sum_\gamma c_\gamma |z_\gamma|^2 \right) \right] \\ &\hspace{.5cm} \times  \frac{\left(\eps/N + \sum_\alpha c_\alpha |z_\alpha|^2 \right) \, (\sum_{j=0}^4 \sigma_j \, c_{\beta_j} |z_{\beta_j}|^2)}{\sum_{j=0}^4 |d_{\beta_j}| |z_{\beta_j}|^2} \int_{-\infty}^{h-E-\sum_\alpha d_\alpha |z_\alpha|^2} \frac{ds}{\left(s^2 + (\frac{\eps}{N} + \sum_\alpha c_\alpha |z_\alpha|^2)^2 \right)^2} \\
 =: \; &\text{B}_3 + \text{B}_4
 \end{split} \end{equation}
 Using the bound
\[ \int_{-\infty}^{\infty} \frac{ds}{\left(s^2 + (\frac{\eps}{N} + \sum_\alpha c_\alpha |z_\alpha|^2)^2\right)^2} \lesssim \frac{1}{(\frac{\eps}{N} + \sum_\alpha c_\alpha |z_\alpha|^2)^3} \]
we conclude that
\begin{equation}\label{eq:A2-final}
|\text{B}_4| \lesssim \frac{1}{|d_\alpha|} \int d\mu (\bz) \, \frac{1}{\sum_{j=0}^4 |d_{\beta_j}| |z_{\beta_j}|^2} \left(1+ |z_\alpha| \left|\frac{d\phi (\bz)}{dz_\alpha}\right| \right) \end{equation}
As for the term $\text{B}_3$ in (\ref{eq:A1A2}), we integrate again by parts. Taking absolute value after integration by parts, and using (\ref{eq:der-bd}), we find
\[ \begin{split}
|\text{B}_3|  \lesssim & \; \frac{1}{|d_\alpha|} \int  d\bz d\bar{\bz} \, \left| \frac{d}{dz_\alpha} \left[ z_\alpha e^{-\phi (\bz)} \left( \frac{\eps}{N} + \sum_\gamma c_\gamma |z_\gamma|^2 \right) \right] \right| \, \frac{L (h-E-\sum_{\alpha}d_{\alpha}|z_{\alpha}|^2)}{\sum_{j=0}^4 |d_{\beta_j}| |z_{\beta_j}|^2} \\
&+ \frac{1}{|d_\alpha|} \sum_{j=0}^4 \int  d\bz d\bar{\bz} \, |z_{\beta_j}| \, \left|\frac{d^2}{dz_{\beta_j} dz_\alpha} \left[ z_\alpha e^{-\phi (\bz)} \left( \frac{\eps}{N} + \sum_\gamma c_\gamma |z_\gamma|^2 \right) \right] \right| \, \frac{L (h-E-\sum_{\alpha}d_{\alpha}|z_{\alpha}|^2)}{\sum_{j=0}^4 |d_{\beta_j}| |z_{\beta_j}|^2}
\end{split} \]
Computing the derivatives and using the bound
\[ 0 \leq L( h-E-\sum_{\alpha}d_{\alpha}|z_{\alpha}|^2) \leq \frac{1}{\frac{\eps}{N} + \sum_\gamma c_\gamma |z_\gamma|^2}, \]
we arrive at
\[ \begin{split}
|\text{B}_3| \lesssim \; & \frac{1}{|d_\alpha|} \int d\mu (\bz) \, \frac{1}{\sum_{j=0}^4 |d_{\beta_j}| |z_{\beta_j}|^2}  \left( 1+ |z_\alpha| \left| \frac{d\phi (\bz)}{dz_\alpha}\right| + \sum_{j=0}^4 |z_{\beta_j}|  \left| \frac{d\phi (\bz)}{dz_{\beta_j}}\right|  \right. \\ &\hspace{4cm} \left. + \sum_{j=0}^4 |z_\alpha| |z_{\beta_j}| \left| \frac{d^2 \phi (\bz)}{dz_{\beta_j} dz_\alpha} \right| + \sum_{j=0}^4 |z_\alpha| |z_{\beta_j}| \left| \frac{d\phi (\bz)}{dz_\alpha} \right|\, \left| \frac{d\phi (\bz)}{dz_{\beta_j}} \right| \right)
 \end{split} \]
Combining this estimate with (\ref{eq:A2-final}), we find, from (\ref{eq:A1A2}),
\begin{equation} \label{eq:B1bd} \begin{split}
|\text{B}_1| \lesssim\;  & \frac{1}{|d_\alpha|} \int d\mu (\bz) \, \frac{1}{\sum_{j=0}^4 |d_{\beta_j}| |z_{\beta_j}|^2} \\ &\hspace{1cm} \times \left( 1+ |z_\alpha|^2 \left| \frac{d\phi (\bz)}{dz_\alpha}\right|^2 + \sum_{j=0}^3 |z_{\beta_j}|^2  \left| \frac{d\phi (\bz)}{dz_{\beta_j}}\right|^2  + \sum_{j=0}^3 |z_\alpha| |z_{\beta_j}| \left| \frac{d^2 \phi (\bz)}{dz_{\beta_j} dz_\alpha} \right| \right)
 \end{split} \end{equation}
If we neglect the term with $j=0$ in the denominator, we obtain (since $|d_{\beta_j}| \geq 1/(2\Delta)$ for $j=1,\dots,4$ by (\ref{eq:delta})),
\begin{equation} \label{eq:B1-1}\begin{split} |\text{B}_1| \lesssim \; & \frac{\Delta}{|d_\alpha|}  \int d\mu (\bz) \, \frac{1}{\sum_{j=1}^4  |z_{\beta_j}|^2} \left( 1+ |z_\alpha|^2 \left| \frac{d\phi (\bz)}{dz_\alpha}\right|^2 + \sum_{j=0}^4 |z_{\beta_j}|^2  \left| \frac{d\phi (\bz)}{dz_{\beta_j}}\right|^2  + \sum_{j=0}^4 |z_\alpha| |z_{\beta_j}| \left| \frac{d^2 \phi (\bz)}{dz_{\beta_j} dz_\alpha} \right| \right) \\
\lesssim \; & \frac{\Delta}{|d_\alpha|}  \, \left[ \left( \int d\mu (\bz) \frac{1}{|z_{\beta_1}|^2 + |z_{\beta_2}|^2} \right)  \right. \\ & \hspace{1cm} + \sum_{\gamma=\alpha,\beta_1,..,\beta_4} \left(
\int d\mu (\bz) \, \left| \frac{d\phi}{dz_\gamma} \right|^6 \right)^{1/3} \left( \int d\mu (\bz) \frac{1}{\left( \sum_{j=1}^4 |z_{\beta_j}|^2 \right)^3} \right)^{1/3} \left( \int d\mu (\bz) \, |z_\gamma|^6 \right)^{1/3}  \\ &\hspace{1cm}  + \sum_{j=0}^4 \left(\int d\mu(\bz) \left| \frac{d^2\phi}{dz_\alpha dz_{\beta_j}} \right|^2 \right)^{1/2} \,
\left( \int d\mu (\bz) \frac{1}{\left( \sum_{j=1}^4 |z_{\beta_j}|^2 \right)^3} \right)^{1/3}
\\ &\left. \hspace{5cm} \times \left( \int d\mu (\bz) \, |z_\alpha|^{12} \right)^{1/12} \left( \int d\mu (\bz) \, |z_{\beta_j}|^{12} \right)^{1/12}
\right]
\end{split} \end{equation}
Applying Lemma \ref{lm:pow}, Lemma \ref{lm:dphi} and Lemma \ref{lm:CS}, we conclude that $|\text{B}_1| \lesssim \Delta / |d_\alpha|$.
On the other hand, from (\ref{eq:B1bd}), we also conclude that
\[ \begin{split} |\text{B}_1| \lesssim & \; \frac{\Delta^{7/8}}{|d_\alpha| |d_{\beta_0}|^{1/8}} \int d\mu (\bz) \, \frac{1}{|z_{\beta_0}|^{1/4}} \, \frac{1}{\left( \sum_{j=1}^4 |z_{\beta_j}|^2 \right)^{7/8}} \\ &\hspace{1cm} \times \left( 1+ |z_\alpha|^2 \left| \frac{d\phi (\bz)}{dz_\alpha}\right|^2 + \sum_{j=0}^3 |z_{\beta_j}|^2  \left| \frac{d\phi (\bz)}{dz_{\beta_j}}\right|^2  + \sum_{j=0}^3 |z_\alpha| |z_{\beta_j}| \left| \frac{d^2 \phi (\bz)}{dz_{\beta_j} dz_\alpha} \right| \right)
\\ \lesssim &\; \frac{\Delta^{7/8}}{|d_\alpha| |d_{\beta_0}|^{1/8}} \left[ \left( \int d\mu (\bz) \, \frac{1}{|z_{\beta_0}|} \right)^{1/4} \, \left( \int d\mu (\bz) \frac{1}{\left(|z_{\beta_1}|^2 + |z_{\beta_2}|^2 \right)^{7/6}} \right)^{3/4}\right. \\ & \hspace{.5cm} +\sum_{\gamma=\alpha,\beta_1,..,\beta_4} \left(
\int d\mu (\bz) \, \left| \frac{d\phi}{dz_\gamma} \right|^6 \right)^{1/3} \left( \int d\mu (\bz) \frac{1}{\left( \sum_{j=1}^4 |z_{\beta_j}|^2 \right)^3} \right)^{7/24} \\ & \hspace{5cm} \times \left( \int d\mu (\bz) \, \frac{1}{|z_{\beta_0}|^{3/2}} \right)^{1/6} \left( \int  d\mu (\bz) \, |z_\gamma|^{48/5} \right)^{5/24} \\
\\ & \hspace{.5cm}  +\sum_{j=0}^4 \left(
\int d\mu (\bz) \, \left| \frac{d^2\phi}{dz_\alpha dz_{\beta_j}} \right|^2 \right)^{1/2} \left( \int d\mu (\bz) \frac{1}{\left( \sum_{j=1}^4 |z_{\beta_j}|^2 \right)^3} \right)^{7/24} \\ & \hspace{1cm}\times  \left( \int d\mu (\bz) \, \frac{1}{|z_{\beta_0}|^{3/2}} \right)^{1/6} \left( \int  d\mu (\bz) \, |z_\alpha|^{48} \right)^{1/48} \left.   \left( \int  d\mu (\bz) \, |z_{\beta_j}|^{48} \right)^{1/48} \right] \\
\lesssim & \; \frac{\Delta^{7/8}}{|d_\alpha| |d_{\beta_0}|^{1/8}}
\end{split} \]
Together with (\ref{eq:B2-final}), we find
\[ |\text{B}| \leq \min \left(\frac{\Delta}{|d_\alpha|} , \frac{\Delta^{7/8}}{|d_\alpha|\, |d_{\beta_0}|^{1/8}} \right)  \]

In order to show the third and the fourth bound on the r.h.s. of (\ref{eq:propII}), we make use of the indices $\beta_j$, $j=1,\dots,8$ introduced in (\ref{eq:indices}). We observe that
\begin{equation}\begin{split} \Big( \sum_{j=1}^4 \sigma_j \, z_{\beta_j} &\frac{d}{dz_{\beta_j}} \Big) \,  \frac{1}{
(h - E -\sum_{\alpha}d_{\alpha}|z_{\alpha}|^2)^2 + (\frac{\eps}{N} + \sum_{\alpha}c_{\alpha}|z_{\alpha}|^2)^2} \\ &\hspace{.5cm} = \;  - 2 \left(\sum_{j=1}^4 |d_{\beta_j}| |z_{\beta_j}|^2 \right) \, \frac{(h - E -\sum_{\alpha}d_{\alpha}|z_\alpha|^2)}{\left[(h - E -\sum_{\alpha}d_{\alpha}|z_{\alpha}|^2)^2 + (\frac{\eps}{N} + \sum_{\alpha}c_{\alpha}|z_{\alpha}|^2)^2\right]^2} \\ &\hspace{1cm} -2\left( \sum_{j=1}^4 \sigma_j \,  c_{\beta_j} |z_{\beta_j}|^2 \right) \,  \frac{(\eps/N + \sum_\alpha c_\alpha |z_\alpha|^2)}{\left[(h - E -\sum_{\alpha}d_{\alpha}|z_{\alpha}|^2)^2 + (\frac{\eps}{N} + \sum_{\alpha}c_{\alpha}|z_{\alpha}|^2)^2\right]^2} \,. \end{split} \end{equation}
Therefore we obtain that
\begin{equation}\label{eq:B5B6} \begin{split} \text{B} =  \; &-\frac{1}{2} \int d\mu (\bz)  \,  \left(\frac{\eps}{N} +  \sum_\gamma c_\gamma |z_{\gamma}|^2 \right) \, \frac{|z_\alpha|^2}{\sum_{j=1}^4 |d_{\beta_j}| |z_{\beta_j}|^2} \\ &\hspace{2cm} \times  \left( \sum_{j=1}^4 \sigma_j \, z_{\beta_j} \frac{d}{dz_{\beta_j}} \right) \, \frac{1}{
(h - E -\sum_{\alpha}d_{\alpha}|z_{\alpha}|^2)^2 + (\frac{\eps}{N} + \sum_{\alpha}c_{\alpha}|z_{\alpha}|^2)^2}  \\
&-\,\int d\mu (\bz) \, \frac{|z_\alpha|^2 \left( \sum_{j=1}^4 \sigma_j \, c_{\beta_j} |z_{\beta_j}|^2 \right)}{\sum_{j=1}^4 |d_{\beta_j}| |z_{\beta_j}|^2}  \frac{\left(\frac{\eps}{N} + \sum_\gamma c_\gamma |z_{\gamma}|^2\right)^2 }{\left[(h - E -\sum_{\alpha}d_{\alpha}|z_{\alpha}|^2)^2 + (\frac{\eps}{N} + \sum_{\alpha}c_{\alpha}|z_{\alpha}|^2)^2\right]^2}
\\ =: \; & \text{B}_5 + \text{B}_6
\end{split} \end{equation}
The absolute value of $\text{B}_6$ can be bounded by
\begin{equation}\label{eq:B6-I}
|\text{B}_6| \lesssim \frac{1}{c_\alpha} \int d\mu (\bz) \, \frac{1}{\sum_{j=1}^4 |d_{\beta_j}| |z_{\beta_j}|^2} \lesssim \frac{\Delta}{c_\alpha}  \int d\mu (\bz) \, \frac{1}{|z_{\beta_1}|^2 + |z_{\beta_2}|^2} \lesssim  \frac{\Delta}{c_{\alpha}} \end{equation}
where we used Lemma \ref{lm:CS}. Alternatively, we can estimate
\[ |\text{B}_6| \leq \int d\mu(\bz) \frac{|z_\alpha|^2 \, (\sum_{j=1}^4 c_{\beta_j} |z_{\beta_j}|^2)}{\sum_{j=1}^4 |d_{\beta_j}| |z_{\beta_j}|^2} \, \frac{1}{(h - E -\sum_{\alpha}d_{\alpha}|z_{\alpha}|^2)^2 + (\sum_{j=1}^4 c_{\beta_j} |z_{\beta_j}|^2)^2}
\]
Since (recall $\sigma_j =1$ if $\lambda_{\beta_j} \geq E$, $\sigma_j = -1$ if $\lambda_{\beta_j} <E$)
\[ \begin{split} &\frac{1}{(h - E -\sum_{\alpha}d_{\alpha}|z_{\alpha}|^2)^2 + (\sum_{j=1}^4 c_{\beta_j} |z_{\beta_j}|^2)^2} \\ & \hspace{4cm} = \frac{1}{\sum_{j=5}^8 |d_{\beta_j}| |z_{\beta_j}|^2} \left( \sum_{j=5}^8 \sigma_j \, z_{\beta_j} \frac{d}{dz_{\beta_j}} \right) M (h-E -\sum_\alpha d_\alpha |z_\alpha|^2) \end{split} \]
with \[ M(t) = \int_{-\infty}^t ds \frac{1}{s^2 +  (\sum_{j=1}^4 c_{\beta_j} |z_{\beta_j}|^2)^2} \] we conclude integrating by parts and estimating all terms by their absolute value that
\[\begin{split} |\text{B}_6| \leq \; & \int d\mu (\bz) \, \frac{|z_\alpha|^2 (\sum_{j=1}^4 c_{\beta_j} |z_{\beta_j}|^2)}{ (\sum_{j=1}^4 |d_{\beta_j}| |z_{\beta_j}|^2) ( \sum_{j=5}^8 |d_{\beta_j}| |z_{\beta_j}|^2 )} \left( \sum_{j=5}^8 \sigma_j \, z_{\beta_j} \frac{d}{dz_{\beta_j}} \right) M (h-E -\sum_\alpha d_\alpha |z_{\alpha}|^2) \\ \lesssim \; &  \int d\mu (\bz) \frac{|z_\alpha|^2 (\sum_{j=1}^4 c_{\beta_j} |z_{\beta_j}|^2)}{ (\sum_{j=1}^4 |d_{\beta_j}| |z_{\beta_j}|^2) ( \sum_{j=5}^8 |d_{\beta_j}| |z_{\beta_j}|^2 )} \, M(h-E -\sum_\alpha d_\alpha |z_{\alpha}|^2) \\ &+ \sum_{j=5}^8 \int d\mu (\bz) |z_{\beta_j}| \left|\frac{d\phi (\bz)}{dz_{\beta_j}}\right| \, \frac{|z_\alpha|^2 (\sum_{j=1}^4 c_{\beta_j} |z_{\beta_j}|^2)}{ (\sum_{j=1}^4 |d_{\beta_j}| |z_{\beta_j}|^2) ( \sum_{j=5}^8 |d_{\beta_j}| |z_{\beta_j}|^2 )} \, M(h-E -\sum_\alpha d_\alpha |z_{\alpha}|^2) \,. \end{split} \]
With \[ M(h-E-\sum_\alpha d_\alpha |z_\alpha|^2) \lesssim \frac{1}{\sum_{j=1}^4 c_{\beta_j} |z_{\beta_j}|^2} \]
we find
\begin{equation}\label{eq:B6-Delta} |\text{B}_6| \lesssim \int d\mu (\bz) \frac{|z_\alpha|^2}{(\sum_{j=1}^4 |d_{\beta_j}| |z_{\beta_j}|^2 ) ( \sum_{j=5}^8 |d_{\beta_j}| |z_{\beta_j}|^2)} \left( 1 + \sum_{j=5}^8 |z_{\beta_j}| \left| \frac{d\phi (\bz)}{dz_{\beta_j}} \right| \right) \end{equation}
Therefore, using Lemma \ref{lm:pow}, Lemma \ref{lm:dphi}, Lemma \ref{lm:CS} and the fact that $|d_{\beta_j}| \geq 1/(2\Delta)$ for all $j=1,\dots ,8$ (see (\ref{eq:delta})), we find
\begin{equation}\label{eq:B6-2} \begin{split}
|\text{B}_6| \lesssim \; & \Delta^2 \, \left( \int d\mu (\bz) \frac{1}{\left( \sum_{j=1}^4 |z_{\beta_j}|^2 \right)^3} \right)^{1/3}  \left( \int d\mu (\bz) \frac{1}{\left( \sum_{j=5}^8 |z_{\beta_j}|^2 \right)^3} \right)^{1/3}  \left( \int d\mu (\bz) \, |z_{\alpha}|^6 \right)^{1/3} \\
&+ \Delta^2 \, \sum_{j=5}^8  \left( \int d\mu (\bz) \frac{1}{\left( \sum_{j=1}^4 |z_{\beta_j}|^2 \right)^3} \right)^{1/3}  \left( \int d\mu (\bz) \frac{1}{\left( \sum_{j=5}^8 |z_{\beta_j}|^2 \right)^3} \right)^{1/6} \\ & \hspace{5cm} \times \left(\int d\mu (\bz) \left| \frac{d\phi (\bz)}{dz_{\beta_j}} \right|^6 \right)^{1/6}\left( \int d\mu (\bz) \, |z_{\alpha}|^6 \right)^{1/3} \\
\lesssim \; & \Delta^2
\end{split} \end{equation}

As for the term $\text{B}_5$ on the r.h.s. of (\ref{eq:B5B6}), we integrate by parts.
We find
\begin{equation}\label{eq:B5-3terms} \begin{split}
\text{B}_5 = \; &\frac{1}{2}  \int d\mu (\bz) \left[ \sum_{j=1}^4 \sigma_j  \frac{d}{dz_{\beta_j}} \frac{z_{\beta_j} |z_\alpha|^2}{
\sum_{j=1}^4 |d_{\beta_j}| |z_{\beta_j}|^2} \right]  \left(\frac{\eps}{N} +  \sum_\gamma c_\gamma |z_{\gamma}|^2 \right)  \\ &\hspace{6cm} \times  \frac{1}{
(h - E -\sum_{\alpha}d_{\alpha}|z_{\alpha}|^2)^2 + (\frac{\eps}{N} + \sum_{\alpha}c_{\alpha}|z_{\alpha}|^2)^2}  \\
&+ \frac{1}{2} \int d\mu (\bz) \frac{|z_\alpha|^2 (\sum_{j=1}^4 \sigma_j  c_{\beta_j} |z_{\beta_j}|^2)}{\sum_{j=1}^4 |d_{\beta_j}| |z_{\beta_j}|^2}  \, \frac{1}{
(h - E -\sum_{\alpha}d_{\alpha}|z_{\alpha}|^2)^2 + (\frac{\eps}{N} + \sum_{\alpha}c_{\alpha}|z_{\alpha}|^2)^2}  \\
&+ \frac{1}{2} \sum_{j=1}^4 \sigma_j \int d\mu (\bz) \, z_{\beta_j} \frac{d\phi (\bz)}{dz_{\beta_j}} \, \frac{|z_\alpha|^2}{\sum_{j=1}^4 |d_{\beta_j}| |z_{\beta_j}|^2}  \left( \frac{\eps}{N} + \sum_{\gamma} c_{\gamma} |z_{\gamma}|^2 \right) \\ &\hspace{6cm} \times \frac{1}{
(h - E -\sum_{\alpha}d_{\alpha}|z_{\alpha}|^2)^2 + (\frac{\eps}{N} + \sum_{\alpha}c_{\alpha}|z_{\alpha}|^2)^2}
\end{split} \end{equation}
It follows easily from a bound similar to (\ref{eq:der-bd}) and proceeding then as in (\ref{eq:B1-1}) that
\begin{equation}\label{eq:B5-I}
\begin{split}
|\text{B}_5| & \lesssim \; \frac{1}{c_\alpha} \int d\mu (\bz) \frac{1}{\sum_{j=1}^4 |d_{\beta_j}| |z_{\beta_j}|^2} \left(1+ \sum_{j=1}^4 |z_{\beta_j}| \left| \frac{d\phi (\bz)}{dz_{\beta_j}} \right| \right) \lesssim \frac{\Delta}{c_\alpha} \end{split}\end{equation}
Alternatively, we can observe that
\begin{equation}\label{eq:alt-B5} \begin{split}
&\frac{1}{(h - E -\sum_{\alpha}d_{\alpha}|z_{\alpha}|^2)^2 + (\frac{\eps}{N} + \sum_\gamma c_\gamma |z_\gamma|^2 )^2} \\ &\hspace{0cm}=
\; - \frac{1}{\sum_{j=5}^8 |d_{\beta_j}| |z_{\beta_j}|^2} \left( \sum_{j=5}^8 \sigma_j z_{\beta_j} \frac{d}{dz_{\beta_j}} \right) L (h-E-\sum_{\alpha}d_{\alpha}|z_{\alpha}|^2) \\ &\hspace{0.5cm} - \frac{2 \left(\eps/N + \sum_\alpha c_\alpha |z_\alpha|^2 \right) \, (\sum_{j=5}^8 \sigma_j \, c_{\beta_j} |z_{\beta_j}|^2)}{\sum_{j=5}^8 |d_{\beta_j}| |z_{\beta_j}|^2} \int_{-\infty}^{h-E-\sum_\alpha d_\alpha |z_\alpha|^2} ds \frac{1}{\left(s^2 + (\frac{\eps}{N} + \sum_\alpha c_\alpha |z_\alpha|^2)^2 \right)^2}
\end{split} \end{equation}
where, as in (\ref{eq:L}), we set
\[  L(t) = \int_{-\infty}^t ds \frac{1}{s^2+ \left( \eps/ N + \sum_\gamma c_\gamma |z_\gamma|^2 \right)^2} \]
Inserting (\ref{eq:alt-B5}) into (\ref{eq:B5-3terms}), performing integration by parts (in the terms arising from the first line of (\ref{eq:alt-B5})), taking absolute values, using a bound similar to (\ref{eq:der-bd}) and the fact that \[ L(h-E-\sum_{\alpha} d_\alpha |z_\alpha|^2) \lesssim \frac{1}{\eps/N + \sum_{\gamma} c_{\gamma} |z_{\gamma}|^2} \] we conclude that
\begin{equation} \label{eq:B5-Delta} \begin{split}
|\text{B}_5| \lesssim \;  &\int d\mu (\bz) \frac{|z_\alpha|^2}{(\sum_{j=1}^4 |d_{\beta_j}| |z_{\beta_j}|^2 )(\sum_{j=5}^8 |d_{\beta_j}| |z_{\beta_j}|^2)} \\ &\hspace{3cm} \times  \left(1+ \sum_{j=1}^4 |z_{\beta_j}|^2 \left| \frac{d\phi (\bz)}{dz_{\beta_j}} \right|^2 + \sum_{j=1}^4 \sum_{i=5}^8 |z_{\beta_j}| |z_{\beta_i}| \, \left| \frac{d^2 \phi (\bz)}{dz_{\beta_j} dz_{\beta_i}} \right| \right) \\
\lesssim \; &\Delta^2 \int d\mu (\bz) \frac{|z_\alpha|^2}{(\sum_{j=1}^4 |z_{\beta_j}|^2 )(\sum_{j=5}^8 |z_{\beta_j}|^2)} \\ &\hspace{3cm} \times  \left(1+ \sum_{j=1}^4 |z_{\beta_j}|^2 \left| \frac{d\phi (\bz)}{dz_{\beta_j}} \right|^2 + \sum_{j=1}^4 \sum_{i=5}^8 |z_{\beta_j}| |z_{\beta_i}| \, \left| \frac{d^2 \phi (\bz)}{dz_{\beta_j} dz_{\beta_i}} \right| \right)
\end{split}
\end{equation}
Therefore, we obtain
\begin{equation}\label{eq:B5-2}\begin{split}
|\text{B}_5| \lesssim \; & \Delta^2 \left( \int d\mu (\bz) \frac{1}{\left( \sum_{j=1}^4 |z_{\beta_j}|^2 \right)^3} \right)^{1/3} \left( \int d\mu (\bz) \frac{1}{\left( \sum_{j=5}^8 |z_{\beta_j}|^2 \right)^3} \right)^{1/3} \left( \int d\mu (\bz) \, |z_\alpha|^6 \right)^{1/3} \\
&+ \Delta^2 \sum_{j=1}^4 \left( \int d\mu (\bz) \left| \frac{d\phi (\bz)}{dz_{\beta_j}} \right|^6 \right)^{1/3} \left( \int d\mu (\bz) \frac{1}{\left( \sum_{j=5}^8 |z_{\beta_j}|^2 \right)^3} \right)^{1/3} \left( \int d\mu (\bz) \, |z_\alpha|^6 \right)^{1/3} \\
&+ \Delta^2 \sum_{j=1}^4 \sum_{i=5}^8 \left( \int d\mu (\bz) \frac{1}{\left( \sum_{j=1}^4 |z_{\beta_j}|^2 \right)^3} \right)^{1/6} \left( \int d\mu (\bz) \frac{1}{\left( \sum_{j=5}^8 |z_{\beta_j}|^2 \right)^3} \right)^{1/6} \\ &\hspace{6cm} \times  \left( \int d\mu (\bz) \left| \frac{d^2\phi (\bz)}{dz_{\beta_j} dz_{\beta_i}} \right|^2 \right)^{1/2}\left( \int d\mu (\bz) |z_\alpha|^{12} \right)^{1/6}  \\
\lesssim \; & \Delta^2
\end{split} \end{equation}
by Lemma \ref{lm:pow}, Lemma \ref{lm:dphi} and Lemma \ref{lm:CS}. Together with (\ref{eq:B5-I}), (\ref{eq:B6-I}), (\ref{eq:B6-2}), we obtain the last two bounds on the r.h.s. of (\ref{eq:propII}).

\medskip

In order to show (\ref{eq:propIII}), we proceed as in the proof of the bound $\Delta^2$ for the l.h.s. of (\ref{eq:propII}) (notice that the only difference between the l.h.s. of (\ref{eq:propII}) and (\ref{eq:propIII}) is the factor $|z_{\alpha}|^2$, which, however, did not play any role in the proof of the bound proportional to $\Delta^2$ on the r.h.s. of (\ref{eq:propII})). We write
\[ \text{C} := \int  d\mu (\bz)  \, \left(\frac{\eps}{N} + \sum_\gamma c_\gamma |z_{\gamma}|^2\right)   \frac{h - E -\sum_{\alpha}d_{\alpha}|z_\alpha|^2}{\left[(h - E -\sum_{\alpha}d_{\alpha}|z_{\alpha}|^2)^2 + (\frac{\eps}{N} + \sum_{\alpha}c_{\alpha}|z_{\alpha}|^2)^2\right]^2} \, . \]
We decompose $\text{C}$, similarly to (\ref{eq:B5B6}), as
\[ \begin{split} \text{C} =  \; &-\frac{1}{2} \int d\mu (\bz)  \,  \left(\frac{\eps}{N} +  \sum_\gamma c_\gamma |z_{\gamma}|^2 \right) \, \frac{1}{\sum_{j=1}^4 |d_{\beta_j}| |z_{\beta_j}|^2} \\ &\hspace{2cm} \times  \left( \sum_{j=1}^4 \sigma_j \, z_{\beta_j} \frac{d}{dz_{\beta_j}} \right) \, \frac{1}{
(h - E -\sum_{\alpha}d_{\alpha}|z_{\alpha}|^2)^2 + (\frac{\eps}{N} + \sum_{\alpha}c_{\alpha}|z_{\alpha}|^2)^2}  \\
&-\,\int d\mu (\bz) \, \frac{\left( \sum_{j=1}^4 \sigma_j c_{\beta_j} |z_{\beta_j}|^2 \right)}{\sum_{j=1}^4 |d_{\beta_j}| |z_{\beta_j}|^2}  \frac{\left(\frac{\eps}{N} + \sum_\gamma c_\gamma |z_{\gamma}|^2\right)^2 }{\left[(h - E -\sum_{\alpha}d_{\alpha}|z_{\alpha}|^2)^2 + (\frac{\eps}{N} + \sum_{\alpha}c_{\alpha}|z_{\alpha}|^2)^2\right]^2}
\\ =: \; & \text{C}_1 + \text{C}_2   \, .
\end{split} \]
Analogously to (\ref{eq:B6-Delta}), we obtain
\[ |\text{C}_1| \lesssim \int d\mu (\bz) \frac{1}{(\sum_{j=1}^4 |d_{\beta_j}| |z_{\beta_j}|^2 ) ( \sum_{j=5}^8 |d_{\beta_j}| |z_{\beta_j}|^2)} \left( 1 + \sum_{j=1}^4 |z_{\beta_j}| \left| \frac{d\phi (\bz)}{dz_{\beta_j}} \right| \right) \, . \]
Analogously to (\ref{eq:B5-Delta}), we find
\[\begin{split}  |\text{C}_2| \lesssim \; &\int d\mu (\bz) \frac{1}{(\sum_{j=1}^4 |d_{\beta_j}| |z_{\beta_j}|^2 )(\sum_{j=5}^8 |d_{\beta_j}| |z_{\beta_j}|^2)} \\ &\hspace{3cm} \times  \left(1+ \sum_{j=1}^4 |z_{\beta_j}|^2 \left| \frac{d\phi (\bz)}{dz_{\beta_j}} \right|^2 + \sum_{j=1}^4 \sum_{i=5}^8 |z_{\beta_j}| |z_{\beta_i}| \, \left| \frac{d^2 \phi (\bz)}{dz_{\beta_j} dz_{\beta_i}} \right| \right) \end{split} \]
Hence, we obtain
\[ \begin{split}  |\text{C}| \lesssim \; &\Delta^2 \, \int d\mu (\bz) \frac{1}{(\sum_{j=1}^4  |z_{\beta_j}|^2 )(\sum_{j=5}^8 |z_{\beta_j}|^2)} \\ &\hspace{3cm} \times  \left(1+ \sum_{j=1}^4 |z_{\beta_j}|^2 \left| \frac{d\phi (\bz)}{dz_{\beta_j}} \right|^2 + \sum_{j=1}^4 \sum_{i=5}^8 |z_{\beta_j}| |z_{\beta_i}| \, \left| \frac{d^2 \phi (\bz)}{dz_{\beta_j} dz_{\beta_i}} \right| \right) \end{split} \]
Similarly to (\ref{eq:B5-2}), we find $|\text{C}| \lesssim \Delta^2$.
This completes the proof of (\ref{eq:propIII}).
\end{proof}

\begin{lemma}\label{lm:CS}
Let the probability density $h$ be such that (\ref{eq:ass2}) is satisfied, and let the measure $d\mu (\bz)$ be as in (\ref{eq:dmu}). Let $m\in \bN$ and $p \in \bR$, with $0 < p < m$. For any indices $\beta_1, \dots , \beta_m \in \{ 1,2, \dots , N-1\}$, we have
\[ \int d\mu (\bz) \, \frac{1}{\left( \sum_{j=1}^m |z_{\beta_j}|^2 \right)^p} \lesssim \int \left| \frac{h' (s)}{h(s)} \right|^{2\overline{p}} h(s) ds \]
where $\overline{p} \in \bN$ is the smallest integer larger than $p$.
\end{lemma}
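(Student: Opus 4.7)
The plan is to first reduce to a product measure, then use iterated integration by parts in a "radial" form adapted to the rank-$m$ projection, and finally to collect the derivative factors via Cauchy--Schwarz.

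\smallskip

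\textbf{Step 1 (Reduction to a product measure).} Perform the unitary change of variables $\bb = U^*\bz$. Since $U$ is unitary, the Lebesgue measure is preserved and $d\mu(\bz)$ becomes the product density $\rho(\bb)\,d\bb\,d\bar\bb := \prod_{k=1}^{N-1} h(\operatorname{Re} b_k)\, h(\operatorname{Im} b_k)\,d\bb\,d\bar\bb$. Simultaneously $\sum_{j=1}^m |z_{\beta_j}|^2 = |W\bb|^2$, where $W$ is the $m\times(N-1)$ matrix formed by the $\beta_j$-th rows of $U$; in particular $W$ has orthonormal rows, so $W^*W$ is a rank-$m$ orthogonal projection on $\bC^{N-1}$. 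After a further unitary diagonalizing $W^*W$, one passes to real coordinates $(y_1,\dots,y_{2m},z_1,\dots,z_{2(N-1-m)})$ such that $|W\bb|^2 = r^2 := \sum_{j=1}^{2m} y_j^2$; the density remains a product of $h$'s composed with an orthogonal rotation, hence still decays like $e^{-\nu\|\bb\|^2}$ on average.

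\smallskip

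\textbf{Step 2 (Integration by parts).} Since $p<m$, the identity
\[
\frac{1}{r^{2p}} = \frac{1}{2m-2p}\sum_{j=1}^{2m}\partial_{y_j}\!\left(\frac{y_j}{r^{2p}}\right)
\]
holds, and one integration by parts yields
\[
\int \rho\,r^{-2p} = -\frac{1}{2m-2p}\int \rho\,r^{-2p}\,\phi,
\qquad \phi := \sum_{j=1}^{2m} y_j\,\partial_{y_j}\log\rho.
\]
Applying Cauchy--Schwarz gives $\int\rho r^{-2p}\lesssim \int\rho r^{-2p}\phi^2$; iterating the same identity on $\int\rho r^{-2p}\phi^{2k}$ and using Cauchy--Schwarz repeatedly produces
\[
\int\rho\,r^{-2p}\lesssim \int\rho\,r^{-2p}\,\phi^{2\bar p} + \text{(lower-order terms containing }\nabla^2\log\rho\text{)}.
\]

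\smallskip

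\textbf{Step 3 (Collection of the final bound).} Since $|y_j|\le r$, one has $\phi^{2\bar p}\le r^{2\bar p}\,|\nabla_y\log\rho|^{2\bar p}$, hence
\[
\int\rho\,r^{-2p}\,\phi^{2\bar p}\le \int\rho\,r^{2(\bar p-p)}\,|\nabla_y\log\rho|^{2\bar p}.
\]
The exponent $2(\bar p-p)$ is non-negative by the choice of $\bar p$, so $r^{2(\bar p-p)}$ is polynomially bounded in $\bb$ and absorbed by the Gaussian decay of $\rho$. Writing $\partial_{y_j}\log\rho = \sum_k V_{jk}\,(h'/h)(x_k)$ (chain rule through the orthogonal change of variables), the $x_k$ are independent centred under $\rho$, so a Khintchine/Rosenthal-type bound for sums of independent centred variables yields $\E_\rho[|\nabla_y\log\rho|^{2\bar p}]\lesssim_{m,\bar p}\int |h'/h|^{2\bar p}h$. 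The same type of bound, together with a further one-dimensional integration by parts $\int(h''/h)\varphi\,h = -\int(h'/h)\varphi'\,h$, absorbs the lower-order $\nabla^2\log\rho$ terms into the same quantity.

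\smallskip

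\textbf{Main obstacle.} The main difficulty is bookkeeping: each Cauchy--Schwarz step generates not just higher powers of $\phi$ but also cross terms involving the Hessian of $\log\rho$, which through the product structure of $\rho$ introduce $(h''/h)(x_k)$ factors. One must verify that after $\bar p$ iterations every such term can, through combinatorial regrouping and the one-dimensional identity $\int(h''/h)\varphi\,h = -\int(h'/h)\varphi'\,h$, be dominated by $\int |h'/h|^{2\bar p} h$ rather than by a larger moment of $h''/h$. The assumption $p<m$ is essential for two reasons: it makes the identity of Step~2 well defined (the constant $1/(2m-2p)$ is finite), and it guarantees that $\bar p$ iterations suffice to turn $r^{-2p}$ into a non-negative power of $r$, so the remaining integral over $\rho$ converges by Gaussian decay.
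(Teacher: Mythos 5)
Your route has the right building blocks (the scaling integration-by-parts identity, and the reduction of $\int d\mu\,|d\phi/dz_\beta|^q$ to $\int |h'/h|^q h$), but it diverges from the paper in a way that opens two genuine gaps. The core deviation is that you try to reach the power $2\bar p$ by \emph{iterating} integration by parts and Cauchy--Schwarz. Each such step, applied to $\int\rho\,r^{-2p}\phi^{2k}$, transfers a derivative onto $\rho\phi^{2k}$ and, via $\sum_j y_j\partial_{y_j}\phi=\phi+\sum_{i,j}y_iy_j\,\partial_i\partial_j\log\rho$, produces a Hessian remainder $\int\rho\,r^{-2p}\phi^{2k-1}H$ with $H=\sum_{i,j}y_iy_j\partial_i\partial_j\log\rho$. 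After Cauchy--Schwarz this becomes $\int\rho\,r^{-2p}\phi^{2(k-1)}H^2$, and bounding $|H|\lesssim r^2\|\nabla^2\log\rho\|$ still leaves the singular factor $r^{4-2p}$ for $k=1$ whenever $p>2$ (which does occur in the paper's applications, e.g.\ $m=4$, $p=3$); as $k$ grows, these remainders require moments of $\nabla^2\log\rho$ strictly above the second, which are not granted by $\int|h''/h|^2h<\infty$. You flag this under ``Main obstacle'', but the proposed cure (the one-dimensional identity $\int(h''/h)\varphi h=-\int(h'/h)\varphi'h$) is not carried out and it is not clear how to carry it out, since $\varphi$ would involve $r^{-2p}$, $\phi$, and $H$ jointly. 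Separately, in Step~3 the decoupling of $r^{2(\bar p-p)}$ from $|\nabla_y\log\rho|^{2\bar p}$ for non-integer $p$ costs a moment of $h'/h$ strictly larger than $2\bar p$ after H\"older, again beyond the hypotheses.

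The paper avoids both obstructions by applying, after a \emph{single} integration by parts, a \emph{single} H\"older inequality with the conjugate exponents $2p$ and $2p/(2p-1)$. Since $|z_{\beta_j}|\le(\sum_i|z_{\beta_i}|^2)^{1/2}$ and $(p-\tfrac12)\cdot\tfrac{2p}{2p-1}=p$, the conjugate factor obeys
\[
\left(\frac{|z_{\beta_j}|}{(\sum_i|z_{\beta_i}|^2)^p}\right)^{\frac{2p}{2p-1}}\le\frac{1}{(\sum_i|z_{\beta_i}|^2)^p},
\]
so the estimate is exactly self-bounding: with $\mathrm{I}$ the quantity to be bounded, $\mathrm{I}\le(m-p)^{-1}\,\mathrm{I}^{1-\frac{1}{2p}}\sum_j\left(\int d\mu\,|d\phi/dz_{\beta_j}|^{2p}\right)^{1/(2p)}$, hence $\mathrm{I}\lesssim\sup_j\int d\mu\,|d\phi/dz_{\beta_j}|^{2p}\le 1+\sup_j\int d\mu\,|d\phi/dz_{\beta_j}|^{2\bar p}$, and Lemma~\ref{lm:dphi} finishes. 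No Hessian ever appears and no decoupling of $r$ is needed. Replacing your iterated Cauchy--Schwarz by this single H\"older step with exponent $2p$ is the key idea missing from the proposal.
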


\begin{proof}
Observe that
\[ \sum_{j=1}^m \frac{d}{dz_{\beta_j}} \frac{z_{\beta_j}}{\left( \sum_{i=1}^m |z_{\beta_i}|^2 \right)^p} = (m-p) \, \frac{1}{\left( \sum_{i=1}^m |z_{\beta_i}|^2 \right)^p} \]
Therefore, recalling from (\ref{eq:phi}) that $d\mu (\bz) = e^{-\phi (\bz)} d\bz d \overline{\bz}$, we find
\[ \begin{split} \text{I} :=  \; & \int d\mu (\bz) \, \frac{1}{\left( \sum_{j=1}^m |z_{\beta_j}|^2 \right)^p}\\ =  \; & \frac{1}{m-p} \sum_{j=1}^m  \int d\mu (\bz) \frac{d}{dz_{\beta_j}} \frac{z_{\beta_j}}{\left( \sum_{i=1}^m |z_{\beta_i}|^2 \right)^p} \\
= \; & \frac{1}{m-p} \sum_{j=1}^m \int d\mu (\bz) \frac{d\phi (\bz)}{dz_{\beta_j}} \, \frac{z_{\beta_j}}{\left( \sum_{i=1}^m |z_{\beta_i}|^2 \right)^p} \end{split} \]
Hence, H\"older inequality implies that
\[ \begin{split} \text{I} \leq \; & \frac{1}{m-p} \sum_{j=1}^m \left( \int d\mu (\bz) \, \left| \frac{d\phi (\bz)}{dz_{\beta_j}} \right|^{2p} \right)^{\frac{1}{2p}} \left( \int d\mu (\bz) \, \left( \frac{|z_{\beta_j}|}{\left( \sum_{i=1}^m |z_{\beta_i}|^2 \right)^p} \right)^{\frac{2p}{2p-1}} \right)^{1-\frac{1}{2p}}  \\
\leq \; &\frac{\text{I}^{1-\frac{1}{2p}}}{m-p} \sum_{j=1}^m \left( \int d\mu (\bz) \, \left| \frac{d\phi (\bz)}{dz_{\beta_j}} \right|^{2p} \right)^{\frac{1}{2p}}\end{split} \]
It follows from Lemma \ref{lm:dphi} that
\[ \begin{split}
\text{I} \leq \left(\frac{m}{m-p} \right)^{2p} \, \sup_j  \int d\mu (\bz) \, \left| \frac{d\phi (\bz)}{dz_{\beta_j}} \right|^{2p} \lesssim 1+ \int \left| \frac{h' (s)}{h(s)} \right|^{2\overline{p}} \, h(s) ds \, . \end{split} \]
\end{proof}

\begin{lemma}\label{lm:dphi}
Let the probability density $h$ be such that (\ref{eq:ass2}) is satisfied, let the measure $d\mu (\bz)$ be as in (\ref{eq:dmu}), and let $\phi (\bz)$ be as in (\ref{eq:phi}) . For any $m \in \bN$, there exists a constant $C_m$ such that
\[ \int d\mu (\bz) \, \left| \frac{d\phi (\bz)}{dz_{\beta}} \right|^{2m} \leq C_m  \int \left| \frac{h' (s)}{h(s)} \right|^{2m} \, h(s) ds \]
for any index $\beta \in \{1, \dots, N-1\}$.
Moreover,
\[ \int d\mu (\bz) \left| \frac{d^2 \phi (\bz)}{dz_{\beta_1} dz_{\beta_2}} \right|^2 \lesssim \int \left| \frac{h'' (s)}{h(s)} \right|^2 \, h(s) ds + \int \left| \frac{h' (s)}{h(s)} \right|^4 \, h (s) ds \]
for any indices $\beta_1, \beta_2 \in \{1, \dots, N-1\}$
\end{lemma}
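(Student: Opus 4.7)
The strategy is to exploit the unitary structure of the measure by changing variables to decouple the integrand. Set $\bw = U^*\bz$, a unitary change of variables (with Jacobian one on $\mathbb{R}^{2(N-1)}$); then $d\mu$ becomes the product measure $d\nu(\bw) = \prod_{j=1}^{N-1} h(\mathrm{Re}\, w_j)\, h(\mathrm{Im}\, w_j)\, dw_j\, d\bar w_j$ and, writing $V = U$ (so that $V\bz$ appears in $\phi$ with $V V^{-1}$ being a unitary matrix), we have $\phi(\bz) = \sum_{j=1}^{N-1} \bigl[g(\mathrm{Re}(V\bz)_j) + g(\mathrm{Im}(V\bz)_j)\bigr]$ with $g=-\log h$, and in the new variables this becomes a sum of functions of independent coordinates (modulo the unitary rotation relating $V\bz$ to $\bw$, which is again unitary). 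Applying the Wirtinger identities $\partial_{z_\beta} \mathrm{Re}(V\bz)_j = V_{j\beta}/2$ and $\partial_{z_\beta} \mathrm{Im}(V\bz)_j = -iV_{j\beta}/2$ and the chain rule yields
\[
\frac{d\phi}{dz_\beta}(\bz) \;=\; \frac{1}{2}\sum_{j=1}^{N-1} V_{j\beta}\,\bigl[g'(\mathrm{Re}(V\bz)_j) - i\,g'(\mathrm{Im}(V\bz)_j)\bigr].
\]

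Under $d\nu$ the summands $X_j := \tfrac{1}{2} V_{j\beta}\bigl[g'(\mathrm{Re}\, w_j) - i\, g'(\mathrm{Im}\, w_j)\bigr]$ (after the appropriate rotation) are independent and \emph{centered}: indeed $\int g'(s)\, h(s)\, ds = -\int h'(s)\, ds = 0$, since $h\to 0$ at infinity by the integrability assumption (\ref{eq:ass2}). Hence the key step is the Marcinkiewicz--Zygmund (equivalently Burkholder) inequality for independent centered complex variables,
\[
\E \Bigl|\sum_j X_j\Bigr|^{2m} \;\leq\; C_m\, \E\Bigl[\Bigl(\sum_j |X_j|^2\Bigr)^{m}\Bigr].
\]
Writing $|X_j|^2 = |V_{j\beta}|^2\, Y_j$ with $Y_j := \tfrac{1}{4}|g'(\mathrm{Re}\, w_j) - ig'(\mathrm{Im}\, w_j)|^2$ i.i.d., and noting that $\{|V_{j\beta}|^2\}_j$ is a probability vector ($\sum_j |V_{j\beta}|^2 = 1$ by unitarity), Jensen's inequality gives $\bigl(\sum_j |V_{j\beta}|^2 Y_j\bigr)^m \leq \sum_j |V_{j\beta}|^2 Y_j^m$, so taking expectations yields
\[
\E \Bigl|\sum_j X_j\Bigr|^{2m} \;\lesssim_m\; \E\, Y_1^m \;\lesssim_m\; \int |g'(s)|^{2m}\, h(s)\, ds \;=\; \int \Bigl|\tfrac{h'(s)}{h(s)}\Bigr|^{2m}\, h(s)\, ds,
\]
which is the first claimed bound.

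For the second derivative, a parallel Wirtinger computation, using that $\partial^2_{z_{\beta_1} z_{\beta_2}} \mathrm{Re}(V\bz)_j = \partial^2_{z_{\beta_1} z_{\beta_2}} \mathrm{Im}(V\bz)_j = 0$ and that the factors $(-i/2)(-i/2)=-1/4$ produce a minus sign on the imaginary contribution, gives
\[
\frac{d^2\phi}{dz_{\beta_1} dz_{\beta_2}} \;=\; \frac{1}{4}\sum_{j=1}^{N-1} V_{j\beta_1} V_{j\beta_2}\, f_j, \qquad f_j := g''(\mathrm{Re}(V\bz)_j) - g''(\mathrm{Im}(V\bz)_j).
\]
Here no centering is available, but the bound does not need it: by Cauchy--Schwarz in the index $j$,
\[
\Bigl|\sum_j V_{j\beta_1} V_{j\beta_2} f_j\Bigr|^2 \;\leq\; \Bigl(\sum_j |V_{j\beta_1}|^2 |f_j|^2\Bigr)\Bigl(\sum_j |V_{j\beta_2}|^2\Bigr) \;=\; \sum_j |V_{j\beta_1}|^2 |f_j|^2,
\]
and taking expectation (with $\E|f_j|^2 \leq 4\int |g''|^2 h\, ds$ uniformly in $j$) together with $\sum_j |V_{j\beta_1}|^2 = 1$ gives $\E |d^2\phi/dz_{\beta_1}dz_{\beta_2}|^2 \lesssim \int |g''|^2 h\, ds$. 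Finally, $g'' = -h''/h + (h'/h)^2$ combined with $(a+b)^2 \leq 2(a^2+b^2)$ produces the stated bound in terms of $\int |h''/h|^2 h\, ds$ and $\int |h'/h|^4 h\, ds$.

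The only delicate point is the bookkeeping of the Wirtinger derivatives (especially tracking the factors of $\tfrac{1}{2}$, $-i$, and the cancellation of mixed second derivatives), plus ensuring that after the unitary change of variables the independence structure is truly product in the coordinates over which we integrate. The probabilistic input (Marcinkiewicz--Zygmund plus $\E g' = 0$) is standard, so the main work is the algebraic verification of the derivative identities.
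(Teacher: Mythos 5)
Your proof is correct. For the second bound (the one on the mixed second derivative) your route is essentially identical to the paper's: compute the same Wirtinger expression $\frac{d^2\phi}{dz_{\beta_1}dz_{\beta_2}} = \tfrac{1}{4}\sum_{\ell}U_{\ell\beta_1}U_{\ell\beta_2}\bigl(g''(\mathrm{Re})-g''(\mathrm{Im})\bigr)$, apply Cauchy--Schwarz in $\ell$ using unitarity, then expand $g''=h''/h-(h'/h)^2$. For the first bound, however, you take a genuinely different route. The paper expands $\bigl|\sum_\ell U_{\ell\beta}(\cdots)\bigr|^{2m}$ into a $2m$-fold sum, observes that by independence and the centering $\int g'\,h = 0$ all terms in which some index appears exactly once vanish, groups the surviving terms by repetition pattern $\alpha_1+\dots+\alpha_r=2m$ with each $\alpha_i\ge 2$, and uses $\sum_\ell |U_{\ell\beta}|^{\alpha}\le 1$ for $\alpha\ge 2$; this is an explicit combinatorial moment count. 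You replace this with a black-box invocation of the Marcinkiewicz--Zygmund (or Burkholder) inequality for sums of independent centered complex random variables, $\E\bigl|\sum_j X_j\bigr|^{2m}\le C_m\,\E\bigl(\sum_j|X_j|^2\bigr)^m$, followed by Jensen's inequality applied to the probability vector $\{|U_{j\beta}|^2\}$. Both arguments draw on exactly the same structural facts (centeredness, independence after the unitary change of variables, and $\sum_\ell|U_{\ell\beta}|^2=1$); the paper's expansion is elementary and self-contained, while yours is shorter but outsources the combinatorics to a standard probabilistic inequality. One remark: the paper's formula (\ref{eq:phi}) writes $U\bz$ where consistency with (\ref{eq:dmu}) and $d\mu=e^{-\phi}\,d\bz\,d\bar\bz$ requires $U^*\bz$ (a typo, immaterial since both are unitary); you flagged this ambiguity and your ``after the appropriate rotation'' hedge is exactly where one must check that the $X_j$ depend each on a single coordinate of the product measure, which is indeed the case once the correct unitary is used.
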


\begin{proof}
{F}rom (\ref{eq:phi}), we have (recall that $g=-\log h$),  \[ \phi (\bz) = \sum_{\ell=1}^{N-1} g (\text{Re } (U\bz)_\ell) + g (\text{Im } (U\bz)_\ell) \] Hence
\[ \frac{d\phi (\bz)}{dz_\beta} = \frac{1}{2} \sum_{\ell=1}^{N-1} U_{\ell, \beta} \left( g' (\text{Re } (U\bz)_{\ell} ) -i  g' (\text{Im } (U\bz)_{\ell} ) \right) \]
Therefore
\[ \begin{split}
\int d\mu (\bz) &\left| \frac{d\phi (\bz)}{dz_{\beta}} \right|^{2m} \\ = \; & \frac{1}{2^m} \sum_{\ell_1, \dots , \ell_{2m} =1}^{N-1} U_{\ell_1,\beta} \dots U_{\ell_m ,
\beta} \overline{U_{\ell_{m+1},\beta}} \dots \overline{U_{\ell_{2m},\beta}} \\
&\times \int d\mu (\bz) \, \prod_{j=1}^m \left( g' (\text{Re } (U\bz)_{\ell_j} ) -i  g' (\text{Im } (U\bz)_{\ell_j } ) \right) \prod_{j=m+1}^{2m} \left( g' (\text{Re } (U\bz)_{\ell_j} ) + i  g' (\text{Im } (U\bz)_{\ell_j} ) \right)
\end{split} \]
The integral vanishes if there exists an index $\ell_j$ such that $\ell_j \not = \ell_i$ for all $i \not = j$. Hence, changing coordinates to $b_\ell = (U\bz)_\ell$, we find
\[ \begin{split}
\int d\mu (\bz) \left| \frac{d\phi (\bz)}{dz_{\beta}} \right|^{2m} \lesssim \; &  \sum_{r=1}^m \sum_{\alpha_1, \dots , \alpha_r \geq 2} {\bf 1} (\alpha_1 + \dots + \alpha_r = 2m )   |U_{\ell_1, \beta}|^{\alpha_1} \dots |U_{\ell_r, \beta}|^{\alpha_r} \\ & \hspace{.5cm} \times\int \prod_{j=1}^{N-1} h (\text{Re } b_j) h (\text{Im } b_j) \, \prod_{n=1}^r \left( |g' (\text{Re } b_{\ell_n})| + |g' (\text{Im } b_{\ell_n})| \right)^{\alpha_n} \\ \lesssim \; &  \sum_{r=1}^m \sum_{\alpha_1, \dots , \alpha_r \geq 2}  {\bf 1} (\alpha_1 + \dots + \alpha_r = 2m ) \,  |U_{\ell_1, \beta}|^{\alpha_1} \dots |U_{\ell_r,\beta}|^{\alpha_r} \\ & \hspace{.5cm} \times \sum_{n=1}^r  \int h(\text{Re } b_{\ell_n}) \, h (\text{Im } b_{\ell_n}) \, \left( |g' (\text{Re } b_{\ell_n})|^{2m} + |g' (\text{Im } b_{\ell_n})|^{2m} \right) \\ \lesssim \; & \int \left| \frac{h' (s)}{h(s)} \right|^{2m} \, h (s) ds
\end{split} \]
where we used the fact that, for any $\alpha \geq 2$, $\sum_{\ell} |U_{\ell,\beta}|^\alpha \leq 1$. On the other hand, we have,
\[  \frac{d^2 \phi (\bz)}{dz_{\beta_1} dz_{\beta_2}} = \frac{1}{4} \sum_{\ell=1}^{N-1} U_{\ell,\beta_1} U_{\ell,\beta_2} \left( g'' (\text{Re } (U\bz)_\ell) - g'' (\text{Im } (U \bz)_\ell) \right) \, . \]
Applying Cauchy-Schwarz inequality, we find
\[ \begin{split}
\int d\mu (\bz) \left|  \frac{d^2 \phi (\bz)}{dz_{\beta_1} dz_{\beta_2}} \right|^2 \lesssim \; &  \sum_{\ell_1, \ell_2=1}^{N-1} |U_{\ell_1,\beta_1}|^2  |U_{\ell_2,\beta_1}|^2  \int d\mu (\bz) \left( |g'' (\text{Re } (U\bz)_{\ell_1})|^2 + |g'' (\text{Im } (U\bz)_{\ell_1})|^2  \right) \\ &+ \sum_{\ell_1, \ell_2=1}^{N-1} |U_{\ell_1,\beta_2}|^2  |U_{\ell_2,\beta_2}|^2  \int d\mu (\bz) \left( |g'' (\text{Re } (U\bz)_{\ell_2})|^2 + |g'' (\text{Im } (U\bz)_{\ell_2})|^2  \right)
\\ \lesssim \; &\int |g'' (s)|^2 h(s) ds \end{split} \]
The lemma follows from because \[ g'' (s) = \frac{h'' (s)}{h(s)} - \frac{(h' (s))^2}{h(s)^2} \,. \]
\end{proof}

\begin{lemma}\label{lm:pow}
Let the measure $d\mu (\bz)$ be as in (\ref{eq:dmu}). For any $m \in \bN$, there exists a constant $C_m > 0$ such that
\[ \int d\mu (\bz) \, |z_\alpha|^m \leq C_m \]
for every index $\alpha \in \{ 1,\dots , N-1 \}$.
\end{lemma}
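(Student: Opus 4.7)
The plan is to change variables using unitarity of $U$ to turn $d\mu(\bz)$ back into a product measure, and then to invoke a standard Marcinkiewicz--Zygmund-type moment bound for normalized linear combinations of i.i.d.\ random variables. Specifically, since the matrix $U$ appearing in the definition \eqref{eq:dmu} is unitary, the substitution $\bb = U^*\bz$ has real Jacobian of modulus one and yields
\[
\int d\mu(\bz)\, |z_\alpha|^m = \int \prod_{j=1}^{N-1} h(\operatorname{Re} b_j)\, h(\operatorname{Im} b_j)\, \Big| \sum_{\ell=1}^{N-1} U_{\alpha,\ell}\, b_\ell \Big|^m \, d\bb\, d\bar\bb\,.
\]
Under the resulting product measure the variables $b_1,\dots,b_{N-1}$ are i.i.d.\ complex random variables with $\E\, b_\ell = 0$, $\E\,|b_\ell|^2 = 1$, and, thanks to the hypothesis $\E\, e^{\nu x_{ij}^2} < \infty$, with $\E\, |b_\ell|^{2k} \leq C_k$ for every $k \in \bN$.

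Next, for even $m = 2k$ I would expand $|z_\alpha|^{2k} = (z_\alpha \bar z_\alpha)^k$ as a $2k$-fold sum over indices and observe that, because the $b_\ell$ are mean zero and independent, only those index configurations in which every label appears at least twice survive the expectation. The number of such configurations is bounded by a combinatorial constant depending only on $k$, and each surviving contribution is dominated by a product of moments $\E\, |b_\ell|^{q}$ with $q \le 2k$ (all controlled by $C_k$) times a coefficient contribution whose total magnitude is bounded, using $|U_{\alpha,\ell}| \leq 1$, by $\big(\sum_{\ell} |U_{\alpha,\ell}|^2\big)^k = 1$ thanks to unitarity of $U$. This gives $\E\, |z_\alpha|^{2k} \leq C_k$, uniformly in $\alpha$ and in $N$. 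For odd $m$, I would reduce to the even case by Cauchy--Schwarz, via $\E\, |z_\alpha|^m \leq \big(\E\, |z_\alpha|^{2m}\big)^{1/2}$.

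The main obstacle is really not much of an obstacle: the statement is a dimension-free moment bound for $\ell^2$-normalized linear combinations of i.i.d.\ random variables with uniformly bounded moments, and is immediate once the change of variables has been carried out. The essential point is that unitarity of $U$ provides exactly the normalization $\sum_\ell |U_{\alpha,\ell}|^2 = 1$, so that no powers of $N$ are introduced and the constant $C_m$ can be taken independent of $\alpha$ and of $N$.
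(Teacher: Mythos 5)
Your proposal is correct, but it takes a genuinely different route from the paper. Both proofs begin with the same change of variables $\bb = U^*\bz$ to reduce the problem to bounding $\E\,|\bb\cdot\bu_\alpha|^m$, where $\bb$ has i.i.d.\ centered entries with unit variance, uniformly bounded moments, and $\bu_\alpha$ is a unit vector. From there the paper does not do a combinatorial expansion: it invokes Proposition~4.5 of \cite{ESY3}, which gives the subexponential tail estimate $\P(|\bb\cdot\bu_\alpha|^2 \geq K) \lesssim e^{-cK}$ (a concentration inequality that uses the sub-Gaussian hypothesis $\E\,e^{\nu x_{ij}^2}<\infty$ in an essential way), and then obtains all moments by integrating the tail. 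Your argument is instead a direct Marcinkiewicz--Zygmund-type moment computation: expand $|z_\alpha|^{2k}$, kill the terms in which some index appears exactly once using independence and centering, and control the surviving terms with the moment bounds $\E\,|b_\ell|^{2k}\leq C_k$ together with $\sum_\ell |U_{\alpha,\ell}|^2 = 1$ and $|U_{\alpha,\ell}|\leq 1$ (so $\sum_\ell|U_{\alpha,\ell}|^q\leq 1$ for $q\geq 2$). Your approach is more elementary and self-contained, and in fact only needs finite moments of all orders of the entries rather than sub-Gaussian tails, at the price of the combinatorial bookkeeping. The paper's approach is shorter given the cited tail bound and gives the stronger quantitative information (an exponential tail rather than just polynomial moments), which is not needed for this particular lemma. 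Both are valid; the exchange is self-containedness and weaker hypotheses versus brevity and strength of conclusion.
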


\begin{proof}
Note that, with $\bb = U\bz$, we have
\[ \int d\mu (\bz) |z_\alpha|^m = \int \prod_{\ell=1}^{N-1} h (\text{Re } b_j) \, h (\text{Im } b_j) \, |\bb \cdot \bu_\alpha|^m = \E \, |\bb \cdot \bu_\alpha|^m. \]
(Recall the notation $\xi_\alpha = |\bb \cdot \bu_\alpha|^2$, introduced after (\ref{eq:xii})). {F}rom Proposition 4.5 in \cite{ESY3}, we conclude that
\[ \P \left( |\bb \cdot \bu_\alpha|^2 \geq K \right) \lesssim e^{-cK} \]
Therefore,
\[ \E \, |\bb \cdot \bu_\alpha|^m = m \int_0^\infty dx \, x^{m-1} \, \P \left( |\bb \cdot \bu_\alpha| \geq x \right) \lesssim \int_0^\infty dx \, x^{m-1} e^{-cx} < \infty \, . \]
\end{proof}

\thebibliography{hhhh}

\bibitem{BP} Ben Arous, G., P\'ech\'e, S.: Universality of local
eigenvalue statistics for some sample covariance matrices.
{\it Comm. Pure Appl. Math.} {\bf LVIII.} (2005), 1--42.

\bibitem{D} Dyson, F.J.: A Brownian-motion model for the eigenvalues
of a random matrix. {\it J. Math. Phys.} {\bf 3}, 1191-1198 (1962).

\bibitem{ESY1} Erd{\H o}s, L., Schlein, B., Yau, H.-T.:
Semicircle law on short scales and delocalization
of eigenvectors for Wigner random matrices.
{\it Ann. Probab.} {\bf 37}, No. 3, 815--852 (2009)

\bibitem{ESY2} Erd{\H o}s, L., Schlein, B., Yau, H.-T.
Local semicircle law and complete delocalization for
Wigner random matrices.{\it  Comm. Math. Phys.} {\bf 287}, No. 2, 641Ð655 (2009).

\bibitem{ESY3} Erd{\H o}s, L., Schlein, B., Yau, H.-T.:
Wegner estimate and level repulsion for Wigner random matrices.
{\it Int. Math. Res. Notices.} {\bf 2010}, No. 3, 436-479 (2010).

\bibitem{ESY4} Erd{\H o}s, L., Schlein, B., Yau, H.-T.:
Universality of random matrices and local relaxation ßow. Preprint arxiv.org/abs/0907.5605.

\bibitem{EPRSY}
Erd\H{o}s, L., P\'ech\'e, S., Ram\'irez, J.,  Schlein, B. and Yau, H.-T.: Bulk
universality
for Wigner matrices. {\it Commun. Pure  Applied Math.}
{\bf 63},  895-925, (2010).

\bibitem{ERSTVY}
Erd\H{o}s, L., Ram\'irez, J.,  Schlein, B., Tao, T., Vu, V. and Yau, H.-T.:
Bulk universality for Wigner Hermitian matrices with subexponential decay.
{\it Math. Res. Lett.} {\bf 17}, No. 4, 667-674 (2010).

\bibitem{ESYY} Erd{\H o}s, L., Schlein, B., Yau, H.-T., Yin, J.:
The local relaxation flow approach to universality of the local
statistics for random matrices. Preprint arXiv:0911.3687.

\bibitem{EYY1} Erd{\H o}s, L.,  Yau, H.-T., Yin, J.:
Bulk universality for generalized Wigner matrices.
Preprint arXiv:1001.3453.

\bibitem{EYY2} Erd\H{o}s, L., Yau, H.-T., Yin, J.:
Universality for generalized Wigner matrices with Bernoulli distribution.
Preprint arXiv:1003.3813.

\bibitem{EYY3} Erd\H{o}s, L., Yau, H.-T., Yin, J.:
Rigidity of Eigenvalues of Generalized Wigner Matrices.
Preprint arxiv:1007.4652.

\bibitem{J} Johansson, K.: Universality of the local spacing
distribution in certain ensembles of Hermitian Wigner matrices.
{\it Comm. Math. Phys.} {\bf 215} (2001), no.3. 683--705.

\bibitem{MS} Maltsev, A., Schlein, B.: A Wegner estimate for Wigner matrices.
Preprint arXiv:1103.1473.

\bibitem{S} Soshnikov, A.: Universality at the edge of the spectrum in
Wigner random matrices. {\it  Comm. Math. Phys.} {\bf 207} (1999), no.3.
697-733.

\bibitem{TV} Tao, T. and Vu, V.: Random matrices: Universality of the
local eigenvalue statistics. Preprint arXiv:0906.0510.


\bibitem{TV3} Tao, T. and Vu, V.: The Wigner-Dyson-Mehta bulk universality conjecture for Wigner matrices. Preprint arXiv:1101.5707.

\bibitem{W} Wigner, E.: Characteristic vectors of bordered matrices with inÞnite dimensions. {\it Ann. of Math.} {\bf 62} (1955),
548-564.
\end{document}